\documentclass[prx,twocolumn,superscriptaddress,showpacs,notitlepage,floatfix,bibnotes,nofootinbib]{revtex4-2}

\usepackage{lmodern}
\usepackage{microtype}

\usepackage{amsmath,tikz,amssymb,amsthm,mathtools,bm}
\usepackage{mathrsfs}   
\usetikzlibrary{arrows.meta,decorations.pathreplacing}

\usepackage{bbm} 
\usepackage{graphicx}
\usepackage{booktabs}
\usepackage{array}
\usepackage{thmtools}
\usepackage{thm-restate}

\usepackage[dvipsnames]{xcolor} %
\usepackage{hyperref}
\definecolor{green}{RGB}{11, 218, 81}

\hypersetup{colorlinks=true,allcolors=green,pdfborder={0 0 0}}
\theoremstyle{plain}
\newtheorem{theorem}{Theorem}
\newtheorem{lemma}{Lemma}
\newtheorem{proposition}{Proposition}
\newtheorem{corollary}{Corollary}
\newtheorem{assumption}{Assumption}

\theoremstyle{definition}
\newtheorem{definition}{Definition}

\theoremstyle{remark}

\newcommand{\C}{\mathbb{C}}

\newcommand{\eps}{\varepsilon}
\newcommand{\dd}{\mathrm{d}}

\newcommand{\OPT}{\mathsf{OPT}}
\newcommand{\poly}{\mathrm{poly}}
\newcommand{\tr}{\operatorname{tr}}
\newcommand{\Span}{\operatorname{span}}

\DeclareMathOperator{\Tr}{Tr}

\newcommand{\ket}[1]{|#1\rangle}
\newcommand{\bra}[1]{\langle #1|}

\newcommand{\norm}[1]{\left\|#1\right\|}

\renewcommand\vec{\boldsymbol}

\def\p{{\vec{p}}}

\begin{document}

\title{Learning the closest Slater determinant}
\author{Nisarga Paul}
\email{npaul@caltech.edu}
\affiliation{Department of Physics and Institute for Quantum Information and Matter,
California Institute of Technology, Pasadena, California 91125, USA}
\author{Haimeng Zhao}
\email{haimeng@caltech.edu}
\affiliation{Department of Physics and Institute for Quantum Information and Matter,
California Institute of Technology, Pasadena, California 91125, USA}
\author{David D. Dai}
\email{dddai@mit.edu}
\affiliation{Department of Physics, Massachusetts Institute of Technology, Cambridge, Massachusetts 02139, USA}
\affiliation{The NSF AI Institute for Artificial Intelligence and Fundamental Interactions}
  
\begin{abstract}
\noindent Learning compact, interpretable descriptions of quantum many-body states is an important task in quantum science. We study the task of learning the Slater determinant with maximum fidelity to an arbitrary fermionic many-body state, with motivation from both Hartree-Fock methods and agnostic tomography. Given an $n$-fermion wavefunction built from $m$ fermionic modes, we provide classical and quantum algorithms returning a Slater determinant with fidelity within $\varepsilon$ of maximal in time $m^{\text{poly}(n,1/\varepsilon)}$. We prove matching hardness lower bounds, assuming standard complexity conjectures, along some parameter axes. Given access to quantum copies, we prove this can be accomplished with $\text{poly}(m,n,1/\varepsilon)$ copies of $\rho$. We also show that above a fidelity of $2/3$ any stationary point is the unique global maximum while below $2/3$ the optimization landscape can have spurious stationary points, and hence $2/3$ marks a transition point in the optimization landscape for this problem. We apply the algorithm to the Fermi–Hubbard model, extracting the closest Slater determinant from neural quantum state solutions. Together, our results provide algorithmic tools with provable guarantees in understanding fermionic many-body systems with classical or quantum simulation.
\end{abstract}

\maketitle

\section{Introduction}

Extracting compact, interpretable descriptions of complex many-body wavefunctions is one of the central challenges of quantum science. For fermionic systems, the simplest such description is a Slater determinant, the antisymmetrized product of single-particle states~\cite{slater}. Despite their simplicity, Slater determinants can capture a remarkable range of quantum many-body phenomena, from molecules and nuclei to ferromagnets and topological phases~\cite{bruus2004many,girvin2019modern}. Slater determinants form the variational foundation of Hartree-Fock theory and density functional theory---two algorithmic workhorses in condensed matter physics and quantum chemistry---and are the reference around which beyond-mean-field methods such as coupled-cluster~\cite{RevModPhys.35.496,echenique2007mathematical,kohn1999nobel,shavitt2009many} expand. 

A natural question is then the following: \textit{given a fermionic many-body wavefunction, how does one find the closest Slater determinant?} We will quantify closeness by having a high fidelity. Despite its apparent simplicity, this question is surprisingly subtle and lacks a complete answer, since the manifold of Slater determinants is nonlinear and it is an essentially non-convex optimization problem (Fig.~\ref{fig:1}). Prior attempts have involved, for instance, heuristic numerical gradient ascent~\cite{zhang2014optimal,zhang2016optimal}, whose performance is not guaranteed; here, we seek algorithms that provably solve the problem with rigorous performance guarantees.

\begin{figure}
    \centering
    \includegraphics[width=\linewidth]{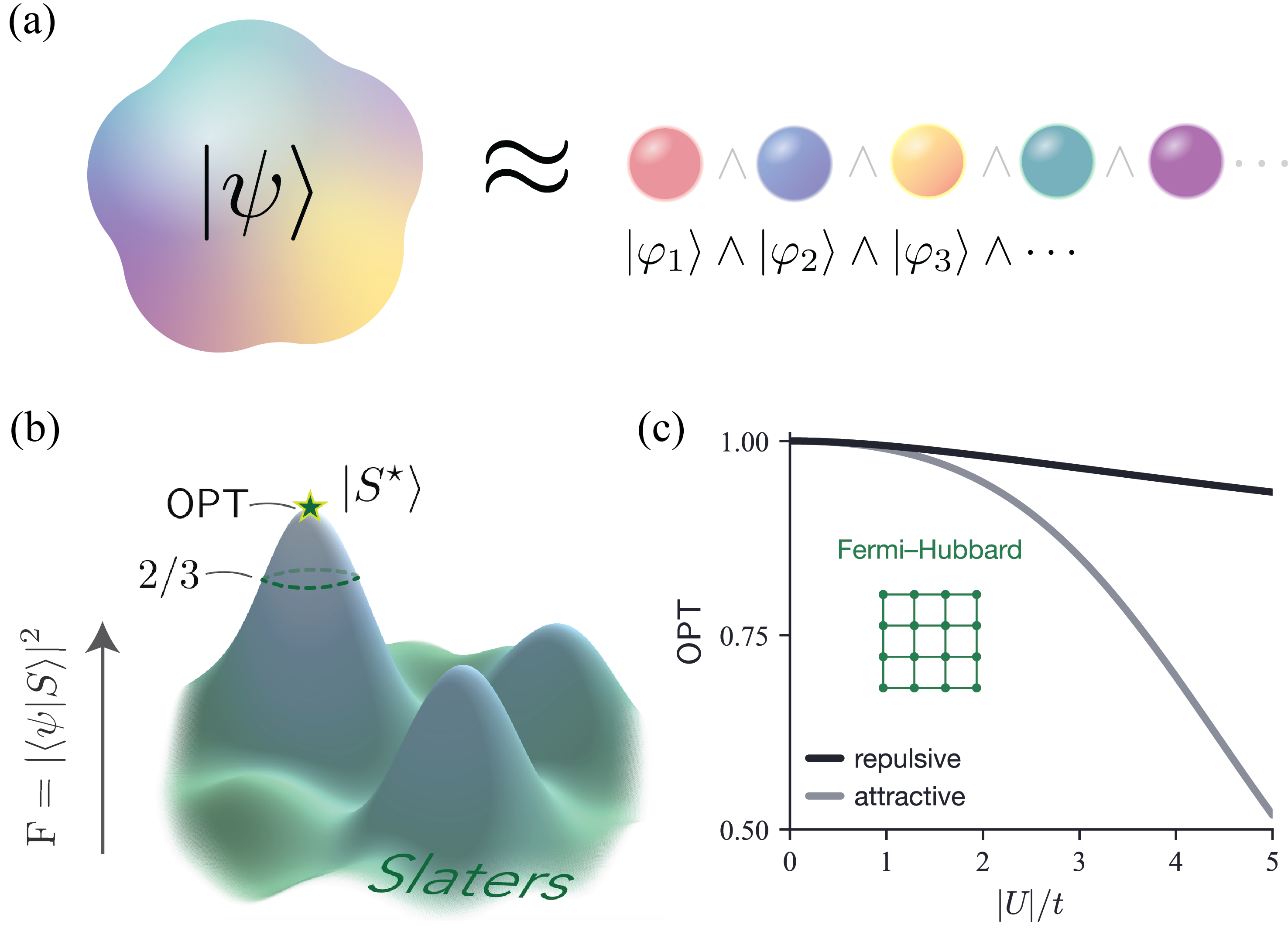}
\caption{%
\textbf{Learning the closest Slater determinant.}
(a) We provide a deterministic protocol to learn the closest Slater determinant to an arbitrary fermionic state $\ket{\psi}$. This allows the extraction of useful information about $\ket{\psi}$, such as its natural single-particle orbital basis.
(b)~Maximizing fidelity $\mathrm{F}$ over all Slater determinants is generally a non-convex optimization problem. The optimization landscape exhibits a transition at $\mathrm{F}=2/3$ above which there are no spurious stationary points. (c) The closest-Slater fidelity $\OPT$ of the $4\times4$ Fermi–Hubbard ground state versus interaction $|U|/t$, for repulsive and attractive coupling.}
\label{fig:1}
\end{figure}

\par 

A precise formulation of this problem is as follows.  Let $\mathcal{H}_n = \wedge^n \mathcal{H}_1$ be the Hilbert space of $n$-particle fermionic wavefunctions built from single-particle Hilbert space $\mathcal{H}_1$, with dimension $m = \dim \mathcal{H}_1$ and $ n \leq m$, and let
\begin{equation}
	\mathcal{S} = \left\{ \bigwedge_{i=1}^n \ket{\varphi_i}: \ket{\varphi_i} \in \mathcal{H}_1, \langle \varphi_i|\varphi_j\rangle = \delta_{ij}\right\}
\end{equation}
be the set of Slater determinants. Given a state in $\mathcal{H}_n$ with density matrix $\rho$ and an error tolerance $\varepsilon >0$, we seek $\ket{S}\in \mathcal{S}$ such that 
    \begin{equation}\label{eq:OPT}
    \langle S|\rho |S\rangle \geq \OPT - \varepsilon,\quad \text{where } \OPT = \max_{\ket{S}\in \mathcal{S}}\langle S|\rho |S\rangle.
\end{equation}
Here, $\rho$ can be given to us either as a classical description, for applications in analyzing tensor networks or neural quantum states, or as copies of a quantum state produced by a quantum device or quantum simulator. We will be interested in the complexity of a classical or quantum algorithm solving this problem. \par 

This task is fundamental in physical sciences, where we advance our understanding of the physical world by learning simple ans\"atze to describe complicated quantum states, with the approximation accuracy in mind. This is also of practical interest in both condensed matter physics and quantum technology. On the condensed matter side, this problem is related to (but distinct from) the Hartree-Fock (HF) approximation, which minimizes the energy expectation value of a Hamiltonian $H$ over $\mathcal{S}$. Here we are maximizing fidelity with a given state $\rho$, with no reference to a Hamiltonian. In general, the two problems need not coincide unless $\rho$ is the ground state of $H$ and $\rho \in \mathcal{S}$. For a numerical practitioner with some compact description of a fermionic state, such as a matrix product state or neural quantum state~\cite{PhysRevLett.69.2863,schollwock2011density,carleo2017solving}, finding the best Slater determinant approximation allows the extraction of meaningful information such as a natural single-particle occupation basis and notion of particle-hole excitations~\cite{PhysRev.97.1474,roos1980complete,Thouless1960,shavitt2009many}. On the quantum technology side, algorithms solving this problem can be used to benchmark and understand the outcomes of fermionic quantum simulators and computers, which fall into the broader task of \textit{agnostic tomography}~\cite{grewal2026agnostic,chen2025stabilizer,zhao2024learning,wadhwa2025agnostic}, where the goal is to find a state in some class $\mathcal{C}$ maximizing overlap with $\rho$, without the assurance that $\rho\in \mathcal{C}$. In particular, our problem is the natural fermionic analog of closest product state learning~\cite{chen2025stabilizer,bakshi2025learning,wei2003geometric,vedral1997quantifying},
but the antisymmetry of fermionic wavefunctions renders it structurally quite different. Finally, for an experimentalist with access to copies of $\rho$, a practical solution to this problem yields a useful proxy for an unknown, potentially complex many-body state.

 \par

\textbf{Results. }Our main results are twofold:
\begin{itemize}
\item \textbf{Learning algorithms. }
We give a classical and a quantum algorithm to find the closest Slater determinant both with runtime $m^{\poly(n, 1/\varepsilon)}$, polynomial in $m$ for constant $n$ and $\varepsilon$. The quantum algorithm has a $\poly(m,n, 1/\varepsilon)$ sample complexity.  The algorithms truncate $\rho$ to a subspace of ``natural orbitals" and perform a discretized search over the corresponding subspace of Slater determinants. Under a standard complexity conjecture, no classical algorithm can substantially improve the runtime dependence in $n$. Moreover, no classical or quantum algorithm that runs in time polynomial in $1/\varepsilon$ can solve the problem unless $\mathsf{NP}\subseteq \mathsf{BPP}/ \mathsf{BQP}$. 
\item \textbf{Optimization landscape.} 
We reveal a sharp, qualitative transition in the optimization landscape of finding the closest Slater determinant.
In particular, we show that if a Slater determinant $|S\rangle$ is a stationary point of the fidelity and has fidelity $\mathrm{F}=\langle S | \rho | S \rangle > 2/3$, then it is the global unique closest Slater determinant. 
In other words, above the threshold of $2/3$, the fidelity optimization landscape is benign in the sense of lacking spurious stationary points, a strong guarantee in a non-convex optimization problem. Below $\mathrm{F}=2/3$, this guarantee can always be thwarted, and hence $2/3$ marks a sharp transition in the nature of the optimization landscape (Fig.~\ref{fig:1}b).
\end{itemize}

We now describe these results in more detail. We separate the case of classical access to $\rho$, for which we assume an efficient way to compute amplitudes, and quantum access to $\rho$, in which case we assume we can make measurements on quantum copies of $\rho$. In either case, the algorithms proceed by estimating the single-particle reduced density matrix (1-RDM) and isolating a $\mathrm{poly}(n,1/\varepsilon)$-dimensional subspace of $\mathcal{H}_1$ comprising high-occupancy orbitals. We then discretize the space of Slater determinants comprising these orbitals and find the one with maximum fidelity. This procedure takes $m^{\mathrm{poly}(n,1/\varepsilon)}$ time and, using quantum threshold search~\cite{buadescu2021improved}, consumes a polynomial number of samples in the quantum case. 
We prove that this time complexity is essentially optimal in $1/\varepsilon$ for both the quantum and classical algorithms, and essentially optimal in $n$ for the classical algorithm, under widely-believed complexity-theoretic assumptions. To establish the latter, we prove that an $m^{o(n)}$-time classical algorithm could also solve the $n$-\textsc{Multicolored Clique} problem in an unacceptably small time~\cite{CyganEtAl2015}. For the former, we prove that a $\poly(m,1/\varepsilon)$-time algorithm in either the classical or quantum setting would efficiently solve the $\mathsf{NP}$-hard quantum separability problem~\cite{gharibian2010strong}. We summarize the state of affairs in Table~\ref{tab:results}. 

\begin{table}[t]
\begin{ruledtabular}
\begin{tabular}{lcc}
\multicolumn{3}{l}{\textit{\textbf{Upper bounds}} (algorithms)}\\
classical runtime & $m^{\poly(n,1/\varepsilon)}$ & Thm.~\ref{thm:upper}\\
quantum runtime & $m^{\poly(n,1/\varepsilon)}$ & Thm.~\ref{thm:quantum}\\
quantum sample complexity & $\poly(m,n,1/\varepsilon)$ & Thm.~\ref{thm:quantum}\\
\colrule
\multicolumn{3}{l}{\textit{\textbf{Lower bounds}} (hardness)}\\
runtime in $n$ (classical) & $\ge m^{\Omega(n)}$ & Thm.~\ref{thm:lower}\\
runtime in $1/\varepsilon$ & $\ge (1/\varepsilon)^{\omega(1)}$ & Thm.~\ref{thm:comp-hard}\\
\end{tabular}
\end{ruledtabular}
\caption{\textbf{Summary of complexity results.} Upper bounds are achieved by our algorithms; lower bounds hold under standard complexity-theoretic conjectures. Here $n$ is the particle number, $m$ the number of single-particle modes, and $\varepsilon$ the additive error in fidelity.}
\label{tab:results}
\end{table}

Our second main result shows that the geometry of the optimization landscape exhibits a sharp, qualitative transition at fidelity $2/3$. 
In particular, in Section~\ref{sec:highfid} we show that spurious stationary points (e.g. local optima) of the fidelity function are forbidden above $\mathrm{F}=2/3$. Moreover, this threshold is sharp in the sense that there exist pure states $\ket\psi$ whose fidelity has stationary points with fidelity arbitrarily close to $2/3$ from below. The definition of stationarity is provided in Section~\ref{sec:highfid}, which closely resembles the Brillouin condition from Hartree-Fock theory~\cite{Brillouin1933}. Any Slater determinant $\ket{S}$ satisfying this condition and $\mathrm{F}>2/3$ is hence guaranteed to be the global optimum, though there may be other stationary points with $\mathrm{F}<2/3$. This is a useful guarantee for an optimization problem otherwise only amenable to a brute-force solution: a heuristic algorithm such as gradient ascent or Hartree-Fock-style iteration, once it reaches a fidelity above $2/3$, should converge to the global optimum.

\par
We complement these results with numerical experiments, described in
Sec.~\ref{sec:numerics}. We apply our classical algorithm to ground states of the
Fermi--Hubbard model, extracting the closest Slater determinant directly from exact diagonalization (Fig.~\ref{fig:1}c) or a
neural quantum state representation (Fig.~\ref{fig:numerics}a). Using $\OPT$ as
the ground truth, we find that simple gradient ascent can be unreliable as system grows, and we illustrate the distribution of stationary points within the optimization landscape (Fig.~\ref{fig:numerics}b,c).

\textbf{Related Work. } The problem of finding the closest Slater determinant to a given state has been studied empirically using gradient ascent algorithms~\cite{PhysRevA.102.052803,zhang2014optimal,zhang2016optimal}. These, however, hold no guarantees of convergence to the global optimum, which is our focus. Refs. \cite{aaronson_et_al:LIPIcs.TQC.2023.12,o2022fermionic} give $\mathrm{poly}(n,m,1/\varepsilon)$-efficient solutions to this problem when $\rho$ is known to be a Slater determinant, \textit{i.e.} $\mathsf{OPT} =1$. In this setting, the state is entirely specified by its 1-RDM, which can be estimated efficiently~\cite{huang2020predicting,zhao2021fermionic,bittel2025optimal,christensen2026learning}. However, this approach does not work here because the 1-RDM by itself does not determine the closest Slater determinant in general\footnote{Consider $\frac{c_1^\dagger c_2^\dagger + c_3^\dagger c_4^\dagger}{\sqrt{2}}\ket{0}$ and $\frac{c_1^\dagger c_3^\dagger + c_2^\dagger c_4^\dagger}{\sqrt{2}}\ket{0}$, which have identical 1-RDMs $\propto I$ but different closest Slaters, namely the pairs $\{c_1^\dagger c_2^\dagger\ket{0},c_3^\dagger c_4^\dagger \ket{0}\}$ and $\{c_1^\dagger c_3^\dagger\ket{0},c_2^\dagger c_4^\dagger\ket{0}\}$, respectively.}. Indeed, hardness results on learning even \textit{free} fermion states from measurements~\cite{bittel2025pac,bittel2025optimal} serve as precautions for this agnostic tomography problem. We note that several works have proposed using quantities like $\OPT$ to capture the amount of correlations in a fermionic state~\cite{gottlieb2005new,gottlieb2007properties,benatti2012entanglement,turner2017optimal}. Indeed, there is a significant body of literature on correlation measures of fermionic states using non-Gaussianity, non-stabilizerness, occupation number entropies, and more~\cite{hebenstreit2019all,dias2024classical,lyu2024fermionic,coffman2025measuring,ares2026asymmetry,sierant2026fermionic,leone2022stabilizer,gigena2015entanglement,vanhala2024complexity,tarabunga2026computablemeasuresfermionicnongaussianity}.\par 

The closest analog of our problem in the bosonic setting is the problem of learning the closest product state to a given $n$-qubit state, recently studied by Bakshi et al. in the setting of quantum copy access~\cite{bakshi2025learning}. Bakshi et al. give an algorithm with sample and time complexity $n^{\mathrm{poly}(1/\varepsilon)}$ to achieve this task within additive error $\varepsilon$, and also prove computational hardness when the desired accuracy is inverse-polynomial in $n$. A product state is specified by independent local factors
$|\pi_1\rangle\otimes\cdots\otimes|\pi_n\rangle$, and the algorithm of
Bakshi et al. exploits this structure with an algorithm proceeding qubit by qubit.  This algorithm does not transfer directly here, since Slaters do not factorize over particles.
\par

\par 
\textbf{Outline. } The remainder of this paper is organized as follows. In
Section~\ref{sec:algorithm}, we present the classical algorithm for learning the closest Slater determinant. In
Section~\ref{sec:quantum}, we give a quantum protocol for the problem. In Section~\ref{sec:hardness}, we present computational hardness lower bounds in both the classical and quantum settings. We study the properties of the optimization landscape in Section~\ref{sec:highfid}, report numerical results in
Section~\ref{sec:numerics}, and conclude in Section~\ref{sec:discussion}.
\par

\section{Algorithms for learning the closest Slater}
\label{sec:algorithm}

\begin{figure*}
    \centering
    \includegraphics[width=0.95\linewidth]{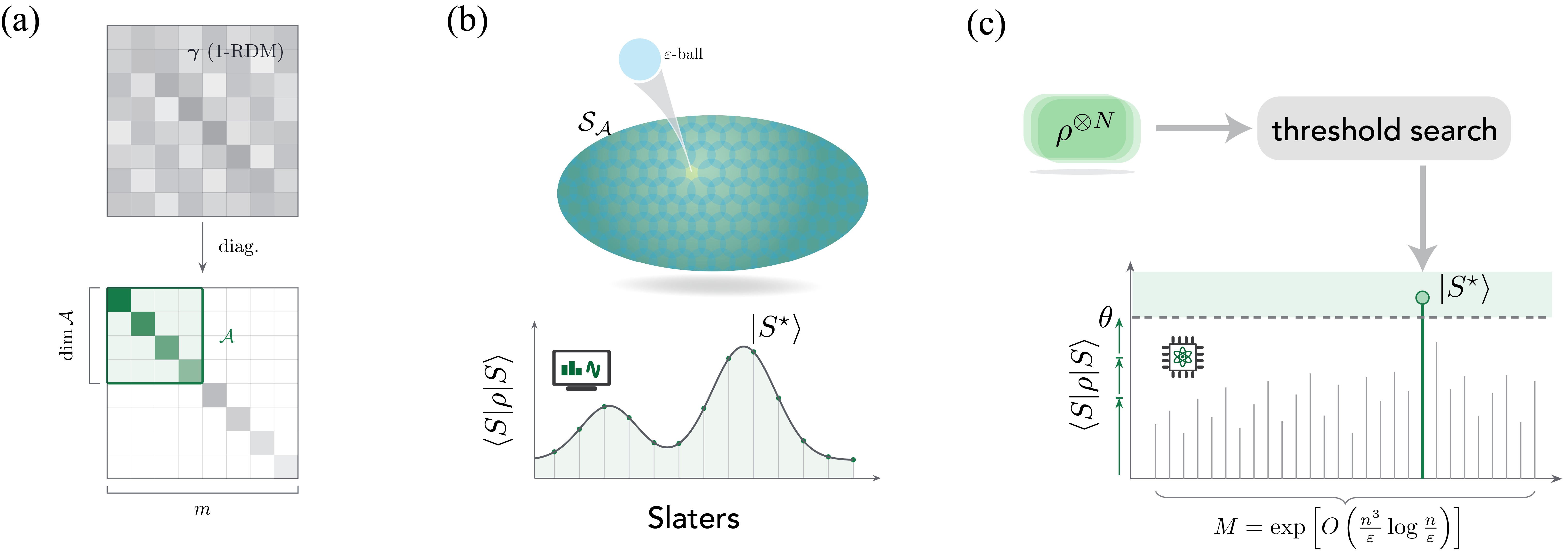}
\caption{\textbf{Overview of algorithms.}
(a) The $m\times m$ 1-RDM $\gamma$ is diagonalized into natural orbitals, and an active space $\mathcal{A}$ of large occupation (dimension
$\dim\mathcal{A}=O(n^2/\varepsilon)$) is retained.
(b) An $\varepsilon$-covering net over the manifold
$\mathcal{S}_{\mathcal{A}}$ of Slater determinants built from $\mathcal{A}$ is followed by explicit search in the classical case (Theorem~\ref{thm:upper}). The net has size $M=\exp[O(\tfrac{n^3}{\varepsilon}\log\tfrac{n}{\varepsilon})]$. (c) In the quantum setting (Theorem~\ref{thm:quantum}), quantum copies
$\rho^{\otimes N}$ are fed to the threshold search routine of
Ref.~\cite{buadescu2021improved}, applied to the same $\varepsilon$-covering net. The closest Slater $\ket{S^\star}$ can be found using $N=\poly(n,1/\varepsilon)$ copies of $\rho$.}
    \label{fig:algorithm}
\end{figure*}

We first establish some notation. We take $\mathcal{H}_n = \wedge^n \mathcal{H}_1$ to be the Hilbert space of $n$-particle fermionic wavefunctions where $\mathcal{H}_1$ is the single-particle Hilbert space. We set $m=\dim \mathcal{H}_1$ and $n\leq m$. Then 
\begin{equation}
    \mathcal{S} = \left\{\bigwedge_{i=1}^n \ket{\varphi_i} : \ket{\varphi_i} \in \mathcal{H}_1, \langle \varphi_i|\varphi_j\rangle = \delta_{ij}\right\}
\end{equation}
is the set of Slater determinants (or Slaters for short). We denote Slaters by capital letters, \textit{e.g.}, $\ket{S}$, $\ket{T}$. Let $\rho$ be an arbitrary fermionic state, either pure or mixed. When $\rho$ is pure, we write $ \rho = \ket{\psi}\!\bra{\psi}$ where $\ket{\psi} \in \mathcal{H}_n$. We center the main text around pure states to lighten exposition, generalizing to the mixed state case in Appendix~\ref{app:mixed}. We will optimize the fidelity,
\begin{equation}
    \text{F}(S) = \langle S|\rho|S\rangle,
\end{equation}
where we assume unit normalization $\Tr \rho =\langle S|S\rangle = 1$. \par 

Let $\ket{0}$ be the zero particle vacuum. Let $c_1^\dagger,\ldots, c_m^\dagger$ create an orthonormal basis $|\varphi_1\rangle,\ldots, |\varphi_m\rangle$ in $\mathcal{H}_1$. A basis for $\mathcal{H}_n$ is the set of states
\begin{equation}
    \ket{I} = c_{i_1}^\dagger c_{i_2}^\dagger \cdots c_{i_n}^\dagger \ket{0}
\end{equation}
where $I = \{i_1<i_2< \cdots < i_n\} \subset \{1,\ldots, m\}$. A general Slater determinant can be expressed in terms of an $m\times n$ matrix $U$ with $U^\dagger U = 1$ via 
\begin{equation}
    \ket{S} = d_1^\dagger \cdots d_n^\dagger \ket{0},\qquad d_i^\dagger = \sum_{j=1}^m U_{ji}c_j^\dagger.
\end{equation}
The amplitudes of $\ket{S}$ are given by $\langle I|S\rangle = \det U[I,:]$ where $U[I,:]$ is the $n\times n$ submatrix on rows $I$. \par 

In the classical setting, we will assume efficient query access to the amplitudes $\langle S|\psi\rangle$ of the pure state $\ket\psi$ (leaving the mixed state access model to Appendix~\ref{app:mixed}). We will also assume few-body observables such as $\Tr(c_i^\dagger c_j \rho)$ are efficiently estimable: that is, we can efficiently sample a random variable $X_O$ such that $\mathbb{E}[X_O] = \Tr(O\rho)$ and $\mathrm{Var}[X_O]$ is constant and bounded, for few-body observables $O$. This is a model for a succinct classical description such as a tensor network or neural quantum state with possibly stochastic estimation. 
In the quantum setting, we will assume access to quantum copies of the state $\rho$ and the ability to perform arbitrary measurements. In both the classical and quantum settings, the first step will be to estimate the 1-RDM 
\begin{equation}
    \gamma_{ij} = \Tr( c_j^\dagger c_i \rho),
\end{equation}
which is an $m\times m$ matrix. For either a classical noisy estimator or a quantum tomography protocol, this can be achieved within operator norm error $\eta$ with $\mathrm{poly}(m,1/\eta)$ runtime or measurements~\cite{zhao2021fermionic,christensen2026learning}. We will find that $\eta = O(\varepsilon/n)$ suffices for later purposes, where $\varepsilon$ is the final additive error in Eq.~\eqref{eq:OPT}, so $\mathrm{poly}(m,n,1/\varepsilon)$ measurements suffice overall.
\par 

We begin by describing the classical algorithm for learning the closest Slater, which has three stages: estimating the 1-RDM, diagonalizing it and restricting to an \textit{active space}, and searching over the active space Slaters. For the first stage, writing $\hat\gamma$ as the estimator for $\gamma$, we perform an orthonormal eigendecomposition 
\begin{equation}
    \hat\gamma_{ij} = \sum_{a=1}^m \hat\nu_a (\hat\psi_i^a)^* \hat\psi_j^a,
\end{equation}
which gives us an approximate natural basis $\hat\psi^a$ for $\mathcal{H}_1$. The eigenvalues $0\leq \hat\nu_a\leq 1$ are occupations of the natural orbitals $\hat\psi^a$. \par
The next step is to truncate to $\hat\psi^a$ with nontrivial occupations $\hat\nu_a$ to isolate an active space $\mathcal{A}\subset \mathcal{H}_1$. This step mirrors the routine practice of finding an active space in quantum chemistry and electronic structure calculations~\cite{veryazov2011select,stein2016automated,sayfutyarova2017automated,ma2013assessment}, though it is simpler here. In particular, we choose $\mathcal{A}$ to be
\begin{equation}
\mathcal{A} = \mathrm{span}\{ \hat\psi^a \mid \hat\nu_a > 3\varepsilon/8n\}.
\end{equation}
If $\dim\mathcal A<n$, we enlarge $\mathcal A$ by adding arbitrary orthonormal
orbitals until $\dim\mathcal A=n$ without loss of generality. In Lemmas \ref{lem:active} and \ref{lem:noisy-active} in the Appendix, we prove that it suffices to restrict our attention to $\mathcal{A}$ instead of the entire space $\mathcal{H}_1$, incurring at most an $\varepsilon/2$ additive error in the objective $\OPT$, leaving $\varepsilon/2$ for the next step. We illustrate this schematically in Fig.~\ref{fig:algorithm}a. If $\|\hat\gamma-\gamma\|\le \varepsilon/8n$, Lemma~\ref{lem:noisy-active}
implies~\footnote{before padding, and the same holds after padding for $0<\varepsilon\le1$.}
\begin{equation}
    \dim\mathcal A
    \le
    \min\left(m,\frac{4n^2}{\varepsilon}\right),
\end{equation}
 which is advantageous when $n^2/\varepsilon \ll m$. If this condition fails to hold, the truncation to $\mathcal{A}$ does not result in appreciable complexity savings and is not necessary. We'll assume $n^2/\varepsilon\ll m$ from here on, and thus $\dim\mathcal A=O(n^2/\varepsilon)$. It remains to optimize the fidelity over this space, which we accomplish below (with the mixed-state generalization in Appendix~\ref{app:mixed}). 
\begin{theorem}[Learning the closest Slater determinant, the classical setting]\label{thm:upper}
Let $\rho=\ket{\psi}\!\bra{\psi}$ be an $n$-particle fermionic state. Assume we have an estimator $\hat\gamma$ for the 1-RDM $\gamma_{ij} = \Tr(c_j^\dagger c_i\rho)$ satisfying $\|\hat\gamma - \gamma\|<\varepsilon/8n$, together with classical query access to the amplitudes $\langle S|\psi\rangle$. Then we can find the closest Slater to $\rho$, as measured by fidelity, to an additive error $\varepsilon$ in time $\exp\!\left[
    O\!\left(
      \frac{n^3}{\varepsilon}\log\frac{n}{\varepsilon}
    \right)
  \right]$
on a classical computer.
\end{theorem}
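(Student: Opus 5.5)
We first reduce to the active space. By Lemmas~\ref{lem:active} and~\ref{lem:noisy-active}, the hypothesis $\|\hat\gamma-\gamma\|<\varepsilon/8n$ gives two facts. First, $\mathcal{A}=\mathrm{span}\{\hat\psi^a:\hat\nu_a>3\varepsilon/8n\}$ has $d:=\dim\mathcal A\le 4n^2/\varepsilon$. Second, the set $\mathcal S_{\mathcal A}$ of Slaters built from orbitals in $\mathcal A$ satisfies $\max_{S\in\mathcal S_{\mathcal A}}\langle S|\rho|S\rangle\ge \OPT-\varepsilon/2$. Computing the eigendecomposition of $\hat\gamma$ costs $\poly(m)$, which is within the claimed bound since $d$ (not $m$) controls everything afterwards. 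It then suffices to find $S\in\mathcal S_{\mathcal A}$ with fidelity at least $\max_{\mathcal S_{\mathcal A}}F-\varepsilon/2$.

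Next we construct an $\varepsilon$-net over $\mathcal S_{\mathcal A}$. A Slater in $\mathcal S_{\mathcal A}$ is given by a $d\times n$ isometry $V$, with $\ket{S_V}=\bigwedge_i (\sum_a V_{ai}\hat\psi^a)$. The key perturbation estimate is that if $\|V-V'\|_{\mathrm{op}}\le\delta$, then $\|\ket{S_V}-\ket{S_{V'}}\|\le n\delta$. We get this by telescoping, replacing one column at a time: each step changes the wedge product by at most $\|v_i-v_i'\|$ times a product of norms at most $1$, since wedge products of vectors with norm at most $1$ have norm at most $1$. Fidelity is Lipschitz in the state: $|\langle S|\rho|S\rangle-\langle S'|\rho|S'\rangle|\le 2\|S-S'\|$. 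It therefore suffices to take $\delta=\varepsilon/(4n)$ and a $\delta$-net of the set of $d\times n$ isometries. We build this net by taking a $\delta/3$-net of matrices with entries bounded by $1$ in Frobenius norm, of size $(O(\sqrt{dn}/\delta))^{2dn}$, and projecting each point to its nearest isometry (polar factor), which at most doubles the error. This gives
\[
M\le \exp\!\left[O\!\left(dn\log\tfrac{dn}{\varepsilon}\right)\right]=\exp\!\left[O\!\left(\tfrac{n^3}{\varepsilon}\log\tfrac{n}{\varepsilon}\right)\right].
\]

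Finally we search the net. For each net point $V$ we map to the full basis, $U=\hat\Psi V$ with $\hat\Psi$ the $m\times d$ matrix of active orbitals, and query $\langle S_U|\psi\rangle$ to get $F(S_U)=|\langle S_U|\psi\rangle|^2$ exactly. We output the maximizer. Let $S^*_{\mathcal A}$ be the best Slater in $\mathcal S_{\mathcal A}$. The net contains a point within $\varepsilon/2$ in fidelity of $S^*_{\mathcal A}$, so the output has fidelity at least $\OPT-\varepsilon$. The runtime is $M\cdot\poly(m,n)$, which matches the bound in Theorem~\ref{thm:upper}, absorbing the polynomial factors under the regime $n^2/\varepsilon\ll m$ as stated, or otherwise treating them as lower-order.

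The main obstacle is the net construction. We need the covering to stay inside the isometry manifold without losing more than constant factors, and the perturbation bound for wedge products must be linear in $n$ rather than exponential; the column-telescoping argument is what controls this. Everything else follows from the active-space lemmas.
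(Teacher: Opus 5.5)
Your proof is correct and follows the paper's argument essentially step for step: the active-space reduction via Lemmas~\ref{lem:active} and~\ref{lem:noisy-active}, the column-by-column telescoping bound that makes the fidelity $2n$-Lipschitz in the orbital matrix, a covering net at scale $\varepsilon/4n$, and exhaustive evaluation through amplitude queries. The one difference is that you build the net explicitly (net the ambient ball of $d\times n$ matrices, then project to the nearest isometry) rather than citing a covering bound for the $2n(r-n)$-dimensional manifold $\mathcal S_{\mathcal A}$, and this gives the same $\exp[O(\frac{n^3}{\varepsilon}\log\frac{n}{\varepsilon})]$ count; as a minor aside, the condition $n^2/\varepsilon\ll m$ is not what lets you absorb the $\poly(m)$ cost of diagonalizing $\hat\gamma$ and evaluating amplitudes, so that cost is better kept as a separate multiplicative factor, as the paper does in its mixed-state proposition.
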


\begin{proof}
By Lemmas \ref{lem:active} and \ref{lem:noisy-active}, we may restrict the search to Slaters whose occupied orbitals lie in $\mathcal{A}$ at the cost of an additive $\varepsilon/2$ error in the fidelity:
\begin{equation}\label{eq:OPTleqOPTA}
    \OPT \leq \OPT_{\mathcal{A}} + \varepsilon/2, \quad \OPT_{\mathcal{A}} = \max_{\ket S \in \mathcal{S}_\mathcal{A}}\langle S|\rho|S\rangle.
\end{equation}
Here, $\mathcal{S}_\mathcal{A}$ is the manifold of Slater determinants built from orbitals in $\mathcal{A}$. We will bound $\dim\mathcal{A}$ and then consider a covering net over $\mathcal{S}_\mathcal{A}$.
\par 
We recall that the active subspace $\mathcal{A}$ consists of eigenvectors of $\hat\gamma$ with eigenvalues $> 3\varepsilon/8n$. For any $\ket{u}\in \mathcal{A}$, we have $\langle u|\gamma|u\rangle \geq \langle u|\hat\gamma|u\rangle -\|\hat\gamma-\gamma\| = 3\varepsilon/8n - \varepsilon/8n = \varepsilon/4n$. Summing over an orthonormal basis of $\mathcal{A}$ gives $\tr \gamma \geq r\varepsilon/4n$ where $r\coloneqq \dim\mathcal{A}$, and hence $r< 4n^2/\varepsilon =O(n^2/\varepsilon)$.
\par 

We will parameterize each $\ket{S}\in\mathcal{S}_{\mathcal{A}}$ by an $r\times n$ matrix $U$ with orthonormal columns. We denote this as $\ket{S(U)}$ and the fidelity as
\begin{equation}
    \mathrm{F}(U) = \langle S(U)|\rho|S(U)\rangle.
\end{equation}
It will be necessary to bound $|\mathrm{F}(U)-\mathrm{F}(V)|$ in terms of the difference $\|U-V\|_F$ (in Frobenius norm) for two matrices $U,V$. One can show~\footnote{Letting $A = \ket{S(U)}, B = \ket{S(V)}$, we have $|\mathrm{F}(U)-\mathrm{F}(V)| = |\langle A-B|\rho|A\rangle +\langle B|\rho|A-B\rangle|\leq \|A-B\|\|\rho\|_{\mathrm{op}}\|A\|+\|B\|\|\rho\|_{\mathrm{op}}\|A-B\| =2\|A-B\|$, since $\|\rho\|_{\mathrm{op}}\leq 1$ and $\|A\|=\|B\|=1$.}
\begin{equation}\label{eq:FUFV}
    |\mathrm{F}(U)-\mathrm{F}(V)| 
    \leq 2\|\ket{S(U)}-\ket{S(V)}\|.
\end{equation}
We must next bound $\|\ket{S(U)}-\ket{S(V)}\|$ in terms of $\|U-V\|_F$. We write
\begin{equation}
    \ket{S(U)} = \ket{u_1}\wedge \cdots \wedge \ket{u_n}
\end{equation}
where $\ket{u_k}$ corresponds to the $k$'th column of $U$ viewed as a single-particle orbital, and similarly for $\ket{S(V)}$. We claim 
\begin{equation}\label{eq:SUSVuv}
    \|\ket{S(U)}-\ket{S(V)}\| \leq \sum_{k=1}^n \| \ket{u_k}-\ket{v_k}\|
\end{equation}
and establish this by induction on $n$. For $n=1$, this is true as an equation. For $n\geq 2$, we assume it is true for $n-1$ and we add and subtract $\left[\bigwedge_{k=1}^{n-1}\ket{v_k}\right]\wedge \ket{u_n}$, yielding
\begin{multline}
    \ket{S(U)}-\ket{S(V)} = \left[\bigwedge_{k=1}^{n-1}\ket{u_k} -\bigwedge_{k=1}^{n-1}\ket{v_k} \right] \wedge \ket{u_n}\\
    +\left[\bigwedge_{k=1}^{n-1}\ket{v_k}\right]\wedge (\ket{u_n}-\ket{v_n}).
\end{multline}
Taking norms and applying the triangle inequality, we have
\begin{align}
    \| \ket{S(U)}-\ket{S(V)} \| &\leq \left\|\bigwedge_{k=1}^{n-1}\ket{u_k} -\bigwedge_{k=1}^{n-1}\ket{v_k}\right\|\|\ket{u_n}\|\nonumber\\
&+\left\|\bigwedge_{k=1}^{n-1}\ket{v_k}\right\| \|\ket{u_n}-\ket{v_n}\|\\
    &\leq \sum_{k=1}^{n-1}\|\ket{u_k}-\ket{v_k}\|+ \|\ket{u_n}-\ket{v_n}\|\nonumber
\end{align}
where we used the inductive hypothesis and $\|\wedge_{k=1}^{n-1}\ket{u_k}\| \leq \prod_{k=1}^{n-1}\|\ket{u_k}\| = 1$. This establishes Eq.~\eqref{eq:SUSVuv}. Next, we note that $\|U-V\|_F^2 = \sum_{k=1}^n \|\ket{u_k}-\ket{v_k}\|^2$, which implies $\|\ket{u_k}-\ket{v_k}\|\leq \|U-V\|_F$ for each $k$. Combining from Eq.~\eqref{eq:FUFV}, we have
\begin{equation}
    |\mathrm{F}(U)-\mathrm{F}(V)| \leq 2n\|U-V\|_F. 
\end{equation}

Next, we consider a covering of $\mathcal{S}_{\mathcal{A}}$, which is a compact manifold of real dimension $d=2n(r-n)$, as illustrated heuristically in Fig.~\ref{fig:algorithm}b. It hence admits a $\delta$-covering net with $(c/\delta)^d$ elements for some constant $c$~\cite{szarek1982nets}. We take $\delta = \varepsilon/4n$ so that every $\ket{S(U)}$ has a net point $\ket{S(U')}$ such that $\|U-U'\|_F <\varepsilon/4n$ and hence $|\mathrm{F}(U)-\mathrm{F}(U')| \leq \varepsilon/2$. The net has number of elements
\begin{equation}\label{eq:2nc}
    \left(\frac{4nc}{\varepsilon}\right)^{2n(r-n)} = \exp\left[O\left(\frac{n^3}{\varepsilon}\log \frac{n}{\varepsilon}\right)\right]
\end{equation}
where we used $r = O(n^2/\varepsilon)$. We then enumerate this net, evaluate F at each point, and return a Slater with maximum fidelity among all net points. Let $\ket{S(U')}$ be the corresponding Slater and $\ket{S(U^\star)}$ be a \textit{true} maximum over all $\mathcal{S}_{\mathcal{A}}$. Then
\begin{equation}
        \mathrm{F}(U') \geq \mathrm{F}(U^\star)-\varepsilon/2 = \OPT_{\mathcal{A}}-\varepsilon/2
    \geq \OPT -\varepsilon,
\end{equation}
using Eq.~\eqref{eq:OPTleqOPTA}. This procedure hence returns an optimal Slater to within an additive $\varepsilon$ tolerance, as desired. 
\end{proof}

It is worth clarifying why the covering net search of Theorem~\ref{thm:upper} may be
necessary in general. Having isolated the active space from the 1-RDM, one might hope to
avoid the covering net search and simply occupy the $n$ natural orbitals of
largest occupation, returning the resulting Slater. This approach succeeds when
$\rho$ is a Slater and may be accurate when the 1-RDM eigenvalues lie close to either $0$ or $1$,
but it can fail badly in general. In other words, the closest Slater is not a simple function of the 1-RDM alone. For example, consider the state
\begin{equation}
    \ket{\psi} = \frac{1}{\sqrt3}\left(
      c_1^\dagger c_2^\dagger c_3^\dagger
      + c_3^\dagger c_4^\dagger c_5^\dagger
      + c_5^\dagger c_6^\dagger c_1^\dagger\right)\ket0.
\end{equation}
Each pair of terms differs in two orbitals, so the 1-RDM is diagonal, with
occupation $2/3$ on orbitals $1,3,5$ and $1/3$ on orbitals $2,4,6$. The Slater corresponding to the three leading occupations in the 1-RDM is thus $c_1^\dagger c_3^\dagger c_5^\dagger\ket{0}$, which has zero overlap with $\ket{\psi}$, while the true closest Slater fidelity satisfies $\OPT\geq 1/3$ (taking e.g. $c_1^\dagger c_2^\dagger c_3^\dagger\ket{0}$). Indeed, there is no efficient (polynomial in $n,1/\varepsilon$) protocol for determining the closest Slater from the 1-RDM alone in general under certain computational complexity assumptions, as we detail in Sec.~\ref{sec:hardness}. 

\par 
In conclusion, the classical algorithm presented here gives an answer to the question posed at the beginning of the paper
in its precise form (Eq.~\eqref{eq:OPT}). We turn to the quantum setting next.

\section{Quantum setting}\label{sec:quantum}

We turn to the problem of learning the closest Slater under the assumptions of the quantum access model. In particular, we assume access to quantum copies of the state $\rho$ with the ability to make arbitrary measurements. We design an algorithm with $\poly(n,1/\varepsilon)$ sample complexity, independent of $m$, using the active space estimate; identifying the active space may incur a separate $\mathrm{poly}(m,n,1/\varepsilon)$ overhead, as we discuss. The runtime remains exponential in $\poly(n,1/\varepsilon)$. We leave open the question of whether a subexponential-in-$n$ runtime is possible. 
\par 
Our protocol makes use of the quantum threshold search (QTS) algorithm of Badescu and O’Donnell~\cite{buadescu2021improved}. Let us review the relevant parts. Let $\rho$ be an unknown quantum state and $A_1,\ldots, A_M$ observables satisfying $0\preceq A_i\preceq 1$~\footnote{where $A\preceq B$ means $B-A$ is positive semidefinite} and let $\theta_1,\ldots, \theta_M\in [0,1]$. Given a desired accuracy $\varepsilon>0$ and a tolerable failure probability $\delta > 0$, Ref.~\cite{buadescu2021improved} provides an algorithm using
\begin{equation}\label{eq:nQTS}
    n_{\mathrm{QTS}} = O\left(\frac{1}{\varepsilon^2}\left(\log^2M+ \log\frac{1}{\delta}\right)\log\frac{1}{\delta}\right)
\end{equation}
copies of $\rho$ which, except with probability at most $\delta$, either returns an index $j$ such that $\Tr(\rho A_j) > \theta_j-\varepsilon$ or certifies that $\Tr(\rho A_j)<\theta_j$ for all $j$. We will apply this primitive to the projectors onto a finite net of Slater determinants (c.f. Fig.~\ref{fig:algorithm}, Theorem~\ref{thm:upper}). The noteworthy feature is the $\log^2M$ dependence, which is exponentially better than the $\Omega(M)$ copies a na\"ive search over the $M$ candidates would require. 
\par 

We will need to slightly generalize the quantum threshold search algorithm to also search over the thresholds $\theta_1,\ldots, \theta_M$. In fact, we will set them to be all equal ($\theta = \theta_1 = \cdots =\theta_M$) and perform a binary search over $\theta \in [0,1]$. The final value of $\theta$ will satisfy $|\theta - \OPT|<\varepsilon$. This is achieved by the following Lemma, whose proof is left to Appendix~\ref{app:lemmas}. \par 
\noindent\textbf{Lemma \ref{lem:agnosticQTS}} (Agnostic quantum threshold search) \textit{Let $A_1,\ldots, A_M$ be known observables with $0\preceq A_i \preceq 1$ and let $F_i \coloneqq \Tr(\rho A_i)$ and $F_\star = \max_i F_i$. Given quantum copies of $\rho$, there is a procedure which outputs an index $i_\star$ such that $F_{i_\star} \geq F_\star -\varepsilon$ with probability at least $1-\delta$, using }
\begin{equation}\label{eq:naqts}
    n_{\mathrm{AQTS}} = O\left(\frac{\mu}{\varepsilon^2}\left(\log^2M + \log\frac{\mu}{\delta}\right) \log\frac{\mu}{\delta}\right)
\end{equation}
\textit{copies of $\rho$, where $\mu \coloneqq \log (1/\varepsilon)$. 
}
\par 

We will apply agnostic QTS to the problem of learning the closest Slater determinant in the quantum setting. This is presented formally in the following Theorem, depicted schematically in Fig.~\ref{fig:algorithm}c. \par 
\begin{theorem}[Learning the closest Slater determinant, the quantum setting]\label{thm:quantum}
Let $\rho$ be an arbitrary $n$-fermion state and suppose we have an estimator $\hat \gamma$ for the 1-RDM $\gamma$ satisfying $\|\hat\gamma-\gamma\|\leq \frac{\varepsilon}{12n}$. Suppose for each explicitly specified Slater determinant $\ket{S}$ we can implement the two-outcome measurement $\{\ket{S}\!\bra{S},1-\ket{S}\!\bra{S}\}$. Then there is a quantum protocol which outputs a Slater determinant $\ket{S^\star}$ satisfying $\langle S^\star|\rho|S^\star\rangle\geq \OPT-\varepsilon$ with probability at least $1-\delta$, using
\begin{equation}
  N=  \widetilde O\left(\frac{n^6}{\varepsilon^4}\right)
\end{equation}
copies of $\rho$, where $\widetilde O$ omits logarithmic factors in $n,1/\varepsilon,$ and $1/\delta$. The runtime is exponential in $\mathrm{poly}(n,1/\varepsilon)$. 
\end{theorem}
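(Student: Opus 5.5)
The proof will follow the classical algorithm of Theorem~\ref{thm:upper}, replacing the explicit enumeration of the covering net by agnostic quantum threshold search (Lemma~\ref{lem:agnosticQTS}). The error budget is split into three parts of $\varepsilon/3$ each: one for the active-space truncation, one for the net discretization, and one for the QTS search.

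\textbf{Step 1: truncation.} Build $\mathcal{A}$ from the eigenvectors of $\hat\gamma$ with eigenvalue above a threshold of order $\varepsilon/n$, rescaled from $3\varepsilon/8n$ to match the assumed accuracy $\|\hat\gamma-\gamma\|\le\varepsilon/12n$. I expect Lemmas~\ref{lem:active} and \ref{lem:noisy-active}, applied with $\varepsilon$ replaced by $2\varepsilon/3$, to give $\OPT\le\OPT_{\mathcal{A}}+\varepsilon/3$. The trace argument from the proof of Theorem~\ref{thm:upper} should carry over: every $\ket{u}\in\mathcal{A}$ has $\langle u|\gamma|u\rangle=\Omega(\varepsilon/n)$, so $r=\dim\mathcal{A}=O(n^2/\varepsilon)$, independent of $m$. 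The mixed-state version of these lemmas (Appendix~\ref{app:mixed}) covers a general $\rho$.

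\textbf{Step 2: the net.} Reuse the Lipschitz bound $|\mathrm{F}(U)-\mathrm{F}(V)|\le 2n\|U-V\|_F$. Its proof only used $\|\rho\|_{\mathrm{op}}\le1$, so it holds for mixed $\rho$ too. Take a $\delta$-net of $\mathcal{S}_{\mathcal{A}}$ with $\delta=\varepsilon/6n$. Its size is $M=(c n/\varepsilon)^{2n(r-n)}$, so
\[
\log M=O\!\left(\frac{n^3}{\varepsilon}\log\frac{n}{\varepsilon}\right),
\]
and some net point $\ket{S_j}$ has $\mathrm{F}(S_j)\ge\OPT_{\mathcal{A}}-\varepsilon/3$.

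\textbf{Step 3: search.} Set $A_j=\ket{S_j}\!\bra{S_j}$. These satisfy $0\preceq A_j\preceq 1$, and by hypothesis their two-outcome measurements are implementable. Run Lemma~\ref{lem:agnosticQTS} with accuracy $\varepsilon/3$ and failure probability $\delta$. The returned index satisfies
\[
\mathrm{F}(S_{i_\star})\ge\max_j\mathrm{F}(S_j)-\varepsilon/3\ge\OPT-\varepsilon
\]
with probability at least $1-\delta$. The copy count is dominated by the $\log^2 M$ term in Eq.~\eqref{eq:naqts}:
\[
N=O\!\left(\frac{\mu}{\varepsilon^2}\cdot\frac{n^6}{\varepsilon^2}\log^2\frac{n}{\varepsilon}\cdot\log\frac{\mu}{\delta}\right)=\widetilde O\!\left(\frac{n^6}{\varepsilon^4}\right).
\]
The runtime is exponential in $\poly(n,1/\varepsilon)$ because the QTS classical processing touches all $M$ candidates. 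The copies needed to estimate $\hat\gamma$ are not counted here, since $\hat\gamma$ is given as a hypothesis.

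\textbf{Main obstacle.} The delicate step is Step 1: checking that Lemmas~\ref{lem:active} and \ref{lem:noisy-active} still give an $\varepsilon/3$ truncation loss under the tighter estimator accuracy $\varepsilon/12n$ and the rescaled threshold. I would first verify the constants in those lemmas. If they do not fit directly, I would adjust the threshold to something like $\varepsilon/4n$ so that the inequality $\langle u|\gamma|u\rangle\ge$ threshold $-\,\varepsilon/12n$ stays of order $\varepsilon/n$.
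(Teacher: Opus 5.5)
Your proposal is correct and follows the paper's proof essentially step for step: the same three-way $\varepsilon/3$ split, a net of radius $\varepsilon/6n$, agnostic threshold search at accuracy $\varepsilon/3$, and $\log M = O(n(r-n)\log(n/\varepsilon))$ with $r=O(n^2/\varepsilon)$, giving $N=\widetilde O(n^6/\varepsilon^4)$. Your fallback threshold $\varepsilon/4n$ is exactly the paper's choice, and it settles your ``main obstacle'' directly: orbitals outside $\mathcal{A}$ have occupation at most $\varepsilon/4n+\varepsilon/12n=\varepsilon/3n$, so Lemma~\ref{lem:active} with $\varepsilon\to\varepsilon/3$ gives the $\varepsilon/3$ truncation loss, while Lemma~\ref{lem:noisy-active} gives $r\le n/(\varepsilon/4n-\varepsilon/12n)=6n^2/\varepsilon$.
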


\begin{proof}
    From the 1-RDM estimator satisfying $\|\hat\gamma-\gamma\|\leq \frac{\varepsilon}{12n}$, we perform an eigendecomposition $\hat\gamma_{ij} = \sum_{a=1}^m \hat\nu_a (\hat\psi_i^a)^* \hat\psi_j^a$ and define an active space $\mathcal{A} = \mathrm{span}\{\hat\psi^a \mid \hat\nu_a > \varepsilon/4n\}.$ Then, for $\ket{v} \in \mathcal{A}^\perp$ we have $\langle v|\gamma|v\rangle \leq \langle v|\hat\gamma |v\rangle + \|\hat\gamma-\gamma\| = \varepsilon/3n$ and Lemma~\ref{lem:active} gives $\OPT \leq \OPT_{\mathcal{A}} +\varepsilon/3$. Moreover, $r \coloneqq \dim\mathcal{A} \leq 6n^2/\varepsilon = O(n^2/\varepsilon)$ as before. We assume an $\varepsilon$-covering net of $\mathcal{S}_{\mathcal{A}}$ is chosen as in Theorem~\ref{thm:upper}, and we refer there for further details. Our strategy will be to apply agnostic quantum threshold search to the family of Slater projectors in this covering net. \par 
    In particular, we choose a net whose constituents have radius $\varepsilon / 6n$, which covers the manifold $\mathcal{S}_{\mathcal{A}}$ with fidelity accuracy $\varepsilon/3$. We denote this as $\{\ket{S_1},\ldots, \ket{S_M}\}$ with 
    \begin{equation}
        M \leq \left(\frac{6nc}{\varepsilon}\right)^{2n(r-n)}
    \end{equation}
    where $c$ is the same constant appearing in Eq.~\eqref{eq:2nc}. For each $k$, let $A_k = \ket{S_k}\!\bra{S_k}$. We apply Lemma~\ref{lem:agnosticQTS} to $A_1,\ldots, A_M$ with accuracy $\varepsilon/3$. In the event of no failures, this procedure returns $i_\star$ where $\ket{S^\star} \coloneqq \ket{S_{i_\star}}$ satisfies 
    \begin{equation}
        \langle S^\star |\rho|S^\star \rangle \geq \max_k\langle S_k|\rho|S_k\rangle - \varepsilon/3.
    \end{equation}
    By design of the net, we have $\max_k\langle S_k|\rho|S_k\rangle\geq \OPT_{\mathcal{A}}-\varepsilon/3$. Moreover, by our choice of $\mathcal{A}$ we have $\OPT_{\mathcal{A}}\geq \OPT -\varepsilon/3$. Combining these inequalities, we have
    \begin{equation}
         \langle S^\star |\rho|S^\star\rangle \geq \OPT -\varepsilon.
    \end{equation}
    Finally, combining $r=O(n^2/\varepsilon)$ with $\log M = O\left(n(r-n)\log(n/\varepsilon)\right)$ and substituting into $n_{\mathrm{AQTS}}$ (Eq.~\eqref{eq:naqts}) yields $ N=  \widetilde O\left(n^6/\varepsilon^4\right)$, as was to be shown. 
\end{proof}

In Theorem~\ref{thm:quantum}, we have assumed the estimator $\hat\gamma$ was given. In practice, $\hat\gamma$ must be constructed using tomography on $\rho$, and we must count the number of copies of $\rho$ needed for this step of the procedure. We account for this using Lemma~\ref{lem:noisy-active}, which establishes using fermionic classical shadows~\cite{zhao2021fermionic,wan2023matchgate,christensen2026learning} that an estimator $\hat \gamma$ for the 1-RDM $\gamma$ satisfying $\|\hat\gamma-\gamma\|\leq \eta$ can be found with probability at least $1-\delta$ using $n_{\mathrm{fcs}} = O\left(\frac{m^2\log(m/\delta)}{\eta^2}\right)$ quantum copies of $\rho$. In the context of the previous theorem, we need $\eta \sim \varepsilon/n$, and hence the sample complexity from this estimation stage is $O\left(\frac{m^2n^2\log(m/\delta)}{\varepsilon^2}\right) = \mathrm{poly}(m,n,1/\varepsilon)$. \par
At this stage, we have provided algorithms for learning the closest Slater in both the classical and quantum access models. These correspond to the upper bounds in Table~\ref{tab:results}. Next, we turn to computational hardness lower bounds, which limit the possible efficiency of any algorithm for this problem.

\begin{figure*}
    \centering
\includegraphics[width=\linewidth]{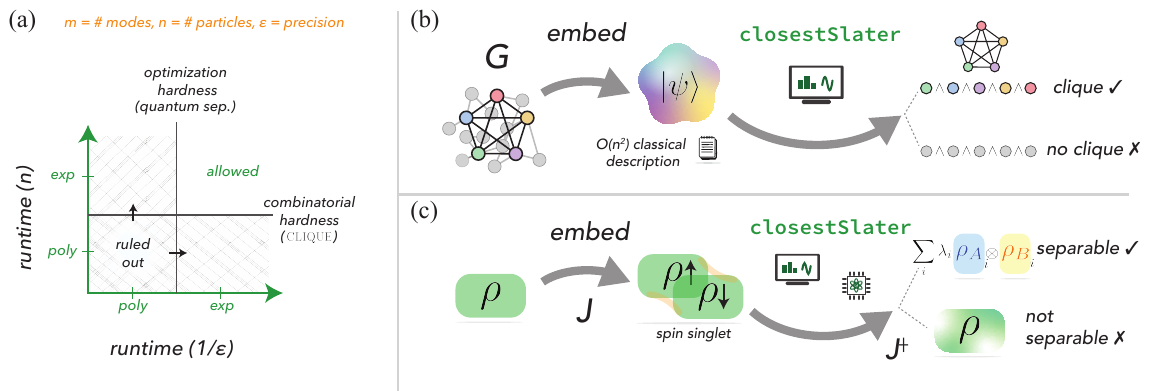}
    \caption{\textbf{Overview of computational hardness.} (a) Learning the closest Slater determinant is possible in time polynomial in $m$ and exponential in $\poly(n,1/\varepsilon)$. Substantially improving the dependence on $n$ or $1/\varepsilon$ is ruled out by standard computational complexity conjectures. (b) To prove hardness for the $n$ axis, we embed a graph instance $G$ of the $n$-\textsc{Multicolored Clique} problem into a classical description of a fermionic state, which could be efficiently solved by an efficient closest Slater algorithm. (c) To prove hardness along the $1/\varepsilon$ axis, a bipartite state $\rho$ is embedded by $J$ into the opposite-spin sector of a fermionic state in such a way that a classical or quantum algorithm for the closest Slater would decide whether $\rho$ is separable, solving the quantum separability problem~\cite{gharibian2010strong}.}
    \label{fig:comphardness}
\end{figure*}

\section{Computational hardness}\label{sec:hardness}

So far, we have provided algorithms for learning the closest Slater determinant whose time complexities are polynomial in $m$, the single-particle Hilbert space dimension, at constant particle number $n$ and additive error $\varepsilon$. However, they are exponential in $\poly(n,1/\varepsilon)$ at fixed $m$. In this section, we establish computational hardness lower bounds in both $n$ and $1/\varepsilon$ based on standard complexity-theoretic assumptions, hence showing that this time complexity cannot be substantially improved along either axis, albeit with one substantive problem left open: the possibility of subexponential dependence in $n$ for a quantum algorithm. In the classical case, the following results render our upper bound essentially time-optimal. Figure~\ref{fig:comphardness} summarizes the two hardness reductions and the resulting complexity landscape.\par 

\subsection{Classical hardness in particle number}

First, we prove a hardness lower bound in $n$ for the classical setting. To prove this, we will use a standard hardness assumption from parameterized complexity, namely the hardness of the $n$-\textsc{Multicolored Clique} problem~\cite{CyganEtAl2015}. In this problem, one is given a
graph whose vertices are partitioned into $n$ colors
\begin{equation}
  V_1,\ldots,V_n,
\end{equation}
with each subset of size $p$, for a total of $np$ vertices. A \textsc{Multicolored Clique} is a choice of one vertex per color such that the $n$ chosen vertices are pairwise connected. Under the standard exponential time hypothesis (ETH), one does not expect an
algorithm running in time $f(n)p^{o(n)}$ for this problem, for any
computable $f$.
\par 
For our purposes, it is convenient to use a variation where the graph is guaranteed to have zero or one multicolored clique. This is without any significant loss of generality due to the Valiant-Vazirani theorem~\cite{ValiantVazirani1986}. One detail of appealing to this theorem is that we will phrase our assumption in the more general setting of randomized algorithms. Our self-contained assumption, therefore, will be as follows.

\begin{assumption}[$n$-\textsc{Multicolored Clique~\cite{CyganEtAl2015,ValiantVazirani1986}}]
\label{ass:rETH}
Let $G$ be an $n$-colored graph, with each color class of size $p$. Given $G$, there is no randomized algorithm running in time $f(n)p^{o(n)}$, for any computable function $f$, which distinguishes the following two cases (with success probability bounded above $1/2$ by a constant) under the promise that one is true:
\begin{enumerate}
  \item $G$ has no $n$-multicolored clique
  \item $G$ has exactly one $n$-multicolored clique.
\end{enumerate}
\end{assumption}

\noindent Under this assumption, we prove the following:

\begin{restatable}[Classical complexity lower bound]{theorem}{classicalcomplexitythm}
\label{thm:lower}
Fix any $\lambda<1$. Under Assumption~\ref{ass:rETH}, there exists a
constant $\varepsilon_\lambda>0$ such that no randomized algorithm running in time $f(n)m^{o(n)}$ can solve the following task in general: given a succinct classical
description of a nonzero $n$-fermion vector $|\Phi\rangle$ on $m$ modes,
output a Slater $| S\rangle$ satisfying
\begin{equation}\label{eq:Scontradiction}
  \frac{|\langle S|\Phi\rangle|^2}{\|\Phi\|^2}
  \ge
  \OPT-\varepsilon_\lambda,
\end{equation}
where $\OPT = \max_{\ket{S}\in \mathcal{S}} |\langle S|\Phi\rangle|^2/\|\Phi\|^2$, 
with success probability at least $0.99$, even under the promise that $\OPT \ge \lambda.$ Here a succinct classical description means a polynomial-size circuit
which, on inputting a set $I\subset [m]$ with corresponding Slater determinant $\ket{I}$, returns the (possibly unnormalized) amplitude $\langle I|\Phi\rangle$.
\end{restatable}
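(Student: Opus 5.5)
We reduce the unique-clique version of $n$-\textsc{Multicolored Clique} (Assumption~\ref{ass:rETH}) to the closest-Slater task. The idea is to embed the (at most one) clique as a Slater determinant inside an explicitly computable superposition, and then read the clique off from any near-optimal Slater the algorithm returns.

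\textbf{Construction.} Given $G$ with color classes $V_1,\ldots,V_n$ of size $p$, introduce one mode per vertex and $n$ auxiliary modes $a_1,\ldots,a_n$, so $m=np+n$. Set
\begin{equation*}
  |\Phi\rangle=\alpha\,|R\rangle+\sum_{C}|C\rangle,\qquad |R\rangle=a_1^\dagger\cdots a_n^\dagger\ket0,
\end{equation*}
where the sum runs over vertex sets $C$ containing one vertex of each color and forming a clique. Here $\alpha>0$ is a constant depending only on $\lambda$. The amplitude circuit is clearly polynomial size: on input $I$ it returns $\alpha$ if $I$ is the auxiliary set, returns $1$ if $I$ is a multicolored clique (checking $\binom n2$ edges), and returns $0$ otherwise. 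The vector $\Phi$ is nonzero in both cases. In the no-clique case $\Phi\propto|R\rangle$, so $\OPT=1$. In the unique-clique case, taking $S=C$ gives $\OPT\ge 1/(1+\alpha^2)$. Choosing $\alpha^2\le(1-\lambda)/\lambda$ therefore ensures the promise $\OPT\ge\lambda$ in both cases. Since $m^{o(n)}=(n(p+1))^{o(n)}\le f'(n)\,p^{o(n)}$, an $f(n)m^{o(n)}$ algorithm for the Slater task would yield an $f''(n)p^{o(n)}$ algorithm for the clique problem, contradicting Assumption~\ref{ass:rETH}.

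\textbf{Step 1: a near-optimal Slater must overlap the clique.} In the clique case let $c=|\langle C|S\rangle|$. Since $|R\rangle\perp|C\rangle$, we have $|\langle R|S\rangle|\le\sqrt{1-c^2}$, so
\begin{equation*}
  \frac{|\langle S|\Phi\rangle|^2}{\|\Phi\|^2}\le\frac{(\alpha\sqrt{1-c^2}+c)^2}{1+\alpha^2}.
\end{equation*}
If $c^2<1/2$ and $\alpha$ is small, the right side is at most $(\alpha+1/\sqrt2)^2/(1+\alpha^2)$, which lies a constant below $1/(1+\alpha^2)\le\OPT$. Choosing $\varepsilon_\lambda$ smaller than this gap forces every valid output to satisfy $c^2\ge 1/2$. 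I expect fixing these constants to be routine.

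\textbf{Step 2: decoding the clique from $S$.} This is the main obstacle, because $\varepsilon_\lambda$ is constant and naive occupation-number arguments lose factors of $n$. Write $S$ as $U\in\C^{m\times n}$ with orthonormal columns. Then $\langle C|S\rangle=\det U[C,:]$. Every singular value of $U[C,:]$ is at most $1$ and their product is $c$, so each is at least $c$. Hence $U[C,:]^\dagger U[C,:]\succeq c^2 I$, and each row of $U$ indexed by $C$ has squared norm at least $c^2\ge 1/2$.

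On the other hand, $U^\dagger U=I$ implies that the rows outside $C$ have total squared Frobenius norm at most $n(1-c^2)\le n/2$. So at most $n$ vertices outside $C$ have row norm$^2\ge 1/2$. The decoder computes all row norms of $U$, keeps the candidate vertices with row norm$^2\ge1/2$ (at most $2n$ of them), and enumerates every choice of one candidate per color, at most $2^{2n}$ choices. Each choice is tested for being a clique. This takes $f(n)\poly(p)$ time and finds the clique whenever it exists, while in the no-clique case it never finds one. Amplifying the $0.99$ success probability is unnecessary, since $0.99$ is already bounded above $1/2$. This completes the contradiction.
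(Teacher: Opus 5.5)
Your proof is correct. The overall reduction is the same as the paper's: a reference Slater on fresh modes is superposed with the hidden clique, weighted so that the clique dominates. The two key steps are handled differently, though.

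The paper adds a reference clique $D$ to the graph and takes $|D\rangle+\alpha|C\rangle$ with $\alpha>1$. It then invokes Lemma~\ref{lem:Superposition}, the $\ell_1$ bound $|\langle S|D\rangle|+|\langle S|C\rangle|\le 1$ obtained from AM--GM on singular values. That lemma needs the two occupied one-particle subspaces to be orthogonal and $n\ge 2$. It pins down $\OPT=\alpha^2/(1+\alpha^2)$ exactly, with unique optimizer $|C\rangle$. The paper then decides by thresholding the single overlap $|\langle S|D\rangle|$ ($>3/4$ in the no-clique case versus $<1/4$ in the clique case).

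You use only Bessel's inequality for the orthonormal pair $|R\rangle,|C\rangle$. This never determines $\OPT$. It still forces $c^2\ge 1/2$ once the reference weight is below $1-1/\sqrt2$, which costs nothing because $\alpha$ is a free constant. The sharper lemma is therefore not needed.

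Your Step 2, with the singular-value bound on the rows of $U[C,:]$ and the Frobenius budget $n(1-c^2)$ for the remaining rows, is sound. It buys something extra: a witness-producing reduction with one-sided error. But it is not required for the decision problem, so the ``main obstacle'' you flag is not really one. Step 1 already gives $|\langle R|S\rangle|^2\le 1-c^2\le 1/2$ for any valid output in the clique case. In the no-clique case, $|\langle R|S\rangle|^2\ge 1-\varepsilon_\lambda$. So thresholding $|\det U[R,:]|$ already separates the two cases, just as the paper does with $D$.
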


Below we outline the main ideas of the proof, leaving the full argument to Appendix~\ref{sec:classicalhardness}. We embed the $n$-\textsc{Multicolored Clique} problem inside a fermionic state, as shown schematically in Fig.~\ref{fig:comphardness}b. Let $G$ be an $n$-colored graph satisfying the promise of Assumption~\ref{ass:rETH}. It will be slightly more convenient to have the promise of either one or two $n$-multicolored cliques, so we add to $G$ a disjoint clique $D$ and then assign one fermionic mode to each vertex. The fermionic state is chosen as $\ket{\Phi_G} = \ket{D}+\alpha \ket{C}$, with $\alpha>1$. Here, $\ket{C}$ is the Slater for the hidden clique if it exists and $\ket{C} = 0$ otherwise. The state $\ket{\Phi_G}$ has a succinct description, since each amplitude $\langle I |\Phi_G\rangle$ equals $1,\alpha,$ or $0$ and can be computed from the graph in $O(n^2)$ time. Because $D$ and $C$ share no orbitals, the closest Slater is $\ket{C}$ when it exists and $\ket{D}$ otherwise (Lemma~\ref{lem:Superposition}). Taking a single overlap with $\ket{D}$ then reveals the solution to $n$-\textsc{Multicolored Clique}. Hence, an efficient closest Slater algorithm would violate Assumption~\ref{ass:rETH}. \par 

This establishes that the classical time complexity cannot be substantially improved beyond the one presented in Sec.~\ref{sec:algorithm}, and in particular no subexponential-in-$n$ algorithm should be expected. We note that this argument is based on a classical access model. Our quantum algorithm also has exponential-in-$n$ runtime, but the argument above does not provide a lower bound in the quantum setting. \par

\subsection{Hardness in precision}

A second axis of computational hardness is the scaling with respect to the precision $\varepsilon$. Here, we show that the exponential $1/\varepsilon$ scaling in the time complexity of the classical and quantum algorithms cannot be improved to polynomial when we allow the input state to be mixed, based on widely-believed complexity-theoretic assumptions. In particular, we prove the following theorem, shown schematically in Fig.~\ref{fig:comphardness}c. 

\begin{restatable}[Computational hardness]{theorem}{comphardthm}\label{thm:comp-hard}
    There exists a positive constant $c$ such that if there exists a $\poly(m)$-time quantum algorithm that takes as input the classical description of any (possibly mixed) density matrix $\rho$ of $n=2$ fermions on $m$ modes and outputs a Slater determinant $\ket{S}\in \mathcal{S}$ satisfying
    \begin{equation}
    \langle S|\rho|S\rangle \geq \mathsf{OPT} - 1/m^c
    \end{equation}
    with probability at least $2/3$, then $\mathsf{NP}\subseteq \mathsf{BQP}$.
    Moreover, if such a randomized classical algorithm exists, then $\mathsf{NP}\subseteq \mathsf{BPP}$.
\end{restatable}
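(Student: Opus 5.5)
We reduce from the quantum separability problem of Gharibian~\cite{gharibian2010strong}. Given the classical description of a bipartite state $\sigma$ on $\C^{d}\otimes\C^{d}$, decide whether $\sigma$ is separable or at trace distance at least $1/\poly(d)$ from the separable set. This is $\mathsf{NP}$-hard under polynomial-time (Turing) reductions. We use the standard equivalent formulation as a \emph{weak membership/optimization} problem: approximating $h_{\mathrm{Sep}}(A)=\max_{\ket{a},\ket{b}}\langle a,b|A|a,b\rangle$ to additive precision $1/\poly(d)$, for an observable $0\preceq A\preceq 1$, is $\mathsf{NP}$-hard. Via the ellipsoid/Liu–Gharibian equivalence between weak membership and weak linear optimization over $\mathrm{Sep}$, this reduces to approximating $\max_{\ket a,\ket b}\langle a,b|\rho|a,b\rangle$ for a density matrix $\rho=A/\Tr A$. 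The rescaling only costs a $\poly(d)$ factor in precision, since $\Tr A\le d^2$.

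The fermionic embedding $J$ takes $m=2d$ modes split into ``spin up'' modes $c_{i\uparrow}$ and ``spin down'' modes $c_{j\downarrow}$, $i,j\in[d]$. Define the isometry $J:\C^d\otimes\C^d\to\wedge^2\C^{2d}$ by $J\ket{i}\ket{j}=c_{i\uparrow}^\dagger c_{j\downarrow}^\dagger\ket0$, and set $\tilde\rho=J\rho J^\dagger$, a 2-fermion mixed state computable classically in $\poly(m)$ time. The key lemma is $\OPT(\tilde\rho)=\max_{a,b}\langle a,b|\rho|a,b\rangle$.

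The inequality $\ge$ is immediate, since $J(\ket a\ket b)=a_\uparrow^\dagger b_\downarrow^\dagger\ket0$ is a Slater with orthogonal orbitals. For $\le$, take any Slater $u\wedge v$. Decompose $u=u_\uparrow+u_\downarrow$ and $v=v_\uparrow+v_\downarrow$. Then the projection of $u\wedge v$ onto the range of $J$ is $u_\uparrow\wedge v_\downarrow+u_\downarrow\wedge v_\uparrow$, which under $J^{-1}$ is the matrix $M=u_\uparrow v_\downarrow^T-v_\uparrow u_\downarrow^T$. This matrix has rank at most 2, and we must show $\langle M|\rho|M\rangle\le h(\rho)$. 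The plan is to use the $U(2)$ freedom $(u,v)\mapsto$ a unitary mixture, which leaves $u\wedge v$ invariant up to phase, to bring the $2\times2$ Gram-type data into a form where $\|M\|_F^2\le 1$ and the Schmidt coefficients $s_1^2+s_2^2\le 1$. One natural gauge choice is the one diagonalizing $\langle u_\uparrow|v_\uparrow\rangle$-type overlaps. It then remains to bound the contribution of a rank-2 $M$. This is the main obstacle, because a rank-2 vector can overlap $\rho$ more than any product state.

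We first try an explicit calculation. We expect to show $s_1^2+s_2^2\le$ something like $\tfrac12$ whenever both Schmidt coefficients are nonzero. The intuition is that splitting each orbital between spins halves the weight: e.g. $u=(a_\uparrow+b_\downarrow)/\sqrt2$ and $v=(a'_\uparrow+b'_\downarrow)/\sqrt2$ give weight $1/2$ spread on two terms. If that holds, then the Schmidt decomposition $M=s_1 x_1y_1^T+s_2x_2y_2^T$ gives $\langle M|\rho|M\rangle\le (s_1+s_2)^2 h(\rho)$, and we need $(s_1+s_2)^2\le 1$.

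If the bound is not tight enough, the fallback is to modify the instance: replace $\rho$ by $\rho\otimes$ a fixed spin structure, or add a penalty, to enforce that near-optimal Slaters are approximately pure-spin. We can then still sandwich $\OPT(\tilde\rho)$ within $1/\poly(m)$ of $h(\rho)$ up to a known affine map. That suffices, because the problem only needs approximation to precision $1/m^{c}$, with $c$ chosen from Gharibian's polynomial precision.

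Finally, given the algorithm, we output $\ket S$ and estimate $\langle S|\tilde\rho|S\rangle$ exactly, classically, from the description. This yields an estimate within $m^{-c}$ of $\OPT=h(\rho)$, with probability $2/3$ amplified by repetition and taking the maximum. Deciding separability, an $\mathsf{NP}$-hard problem, therefore lies in $\mathsf{BQP}$, or in $\mathsf{BPP}$ for a randomized classical algorithm. Since the $\mathsf{NP}$-hardness is via a polynomial-time reduction, closure of $\mathsf{BQP}$ and $\mathsf{BPP}$ under such reductions gives $\mathsf{NP}\subseteq\mathsf{BQP}$ (resp. $\mathsf{BPP}$).
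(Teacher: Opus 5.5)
Your reduction, the embedding $J$, the key lemma $\OPT(\tilde\rho)=h_{\mathrm{Sep}}(\rho)$, its $\ge$ direction, and the wrap-up (rescaling by $\Tr A\le d^2$, evaluating $\langle S|\tilde\rho|S\rangle$ classically, choosing $c$) all match the paper. The gap is the $\le$ direction, which is the heart of the reduction. You flag it as the main obstacle and leave it unproved, resting it on an expectation that is false.

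\textbf{Why your proposed route fails.} The claim that $s_1^2+s_2^2\lesssim 1/2$ whenever both Schmidt coefficients are nonzero does not hold. Take $u=\cos\theta\, a_\uparrow+\sin\theta\, b_\downarrow$ and $v=\cos\phi\, c_\uparrow+\sin\phi\, d_\downarrow$ with unit $a\perp c$ and $b\perp d$, so that $u\perp v$. Then $M=\cos\theta\sin\phi\, a d^T-\sin\theta\cos\phi\, c b^T$ has singular values $\cos\theta\sin\phi$ and $\sin\theta\cos\phi$. For small $\theta>0$ and $\phi$ slightly below $\pi/2$, both are nonzero but $s_1^2+s_2^2\approx 1$. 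The $U(2)$ gauge fixing does not help either. The bound $\|M\|_F^2=s_1^2+s_2^2\le 1$ holds trivially and only yields $(s_1+s_2)^2\le 2$, i.e.\ $\langle M|\rho|M\rangle\le 2h$, which is too weak. The ``fallback'' of modifying the instance is too vague to count as an argument.

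\textbf{The missing idea.} You need to control the \emph{nuclear} norm of $M$, not its Frobenius norm. This follows directly from the two-term structure of $M$, with no Schmidt decomposition or gauge choice. Apply the triangle inequality to the two product terms, using $\|\rho^{1/2}(x\otimes y)\|\le \|x\|\|y\|\sqrt{h_{\mathrm{Sep}}(\rho)}$:
\[
\|\rho^{1/2}M\|\le\|\rho^{1/2}(u_\uparrow\otimes v_\downarrow)\|+\|\rho^{1/2}(v_\uparrow\otimes u_\downarrow)\|\le\sqrt{h_{\mathrm{Sep}}(\rho)}\,\bigl(\|u_\uparrow\|\|v_\downarrow\|+\|u_\downarrow\|\|v_\uparrow\|\bigr).
\]
Cauchy--Schwarz then gives
\[
\|u_\uparrow\|\|v_\downarrow\|+\|u_\downarrow\|\|v_\uparrow\|\le(\|u_\uparrow\|^2+\|u_\downarrow\|^2)^{1/2}(\|v_\downarrow\|^2+\|v_\uparrow\|^2)^{1/2}=1.
\]
The paper writes this as $\sin(\theta_1+\theta_2)\le 1$, and this is exactly its argument. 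Equivalently, $s_1+s_2=\|M\|_*\le 1$, which is precisely what your Schmidt route required. With this step inserted, your proof goes through and no modification of the instance is needed.

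As a minor point, the detour through weak membership and the ellipsoid equivalence is unnecessary. Gharibian's result directly gives $\mathsf{NP}$-hardness of deciding $h_{\mathrm{Sep}}(M)\ge\gamma+1/d^{c_0}$ versus $h_{\mathrm{Sep}}(M)\le\gamma-1/d^{c_0}$ for $0\preceq M\preceq I$, which is what the paper uses.
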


In particular, this proved absence of $\poly(m)$-time algorithms implies the absence of $\poly(1/\varepsilon)$ algorithms, by choosing $\varepsilon=1/m^c$. 
Hence, we have the $(1/\varepsilon)^{\omega(1)}$ runtime lower bound quoted in Table~\ref{tab:results}.

We outline the main idea of the proof, leaving the full argument to Appendix~\ref{ref:hardnessepsilon}. The basic idea is to reduce the problem to the quantum separability problem, which asks whether a bipartite quantum state $\rho$ is entangled or separable. When $\rho$ is located within an inverse polynomial (with respect to dimension) distance from the border of the set of separable quantum states, this problem was shown to be $\mathsf{NP}$-hard~\cite{gharibian2010strong}. To relate this problem with learning Slater determinants, we may regard two fermions in an even number of modes $m$ as two fermions in $m/2$ modes with opposite spins, and the Slater fidelity of the former corresponds to the separability of the latter when the spin degree of freedom is anti-symmetrized. In this case, the ``spatial'' wavefunction of the fermions in a Slater determinant state is in a product state. This gives a reduction between testing separability and the problem of learning the closest Slater determinant. This hardness applies to both the classical and quantum access models; indeed, even if we have access to the \textit{complete} classical description of the density matrix, solving the closest Slater problem will take superpolynomial time in $1/\varepsilon$ unless $\mathsf{NP}\subseteq \mathsf{BPP} / \mathsf{BQP}$, which is widely believed to be false.

\section{Optimization landscape}\label{sec:highfid}

So far, we have assumed that the fermionic state $\ket\psi$ is arbitrary. In this section, we study the situation where $\ket\psi$ is already known to be close to a Slater determinant. Loosely speaking, this corresponds to $\ket\psi$ being weakly correlated, and hence still captures many interesting quantum systems including filled-valence molecules and weakly interacting phases of matter. Our specific result is the following. Suppose that we have a trial Slater $\ket{G}$ satisfying $|\langle G|\psi\rangle|^2  > 2/3$; then $\ket{G}$ is close to the \textit{optimal} Slater determinant, in a manner to be made precise. This guarantee is useful for a non-convex optimization problem and is noteworthy for its independence of $m,n$, or model details.
\par 
Practically, this result could prove useful in the following scenario. Suppose we have a heuristic algorithm to find the closest Slater, such as gradient ascent or Hartree-Fock-style iteration. If we arrive at a Slater determinant $\ket{G}$ with fidelity $\mathrm{F}>2/3$, then it suffices to perform some local optimization to refine the ans\"atz $\ket{G}$ into the globally closest Slater determinant. In particular, if $\ket{G}$ satisfies a further stationarity condition, to be defined, then $\ket{G}$ is globally optimal. Since the original problem is a non-convex optimization, for which a local optimum need not be a global one, this guarantee can prove valuable. We will also prove that this condition cannot be improved, in the sense that the constant $2/3$ is optimal in general.\par

To proceed, let $|G\rangle$ be an arbitrary Slater and choose an orthonormal basis $g_1, \ldots, g_n$ such that $\ket{G} = c_{g_1}^\dagger\cdots c_{g_n}^\dagger|0\rangle$. Let $f_1,\ldots, f_{m-n}$ be an orthonormal basis for the unoccupied orbitals and define single-particle-hole excitations
\begin{equation}
    \ket{G_{a\mu}} \coloneqq c_{f_a}^\dagger c_{g_\mu}\ket{G}. 
\end{equation}
Let $\ket{\psi}$ be our fermionic state. We define $\alpha = |\langle G|\psi\rangle|$ and 
\begin{equation}
    \kappa = \alpha - \sqrt{2}\sqrt{1-\alpha^2},
\end{equation}
noting that $\alpha^2>2/3 \iff \kappa > 0$. Finally, we define 
\begin{equation}
    B_G \coloneqq\sqrt{\sum_{a\mu}|\langle G_{a\mu}|\psi\rangle|^2}.
\end{equation}
By Lemma~\ref{lem:brillouin}, $B_G=0$ is precisely the condition that fidelity is stationary with respect to $\ket{G}$. We refer to $B_G$ as the \textit{Brillouin residual}~\cite{Brillouin1933}.
\begin{theorem}[$2/3$ certificate] \label{thm:certificate}
Let $\ket{\psi}$ be a normalized fermionic state and let $\ket{G}$ be a Slater determinant. Define $\kappa$ and $B_G$ as above. Suppose
\begin{equation}
    |\langle G|\psi\rangle |^2 >2/3
\end{equation}
and let $\OPT = \max_{\ket{S}\in\mathcal{S}} |\langle S|\psi\rangle |^2$. Then
\begin{equation}
    \OPT \leq \left(|\langle G|\psi\rangle| + \frac{B_G^2}{2\kappa}\right)^2. 
\end{equation}
In particular, if $B_G = 0$ then $\ket{G}$ is the unique globally closest Slater determinant to $\ket\psi$, up to phase.  
\end{theorem}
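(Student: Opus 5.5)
The plan is to compare an arbitrary Slater $\ket{S}$ with $\ket{G}$ by sorting both $\ket{\psi}$ and $\ket{S}$ into excitation levels relative to $\ket{G}$. Let $P_0=\ket{G}\!\bra{G}$, let $P_1$ project onto $\mathrm{span}\{\ket{G_{a\mu}}\}$, and let $P_{\ge2}=1-P_0-P_1$. The span of singles does not depend on the choice of bases $\{g_\mu\},\{f_a\}$, so these projectors are well defined. Then $\|P_1\psi\|=B_G$ and $\|P_{\ge2}\psi\|=\sqrt{1-\alpha^2-B_G^2}\le\sqrt{1-\alpha^2}$. For a Slater $\ket{S}$ set $x=|\langle G|S\rangle|$, $y=\|P_1 S\|$ and $z=\|P_{\ge2}S\|$, so that $x^2+y^2+z^2=1$. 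By Cauchy--Schwarz on each block,
\begin{equation*}
|\langle S|\psi\rangle|\le \alpha x + B_G\, y + \sqrt{1-\alpha^2}\, z .
\end{equation*}

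The key geometric input I would prove as a lemma is
\begin{equation*}
z\le \sqrt2\,(1-x) \quad\text{for every Slater } \ket{S}.
\end{equation*}
To prove it, I would use principal angles between the $n$-dimensional occupied subspaces of $\ket{G}$ and $\ket{S}$. One can choose the basis $g_\mu$ of $G$'s occupied space and orthonormal $f_\mu\perp \mathrm{occ}(G)$ so that $\ket{S}$ is, up to phase, $\bigwedge_\mu(c_\mu g_\mu+s_\mu f_\mu)$ with $c_\mu=\cos\theta_\mu$ and $s_\mu=\sin\theta_\mu$. Expanding the wedge product gives $x=\prod_\mu c_\mu$ and $y^2=\sum_\mu s_\mu^2\prod_{\nu\ne\mu}c_\nu^2$. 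The remainder is $z^2=\sum_{|K|\ge2}\prod_{K}s^2\prod_{K^c}c^2$, which equals $1-\prod_\mu(c_\mu^2+s_\mu^2)$ minus the $|K|\le1$ terms. The lemma then reduces to the elementary inequality $1-x^2-y^2\le 2(1-x)^2$ in the variables $c_\mu^2\in[0,1]$.

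\textbf{This inequality is the main obstacle.} It is tight to leading order at small angles, where $z^2\approx\sum_{\mu<\nu}\theta_\mu^2\theta_\nu^2\le\tfrac12(\sum\theta^2)^2\approx 2(1-x)^2$. That tightness is exactly where the constant $2/3$ comes from. I would try induction on $n$, adding one angle at a time. If that fails, I would use the symmetric-function inequality $e_2(s^2)\le \tfrac12 e_1(s^2)^2$ together with $1-\prod c\ge$ suitable bounds, and handle the large-angle regime separately using $z\le 1$.

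Given the lemma, the estimate closes as follows. Put $t=1-x\in[0,1]$, and note $y^2\le 1-x^2\le 2t$. Then
\begin{equation*}
|\langle S|\psi\rangle|\le \alpha(1-t)+B_G\sqrt{2t}+\sqrt2\sqrt{1-\alpha^2}\,t=\alpha-\kappa t+B_G\sqrt{2t}.
\end{equation*}
Since $\kappa>0$, maximizing $B_G\sqrt{2t}-\kappa t$ over $t\ge0$ gives at most $B_G^2/(2\kappa)$. Squaring and maximizing over $S$ yields the stated bound on $\OPT$.

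For the last claim, suppose $B_G=0$. The bound becomes $|\langle S|\psi\rangle|\le\alpha-\kappa t$, with equality $|\langle S|\psi\rangle|=\alpha$ only if $t=0$. Then $x=1$, so $\ket{S}=\ket{G}$ up to phase. This gives uniqueness of the global optimum, and Lemma~\ref{lem:brillouin} identifies $B_G=0$ with stationarity.
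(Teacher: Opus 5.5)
\textbf{Gap: the inequality behind $z\le\sqrt2\,(1-x)$ is never proved.} Everything else in your proposal matches the paper's proof step for step. That includes the excitation-level decomposition of $\ket{S}$ and $\ket{\psi}$ relative to $\ket{G}$, the blockwise Cauchy--Schwarz bound $\alpha x+B_G y+\sqrt{1-\alpha^2}\,z$, and the bound $y^2\le 1-x^2\le 2(1-x)$. It also includes the maximization over $t=1-x$ and the uniqueness argument. The missing piece is the inequality $1-x^2-y^2\le 2(1-x)^2$, which is the paper's Lemma~\ref{lem:tail}. You label it the main obstacle and list strategies to try, but without it the bound on $\OPT$ does not follow. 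The paper closes it in one move. With $r_\mu=\tan^2\theta_\mu$ one has $y^2=x^2\sum_\mu r_\mu$ and $\prod_\mu(1+r_\mu)=x^{-2}$. Then $r\ge\log(1+r)$ gives $\sum_\mu r_\mu\ge -2\log x$, so $z^2=1-x^2(1+\sum_\mu r_\mu)\le 1-x^2(1-2\log x)$. This removes the dependence on the individual angles and leaves a one-variable claim: $g(x)=2(1-x)^2-1+x^2(1-2\log x)\ge 0$ on $(0,1]$. That holds because $g(1)=0$ and $g'(x)=4(x-1-x\log x)\le 0$. The case $x=0$ is trivial, since then $z\le 1\le\sqrt2$.

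Your first suggestion, induction on $n$, also works and avoids logarithms, so you were close. Append one factor $c\,g+s\,f$ to a Slater with overlap $x$, singles weight $y$, and tail $z_n$. Level $k$ of the new state collects level $k$ of the old one times $c$ and level $k-1$ times $s$, and these pieces are orthogonal. Hence exactly $z_{n+1}^2=c^2z_n^2+s^2(1-x^2)$, with new overlap $cx$. Assume inductively $z_n^2\le 2(1-x)^2$. A direct expansion gives $2(1-cx)^2-2c^2(1-x)^2-(1-c^2)(1-x^2)=(1-c)\bigl[(1+c)(1+x^2)-4cx\bigr]$. This is nonnegative because $1+c\ge 2c$ and $1+x^2\ge 2x$. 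With the base case $z_1=0$, this proves the lemma, and supplying either argument completes your proof.
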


We note that $2/3$ is a worst-case threshold, and is far from tight for
typical states. Across various ensembles of random and physical quantum states, we have numerically observed that spurious stationary points only occur at fidelities well below
$2/3$ (Fig.~\ref{fig:numerics}c and Sec.~\ref{sec:numerics}). As a
state-independent guarantee, however, the constant $2/3$ cannot be improved, as
we will show.
\begin{proof}[Proof of Theorem~\ref{thm:certificate}]
Let $\ket{H}$ be an arbitrary Slater determinant. We will compare $|\langle H|\psi\rangle|$ to $|\langle G|\psi\rangle|$. Following the notation of Lemma~\ref{lem:tail}, we decompose $\ket{H}$ into particle-hole sectors relative to $\ket{G}$:
\begin{equation}
  \ket{H}
  =  \ket{h_0}+\ket{h_1}+\ket{h_{\ge 2}}.
\end{equation}
Here $\ket{h_0}$ is proportional to $\ket{G}$, $\ket{h_1}$ lies in the single-particle-hole sector spanned by the $\ket{G_{a\mu}}$, and $\ket{h_{\geq 2}}$ is the remainder. We choose
the phase of $\ket{H}$ so that $\ket{h_0}=s\ket{G}$ with $s=|\langle H|G\rangle|$, and we'll write $p_1:=\norm{\ket{h_1}}^2,  q^2:=\norm{\ket{h_{\ge 2}}}^2.$ We perform the analogous particle-hole decomposition of
the target state $\ket\psi$ relative to $\ket G$, writing
\begin{equation}
  \ket{\psi}
  =
  \alpha\ket{G}
  +
  \ket{\psi_1}
  +
  \ket{\psi_{\ge 2}},
\end{equation}
where we've absorbed a possible phase into $\ket{G}$ and we note $\|\ket{\psi_1}\| = B_G$. We also note $\|\ket{\psi_{\geq 2}}\|\leq \sqrt{1-\alpha^2}$. Using these decompositions and applying the Cauchy-Schwarz inequality to each sector, we find
\begin{equation}
    |\langle H|\psi\rangle| \leq \alpha s+ B_G \sqrt{p_1} + \sqrt{1-\alpha^2} q. 
\end{equation}
We now make use of two bounds. First, $ p_1
  \le
  1-s^2
  \le
  2(1-s).$ Second, by Lemma~\ref{lem:tail}, $q
  \le
  \sqrt{2}\,(1-s).$ Then 
\begin{align}
  |\langle H|\psi\rangle|
  &\le
  \alpha s
  +
  B_G\sqrt{2(1-s)}
  +
  \sqrt{2}\sqrt{1-\alpha^2}(1-s)
  \nonumber\\
  &=
  \alpha-\kappa(1-s)+B_G\sqrt{2(1-s)}.
\end{align}
Since $\kappa>0$, this expression is uniformly bounded over
$0\leq s \leq 1$ by
\begin{equation}
  \max_{s\in [0,1]}
  \left\{\alpha
    -\kappa(1-s)+B_G\sqrt{2(1-s)}
  \right\}
  \leq\alpha+
  \frac{B_G^2}{2\kappa}.
\end{equation}
Therefore every Slater determinant $\ket{H}$ satisfies $ |\langle H|\psi\rangle|
  \le
  \alpha+\frac{B_G^2}{2\kappa}.$ Taking the maximum over $\ket{H}$ and squaring proves the claim. For uniqueness, we note that when $B_G=0$ the prior bound indicates $|\langle H|\psi\rangle|\leq \alpha - \kappa(1-s)$ with $\kappa>0$ and $s\leq 1$. Since $\OPT=\alpha^2$, any global maximum $\ket{H}$ attains $|\langle H|\psi\rangle|=\alpha$, forcing $\kappa(1-s)\leq 0$ and hence $s=1$. Thus $\ket{H}=\ket{G}$ up to phase. 
\end{proof}

\begin{corollary}[High-fidelity stationary Slaters are optimal]
\label{cor:local}
Let $\ket{G}$ be a Slater determinant with $|\langle G|\psi\rangle|^2>2/3.$ If $\ket G$ is stationary, meaning $  B_G=0,$ then $\ket G$ is the unique globally closest to $\ket\psi$, up to phase.
\end{corollary}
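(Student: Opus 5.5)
This corollary is a direct specialization of Theorem~\ref{thm:certificate}, so the plan is to invoke that theorem with $B_G=0$ and read off both the optimality and the uniqueness. The fidelity hypothesis $|\langle G|\psi\rangle|^2>2/3$ is exactly the hypothesis of the theorem. It is equivalent to $\kappa=\alpha-\sqrt{2}\sqrt{1-\alpha^2}>0$, so the quantity $B_G^2/(2\kappa)$ is well defined and the theorem applies verbatim. The stationarity hypothesis is interpreted through Lemma~\ref{lem:brillouin}, which identifies stationarity of the fidelity at $\ket{G}$ with the vanishing of the Brillouin residual $B_G$. Thus we may set $B_G=0$.

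Substituting $B_G=0$ into the bound of Theorem~\ref{thm:certificate} gives
\[
\OPT\le |\langle G|\psi\rangle|^2 .
\]
The reverse inequality $\OPT\ge|\langle G|\psi\rangle|^2$ holds trivially because $\ket{G}\in\mathcal{S}$. Hence $\ket{G}$ attains $\OPT$ and is a global maximizer.

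For uniqueness, take any Slater $\ket{H}$ with $|\langle H|\psi\rangle|^2=\OPT=\alpha^2$. We return to the intermediate inequality established in the proof of Theorem~\ref{thm:certificate}, which for $B_G=0$ reads
\[
|\langle H|\psi\rangle|\le\alpha-\kappa(1-s), \qquad s=|\langle H|G\rangle|.
\]
Since $|\langle H|\psi\rangle|=\alpha$ and $\kappa>0$, this forces $s=1$. Two unit vectors with $|\langle H|G\rangle|=1$ coincide up to a global phase, so $\ket{H}=e^{i\theta}\ket{G}$.

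There is no real obstacle here; the one point needing care is matching the notion of ``stationary'' in the corollary to $B_G=0$. I would rely on Lemma~\ref{lem:brillouin} for this equivalence, noting that first-order variations of a Slater within $\mathcal{S}$ are generated by single particle-hole excitations $c_{f_a}^\dagger c_{g_\mu}$. Consequently, the derivative of the fidelity vanishes in all such directions exactly when $\langle G_{a\mu}|\psi\rangle=0$ for all $a,\mu$, given $\alpha\neq 0$, which holds here.
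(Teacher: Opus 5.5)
Your proposal is correct and follows the paper's argument: the corollary is the $B_G=0$ case of Theorem~\ref{thm:certificate}. Optimality comes from $\OPT\le\alpha^2$, and uniqueness comes from the intermediate bound $|\langle H|\psi\rangle|\le\alpha-\kappa(1-s)$ with $\kappa>0$ forcing $s=1$, exactly as in the final paragraph of that theorem's proof.
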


The above theorem motivates the following question: can the threshold of $2/3$ fidelity be improved? That is, is there a lower number for which the theorem and corollary hold? We answer this in the negative in the theorem below, establishing $2/3$ as the sharp threshold above which the vanishing of $B_G$ implies global optimality for the closest Slater determinant problem. \par

\begin{theorem}[Sharpness of the $2/3$ threshold]
\label{thm:sharp}

\begin{figure*}
    \centering    \includegraphics[width=\linewidth]{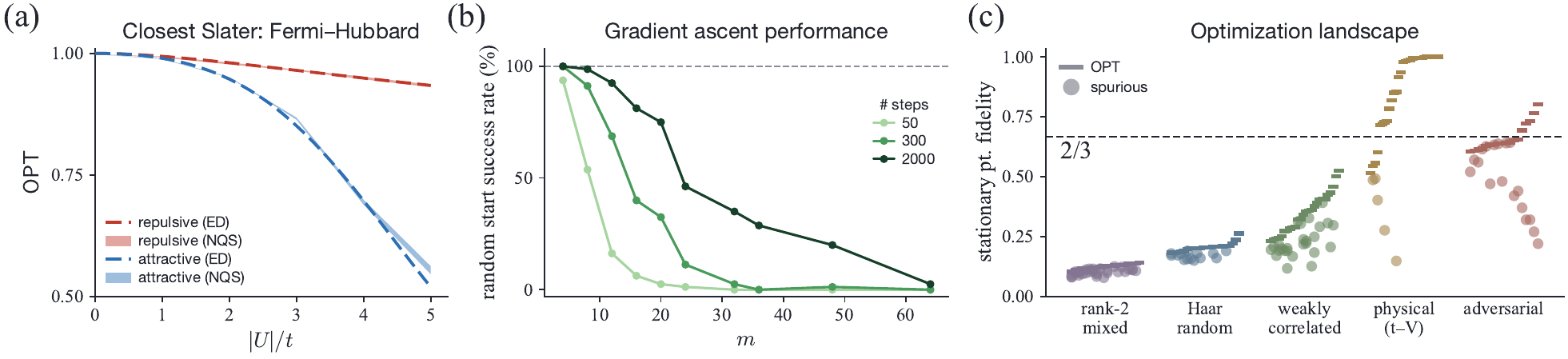}    \caption{\textbf{Closest Slater implementation and optimization
landscape. } (a) Fidelity $\OPT = \max_{\ket{S}} |\langle S|\psi\rangle|^2$ of
the $4\times4$ Fermi--Hubbard ground state (cylinder, $n_\uparrow=n_\downarrow=2$) to the closest Slater versus $|U|/t$, for both repulsive and attractive  interactions. Dashed
lines are exact diagonalization (ED) and shaded bands are from trained neural quantum states (NQS), with width the Monte Carlo sampling standard deviation. (b) Fraction of randomly initialized optimizations, for a simple gradient ascent optimizer, that reach the global
$\OPT$, versus system size $m = 2L_xL_y$  ($n_\uparrow=n_\downarrow=2,
U/t = 2$). At any fixed $\#$ steps, as $m$ increases gradient ascent stalls and the success rate collapses, whereas our algorithm is guaranteed to reach $\OPT$ (dashed line). (c) Fidelities of spurious
stationary Slater determinants versus $\OPT$ for five fermionic state
ensembles ($n=5$, $m=10$): random rank-two mixed states, Haar-random pure states, weakly correlated (i.e. few-Slater superpositions), ground states of $t-V$ models, and the engineered adversarial states (c.f. Thm.~\ref{thm:sharp}). States ordered by increasing $\OPT$ within each ensemble.  Details in App.~\ref{app:numerics}. No spurious stationary point fidelity exceeds $2/3$, as guaranteed by Thm.~\ref{thm:certificate}, with most well below. }    \label{fig:numerics}
\end{figure*}

For every $c<2/3$, there exist integers $m,n$, a normalized $n$-fermion
state $\ket{\psi}$, and a Slater determinant $\ket{G}$ such that $|\langle G|\psi\rangle|^2>c$ and $B_{G}=0$ but $\ket{G}$ is not globally closest to $\ket{\psi}$. Hence no $m,n$-independent statement of the form of Theorem \ref{thm:certificate} can hold for $c<2/3$.
\end{theorem}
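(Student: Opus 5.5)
The plan is to build an explicit counterexample whose structure makes the proof of Theorem~\ref{thm:certificate} tight. That proof loses nothing in three places: the Cauchy--Schwarz step in each particle-hole sector, the bound $q\le\sqrt2(1-s)$ from Lemma~\ref{lem:tail}, and $B_G=0$. So we will take $\ket\psi$ with no single-excitation component, which gives $B_G=0$ automatically. We will also align the $\ge 2$ sector of $\ket\psi$ with that of a competing Slater $\ket H$ whose tail nearly saturates Lemma~\ref{lem:tail}.

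Concretely, fix a Slater $\ket H$ and write its particle-hole decomposition relative to $\ket G$ as $\ket H=s\ket G+\ket{h_1}+\ket{h_{\ge2}}$, with $s=|\langle H|G\rangle|$ and $q=\|\ket{h_{\ge2}}\|$. Define
\begin{equation}
\ket\psi=\alpha\ket G+\sqrt{1-\alpha^2}\,\frac{\ket{h_{\ge2}}}{q}.
\end{equation}
This state is normalized. Its single-excitation component vanishes, so $B_G=0$ by Lemma~\ref{lem:brillouin}, and $|\langle G|\psi\rangle|^2=\alpha^2$. Moreover $|\langle H|\psi\rangle|=\alpha s+\sqrt{1-\alpha^2}\,q$. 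Therefore $\ket G$ is a stationary point that is not globally closest as soon as $\sqrt{1-\alpha^2}\,q>\alpha(1-s)$. If we can make $q/(1-s)$ arbitrarily close to $\sqrt2$, this condition reads $\sqrt2\sqrt{1-\alpha^2}>\alpha(1+o(1))$, i.e.\ $\alpha^2<2/3-o(1)$. Given $c<2/3$, we then pick $\alpha^2\in(c,2/3)$ and an $\ket H$ with ratio close enough to $\sqrt2$.

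The main step, and the expected obstacle, is to exhibit Slaters with $q/(1-s)\to\sqrt2$. For $n=2$ the ratio is at most $1$, so we need many particles. I would use $k$ independent ``pair rotations''. Take $n=2k$ occupied orbitals $g_{2j-1},g_{2j}$ and $m=4k$, with fresh virtual orbitals $f_{2j-1},f_{2j}$. Set $\ket H=\bigwedge_{i}(\cos\theta\, c_{g_i}^\dagger+\sin\theta\, c_{f_i}^\dagger)\ket0$. This expands as a tensor-like product over pairs, each factor contributing $\cos^2\theta$ (the reference), two singles of amplitude $\cos\theta\sin\theta$, and a double of amplitude $\sin^2\theta$.

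Then $s=\cos^{2k}\theta$. In the regime $\theta\to0$ with $t=k\theta^2$ small, we get $1-s\approx k\theta^2$. The $\ge2$ sector collects two kinds of terms: $k$ intra-pair doubles of weight $\approx\theta^4$ each, and about $\binom{k}{2}\cdot4$ cross-pair products of singles of weight $\approx\theta^4$ each. Hence $q^2\approx(k+2k^2)\theta^4$ and $q^2/(1-s)^2\to2$ as $k\to\infty$. The remaining work is to control the higher-order corrections: send $\theta\to0$ with $k\theta^2\to0$ and $k\to\infty$, and check the signs and orthogonality of the excitation determinants so that the norms add as claimed. This is routine bookkeeping once the expansion is organized by excitation rank.

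Finally, we combine these pieces. We choose $k,\theta$ so that $q/(1-s)>\alpha/\sqrt{1-\alpha^2}$, which is possible because $\alpha/\sqrt{1-\alpha^2}<\sqrt2$ for $\alpha^2<2/3$. This yields $|\langle H|\psi\rangle|>\alpha=|\langle G|\psi\rangle|$ with $B_G=0$ and $|\langle G|\psi\rangle|^2>c$, which proves the sharpness claim.
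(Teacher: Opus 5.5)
Your proposal is correct and is essentially the paper's proof: your $\ket H$ is the paper's $\ket{H_t}$ with $t=\tan\theta$ and $n=2k$ (the grouping into pairs is only bookkeeping), and your $\ket\psi=\alpha\ket G+\sqrt{1-\alpha^2}\,\ket{h_{\ge2}}/q$ is exactly the paper's adversarial state, with the same choice $\alpha^2\in(c,2/3)$ and the same final comparison. One small arithmetic slip, which does not affect the conclusion: the leading $\ge2$ weight is $\bigl(k+4\binom{k}{2}\bigr)\theta^4=(2k^2-k)\theta^4$, not $(2k^2+k)\theta^4$, so $q^2/(1-s)^2\to 2-1/k=2(n-1)/n$ as $\theta\to0$ at fixed $k$, which matches the paper (a ratio above $2$ would contradict Lemma~\ref{lem:tail}).
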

\begin{proof}
    We'll construct an explicit counterexample. Take $m=2n$ fermionic modes with orthonormal orbitals $e_1,\ldots, e_n, f_1,\ldots, f_n$ and let $\ket{G} = c_{e_1}^\dagger \cdots c_{e_n}^\dagger \ket{0}$. For $t>0$, define the rotated Slater determinant
\begin{equation}
     \ket{H_t} = \prod_{i=1}^n\frac{c_{e_i}^\dagger + tc_{f_i}^\dagger}{\sqrt{1+t^2}}\ket{0}.   
\end{equation}
We'll decompose $\ket{H_t}$ into particle-hole sectors relative to $\ket{G}$: 
\begin{equation}
    \ket{H_t} = s_t\ket{G} + \ket{h_{t,1}} + \ket{h_{t,\geq 2}}.
\end{equation}
From explicit calculation, we have $s_t = (1+t^2)^{-n/2}$ and 
\begin{equation}
     q_t \coloneqq \|\ket{h_{t,\geq 2}}\| = [1-(1+nt^2)s_t^2 ]^{1/2}.
\end{equation}
We also note that $\lim_{t\to 0} \frac{q_t}{1-s_t} = \sqrt{2(n-1)/n}$ which approaches $\sqrt{2}$ as $n\to\infty$. Now choose $\alpha$ such that $ c < \alpha^2 < 2/3.$ Since $\alpha^2 < 2/3$, we have $ \frac{\alpha}{\sqrt{1-\alpha^2}} < \sqrt{2}.$ We choose $n$ large enough and $t>0$ small enough that $ \frac{q_t}{1-s_t} >\frac{\alpha}{\sqrt{1-\alpha^2}}$ and define the state
\begin{equation}
    \ket{\psi} = \alpha \ket{G} + \frac{\sqrt{1-\alpha^2}}{q_t}\ket{h_{t,\geq 2}},
\end{equation}
which is properly normalized, $\langle \psi|\psi\rangle =1$. Moreover, $|\langle G|\psi\rangle|^2 = \alpha^2 > c$ and $B_G = 0$, since $\ket{\psi}$ has no single-particle-hole component relative to $\ket{G}$. However, $\ket{G}$ is not the globally closest Slater to $\ket{\psi}$; rather, we have
\begin{equation}
    \begin{aligned}
        |\langle H_t|\psi\rangle|^2 &=(\alpha s_t + \sqrt{1-\alpha^2} q_t)^2\\
        &> (\alpha s_t + \alpha(1-s_t))^2= \alpha^2
    \end{aligned}
\end{equation}
so $\ket{H_t}$ is closer. This construction provides the desired counterexample. 
\end{proof}

We will refer to the states $\ket{\psi}$ defined above as ``adversarial" states, as they are constructed to be adversarially difficult for a closest-Slater algorithm. Together with Theorem~\ref{thm:certificate}, the theorem above establishes F$=2/3$ as a sharp threshold in the optimization landscape. Above it, every stationary point is the global maximum, while below it, stationary points need not be globally optimal. We have stated these high-fidelity results for pure states for clarity of exposition, but these results extend to the mixed-state setting, with the residual $B_G$ replaced by a suitable mixed-state analog. Moreover, the algorithmic and hardness results carry over as well. We prove these extensions in Appendix~\ref{app:mixed}.

\section{Numerical results}
\label{sec:numerics}

In this section, we apply the closest Slater classical algorithm to ground states of interacting fermionic models and characterize the optimization landscape. We summarize results here and leave the details of the numerics to Appendix~\ref{app:numerics}. First, we study the Fermi--Hubbard model 
\begin{equation}
    H \;=\; -t\sum_{\langle ij\rangle,\sigma} c^\dagger_{i\sigma}c_{j\sigma} \;+\; U\sum_i n_{i\uparrow}n_{i\downarrow}
\end{equation}
on an $L_x\times L_y$ square lattice with  $(n_\uparrow,n_\downarrow)$ particles of each spin. At each $U/t$, we solve for the ground state using exact diagonalization or neural quantum states. We then evaluate $\OPT$ using the classical algorithm described in Sec.~\ref{sec:algorithm}, which involves computing the 1-RDM, restricting to an active space, and searching over a covering net of the corresponding manifold of Slaters. Fig.~\ref{fig:1}c shows the closest Slater fidelity $\OPT$ of the ground state as a function of $|U|/t$ using exact diagonalization. Fig.~\ref{fig:numerics}c shows the same result (reproduced) as well as the result using a variationally optimized neural quantum state.
\par 
The value of $\OPT$ is unity in the noninteracting limit and decreases as $|U|$ grows and correlations build. $\OPT$ decreases more rapidly for attractive interactions, $U<0$, than for repulsive interactions $U>0$. This is a manifestation of the pairing correlations of the ground state of the attractive Fermi-Hubbard model, which preclude high-fidelity representation by a Slater determinant. \par 

Our algorithm requires only sample access, as demonstrated in
Fig.~\ref{fig:numerics}a where $\OPT$ is calculated using Monte Carlo sampling of a neural quantum state (NQS). The NQS representation~\cite{carleo2017solving} provides a versatile ans\"atz for strongly correlated electron wavefunctions and allows Monte Carlo estimation of sparse observables. In practice, NQS have proven competitive across condensed matter, quantum chemistry, and nuclear physics~\cite{luo2019backflow,cassella2023discovering,chen2024empowering,rende2024simple,zhao2024empirical,teng2025solving,geier2025self,pfau2020ab,hermann2020deep,pfau2024accurate,yang2023deep,adams2021variational,gnech2024distilling,fore2025investigating}. Motivated by this broad applicability, here we demonstrate that computing $\OPT$ by sampling a NQS can be straightforward and informative. We choose a linear combination of Slater-Jastrow terms as the NQS ans\"atz, and results are close to ED with the deviation originating from imperfect training and the width originating from Monte Carlo standard deviation. \par

In Fig.~\ref{fig:numerics}b we use $\OPT$ across various system sizes $m=2L_xL_y$ as the ground truth against
which to benchmark heuristic optimization. We run a simple implementation of local gradient ascent on the
Slater manifold from random initializations and record the fraction of starts
that reach the global $\OPT$. This fraction falls sharply as the lattice grows for any fixed number of iteration steps,
from near unity at the smallest sizes to near zero at $m=64$. This gives evidence that a na\"ive gradient ascent for solving this optimization problem may only be reliable at small system sizes. In contrast, the classical algorithm we propose comes with guarantees and has $\mathrm{poly}(m)$ scaling for fixed number of particles and precision.  
\par 

In Fig.~\ref{fig:numerics}c, we capture aspects of the optimization landscape for the closest Slater problem in various ensembles of fermionic states. In particular, we study rank-2 random mixed states, Haar-random pure states, weakly correlated states (comprising superpositions of three to six Slaters with comparable weights), ground states of spinless disordered $t-V$ chains, and the adversarial states constructed in the proof of Theorem~\ref{thm:sharp}. For each state we plot its value of $\OPT$ and the fidelities of spurious stationary points attained as optimization endpoints of randomly initialized simple local gradient ascent. We note that the results shown are representative within each ensemble. Across every ensemble, all spurious stationary points lie strictly below $2/3$, consistent with Theorem~\ref{thm:certificate} and its mixed-state extension. 

The optimization landscapes have several noteworthy features. The spurious stationary points lie well below $2/3$ for the non-adversarial ensembles: those found are at most $\approx 0.13, 0.19, 0.39$, and $0.49$ for mixed, Haar, weakly correlated, and physical. The latter is due to a translation symmetry-broken charge density wave ground state at large repulsive interaction, which at finite sizes results in the ground state assuming the form of an approximate superposition of Slaters (an exact superposition in the absence of disorder), and hence two closely competing closest Slaters. On the whole, we found it rare for $t-V$ model ground states to exhibit spurious stationary points. This suggests that the optimization landscape for physical states may be benign in general, though more comprehensive study is needed. Meanwhile, the random and weakly correlated state ensembles exhibit rugged landscapes with many spurious stationary points; however, these are less relevant to practical applications and the spurious stationary points concentrate at $\lesssim 0.25$ at this system size ($n=5,m=10$). The adversarial states, by contrast,
drive spurious stationary points up to the threshold ($0.642 \approx 2/3$ at
$n=16$). While confirming Theorem~\ref{thm:sharp}, this suggests that $2/3$ may be a worst case threshold.

\section{Discussion}
\label{sec:discussion}

We have studied the task of finding the closest Slater determinant to a given $n$-fermion state on $m$ modes, with closeness measured by fidelity. On the classical side, we have provided an algorithm (Theorem~\ref{thm:upper}) achieving this within $\varepsilon$ additive error with time complexity $m^{\text{poly}(n,1/\varepsilon)}$. Moreover, Theorem~\ref{thm:lower} establishes a matching parameterized hardness lower bound under the exponential time hypothesis, so the exponential-in-$n$ dependence is essentially unavoidable classically. On the quantum side, Theorem~\ref{thm:quantum} provides an algorithm with sample complexity to $\text{poly}(m,n,1/\varepsilon)$ via the quantum threshold search algorithm of Ref.~\cite{buadescu2021improved}. In both the classical and quantum settings, we have shown that the time complexity scaling with precision cannot be substantially improved. We have further established that the optimization landscape associated to this problem exhibits an interesting transition at a fidelity of $2/3$. In particular, Theorem~\ref{thm:certificate} certifies that any locally-optimal Slater with fidelity above $2/3$ is the globally optimal one. \par 

Our problem is an instance of the broader learning task of agnostic tomography~\cite{grewal2026agnostic,chen2025stabilizer} and is adjacent to several other optimization problems. It is similar to the Hartree-Fock approximation, which minimizes the energy of a fermionic Hamiltonian over Slater determinants and is the standard mean-field starting point in chemistry and condensed matter~\cite{RevModPhys.35.496,echenique2007mathematical}, but here a given state $\rho$ is the central object rather than a Hamiltonian. Moreover, a close bosonic analog to our problem is the closest product state learning problem studied by Bakshi et al.~\cite{bakshi2025learning}. We note that Bakshi et al. focus on a quantum access model obtain more favorable complexities than found here. Moreover, in the special case $\OPT = 1$, when the state is exactly a Slater determinant, the algorithms of Aaronson and Grewal~\cite{aaronson_et_al:LIPIcs.TQC.2023.12}, O'Gorman~\cite{o2022fermionic}, and Christensen and Zhao~\cite{christensen2026learning} achieve $\text{poly}(m,n,1/\varepsilon)$ runtime using the 1-RDM alone, though these algorithms do not generally extend to $\OPT < 1$. \par

There are several interesting directions for future study. First, it would be interesting to relax particle number conservation and study the problem of learning the closest fermionic Gaussian state \cite{mele2025efficient}, which could better accommodate phenomena such as superconductivity; we note, however, that superconducting order can be straightforwardly probed at fixed particle number via off-diagonal long-range order. Second, our quantum protocol attains
$\poly(n,1/\varepsilon)$ sample complexity but retains an exponential-in-$n$
runtime, and it remains open whether the quantum access model admits a
subexponential-time algorithm or, conversely, a runtime lower bound on $n$ mirroring
the classical case. Third, it would be interesting to apply our algorithm at reasonable $n$ to strongly correlated systems to detect the possibility of ``free fermions in disguise", or unexpectedly high fidelity with the Slater manifold. Finally, a number of natural extensions suggest themselves, such as generalizing to the presence of symmetry-averaged Slater determinants, which may have $\OPT \ll 1$ but are still qualitatively Slater-like, or other physically motivated metrics besides fidelity.

\par 
\textit{Note added.---} Upon completing this work, we became aware of independent and concurrent work also addressing the problem of learning the closest Slater determinant~\cite{ewintang}. The authors report similar results including a high-fidelity algorithm and a low-fidelity algorithm with a superpolynomial dependence in the number of particles. 

\par 
\textit{Acknowledgments. } We thank Jason Alicea, Patrick Ledwith, Gal Shavit, and Andrew Zhao for helpful comments. NP and HZ acknowledge Anthropic's Claude for assistance with literature search, coding, and technical lemmas. NP acknowledges support
from the Walter Burke Institute for Theoretical Physics at Caltech.
HZ acknowledges support from the Institute for Quantum Information and Matter, an NSF Physics Frontiers Center (PHY-2317110).
DDD was supported by the National Science Foundation Graduate Research Fellowship under Grant No. 2141064.
This work is supported by the National Science Foundation under Cooperative Agreement PHY-2019786 (The NSF AI Institute for Artificial Intelligence and Fundamental Interactions, http://iaifi.org/).

\bibliographystyle{apsrev4-2}
\bibliography{bib}
\newpage
\clearpage
\appendix

\section{Computational hardness}
\label{sec:comphard}

\subsection{Classical hardness in $n$ from parameterized complexity}
\label{sec:classicalhardness}

\classicalcomplexitythm*

\begin{proof}[Proof of Theorem~\ref{thm:lower}]
We prove this by embedding the $n$-\textsc{Multicolored Clique} problem into the problem of learning the closest Slater. Assume we have a graph with color classes $V_1,\ldots, V_n$, each of size $p$, as in Fig.~\ref{fig:mcclique}, and we are
promised that there is zero or one $n$-multicolored clique. Under Assumption~\ref{ass:rETH}, there is no randomized
algorithm running in time $f(n)p^{o(n)}$ which distinguishes these two
cases.
\par 
We first augment the graph by adding one extra vertex $d_i$ to each
color class $V_i$. We add edges pairwise between all the $d_i$, but no edges connecting the $d_i$ to the original vertices. Hence the $d_i$ form a disjoint $n$-multicolored clique
\begin{equation}
  D=(d_1,\ldots,d_n).
\end{equation}
After this augmentation, the graph is promised to contain either only the disjoint
clique $D$, or the disjoint clique $D$ together with exactly one other $n$-multicolored clique $C$. We call this augmented graph $G$.

Next, we encode this graph into a fermionic state as follows. We set the particle number
to be $n$ and assign one fermionic mode to each vertex of the graph. Thus the
total number of modes is
\begin{equation}
 m= |G|=n(p+1).
\end{equation}
Next, let $\mathcal{K}(G)$ denote the set of $n$-multicolored cliques distinct from $D$ in the graph, so $|\mathcal{K}(G)| = 0$ or $1$. Each subgraph $A\subset G$ of size $|A|=n$ naturally defines an $n$-particle Slater determinant, which we will denote $\ket{A}$. Using this convention, we define an unnormalized fermionic state
\begin{equation}
    \ket{\Phi_G} = \ket{D} + \alpha \sum_{C\in \mathcal{K}(G)} \ket{C},
\end{equation}
where $\alpha>1$ is chosen so that $\frac{\alpha^2}{1+\alpha^2} > \lambda$. We depict $\ket{\Phi_G}$ heuristically in Fig.~\ref{fig:encoding}.  \par 

\begin{figure}[t]
\centering
\begin{tikzpicture}[scale=0.5]
  \definecolor{pcA}{HTML}{F2A6A6} 
  \definecolor{pcB}{HTML}{A9C8EE} 
  \definecolor{pcC}{HTML}{A9DDBF} 
  \definecolor{pcD}{HTML}{CBB4E8} 
  \definecolor{pcE}{HTML}{F1D49B} 
  \definecolor{cliquecol}{HTML}{23252F}
  \tikzset{
    bg/.style    ={cliquecol!22, line width=0.7pt},
    cle/.style   ={cliquecol, line width=0.9pt, line cap=round},
    vtx/.style   ={circle, draw=cliquecol!35, line width=0.5pt,
                   minimum size=8pt, inner sep=0pt},
    vclq/.style  ={circle, draw=cliquecol, line width=1.2pt,
                   minimum size=10pt, inner sep=0pt},
  }

  \coordinate (v0)  at (0.000,3.400);
  \coordinate (v1)  at (1.383,3.106);
  \coordinate (v2)  at (2.527,2.275);
  \coordinate (v3)  at (3.234,1.051);
  \coordinate (v4)  at (3.381,-0.355);
  \coordinate (v5)  at (2.944,-1.700);
  \coordinate (v6)  at (1.998,-2.751);
  \coordinate (v7)  at (0.707,-3.326);
  \coordinate (v8)  at (-0.707,-3.326);
  \coordinate (v9)  at (-1.998,-2.751);
  \coordinate (v10) at (-2.944,-1.700);
  \coordinate (v11) at (-3.381,-0.355);
  \coordinate (v12) at (-3.234,1.051);
  \coordinate (v13) at (-2.527,2.275);
  \coordinate (v14) at (-1.383,3.106);

  \foreach \a/\b in {%
    0/3,0/4,0/6,0/7,0/13,1/13,2/3,2/4,2/8,2/11,3/4,3/5,3/6,3/7,3/8,3/10,%
    4/5,4/7,4/8,4/13,4/14,5/8,6/7,6/10,6/13,7/13,8/13,9/11}
    {\draw[bg] (v\a)--(v\b);}

  \foreach \a/\b in {1/5,1/9,1/12,1/14,5/9,5/12,5/14,9/12,9/14,12/14}
    {\draw[cle] (v\a)--(v\b);}

  \foreach \i/\c in {0/pcD,1/pcE,2/pcB,3/pcC,4/pcE,5/pcD,6/pcA,7/pcE,%
    8/pcA,9/pcC,10/pcC,11/pcD,12/pcA,13/pcB,14/pcB}
    {\node[vtx, fill=\c] at (v\i) {};}

  \foreach \i/\c in {1/pcE,5/pcD,9/pcC,12/pcA,14/pcB}
    {\node[vclq, fill=\c] at (v\i) {};}
\end{tikzpicture}
\caption{\textbf{Computational hardness mapping}. A graph $G$ whose vertices have $n=5$ different colors with $p=3$ vertices of each color. A multicolored clique is a choice of one vertex per color with all
$\binom{5}{2}$ pairs connected. Under the exponential time hypothesis (ETH), no algorithm can detect a multicolored clique in time $f(n)p^{o(n)}$, ruling out anything substantially faster than exhaustive search over the $p^n$ cliques. Exactly
one multicolored clique is present here, highlighted in bold. In
Theorem~\ref{thm:lower}, we encode $n$-\textsc{Multicolored Clique} into the problem of finding the closest Slater determinant to a suitably constructed fermionic state
$\ket{\Phi_G}$ (Fig.~\ref{fig:encoding}).}
\label{fig:mcclique}
\end{figure}

Next, we establish that $\ket{\Phi_G}$ indeed has a succinct classical description. To see this, we fix any set $I \subset [m]$ of size $|I|=n$ and note that
\begin{equation}
  \langle I|\Phi_G\rangle
  =
  \begin{cases}
    1, & I=D,\\
    \alpha, & I\in\mathcal K(G),\\
    0, & \text{otherwise},
  \end{cases}
\end{equation}
which is efficient to compute. Indeed, checking $I=D$ is trivial while
checking $I\in\mathcal K(G)$ is an $O(n^2)$ operation. We note that the norm of the state $\|\Phi_G\|$ is by necessity unknown to the succinct description, as computing it would involve a sum over all exponentially many $I\subset [m]$.\par

Under our assumptions on the graph, we have two possibilities:
\begin{equation}
 |\Phi_G\rangle=
\begin{cases}
   |D\rangle & |\mathcal{K}(G)| = 0\\
    |D\rangle+\alpha|C\rangle & |\mathcal{K}(G)| = 1,
\end{cases}
\end{equation}
where $C$ is the unique $n$-multicolored clique distinct from $D$. Since $D$ and
$C$ have orthogonal one-particle subspaces, by
Lemma~\ref{lem:Superposition}, we have
\begin{equation}
  \OPT
  =
  \begin{cases}
    1, & |\mathcal{K}(G)| = 0,\\
    \alpha^2/(1+\alpha^2), & |\mathcal{K}(G)| = 1.
  \end{cases}
\end{equation}
Moreover, the optimum is attained uniquely, up to phase, by $\ket{D}$ in the
first case and by $\ket{C}$ in the second case. By our choice of $\alpha$, the promise
$\OPT\ge \lambda$ holds in both cases.\par 
At this point, we choose a constant $\varepsilon_\lambda$ as follows: 
\begin{equation}
\varepsilon_\lambda
  =
  \min\left\{
    \frac{7}{16},  \frac{\alpha^2-(\alpha-(\alpha-1)/4)^2}{1+\alpha^2}
  \right\}.
\end{equation}
This depends implicitly on $\lambda$ since the choice of $\alpha$ depended on $\lambda$. The reason for this particular choice will be evident in the next steps.

Now suppose, for contradiction, we have a randomized algorithm that outputs a Slater determinant $\ket{S}$ which satisfies Eq.~\eqref{eq:Scontradiction}, running in time $f(n)m^{o(n)}$ with success probability at least $0.99$. We will show that this algorithm can distinguish $|\mathcal{K}(G)| =0,1$, contradicting Assumption~\ref{ass:rETH}.\par

\begin{figure}
\centering
\begin{tikzpicture}[scale=0.8]
  \definecolor{pcA}{HTML}{F2A6A6}
  \definecolor{pcB}{HTML}{A9C8EE}
  \definecolor{pcC}{HTML}{A9DDBF}
  \definecolor{pcD}{HTML}{CBB4E8}
  \definecolor{pcE}{HTML}{F1D49B}
  \definecolor{cliquecol}{HTML}{23252F}
  \tikzset{
    delim/.style ={cliquecol, line width=1.4pt, line cap=round, line join=round},
    cle/.style   ={cliquecol, line width=0.8pt, line cap=round},
    ambe/.style  ={cliquecol!16, line width=0.6pt},
    cv/.style    ={circle, draw=cliquecol, line width=0.8pt, minimum size=7pt, inner sep=0pt},
    dv/.style    ={rectangle, draw=cliquecol, line width=0.8pt, minimum size=7pt, inner sep=0pt},
    fv/.style    ={circle, draw=cliquecol!35, line width=0.2pt, minimum size=6pt, inner sep=0pt, fill=cliquecol!12},
  }

  \node[anchor=east] at (2.40,0) {\Large $\ket{\Phi_G}\;=\;\alpha$};

  \draw[delim] (2.75,-0.92)--(2.75,0.92);
  \coordinate (C0) at (3.750,0.660); \coordinate (C1) at (3.122,0.204);
  \coordinate (C2) at (3.362,-0.534);\coordinate (C3) at (4.138,-0.534);
  \coordinate (C4) at (4.378,0.204);
  \coordinate (I0) at (3.750,0.10);  \coordinate (I1) at (3.970,-0.18);
  \coordinate (I2) at (3.510,-0.10); \coordinate (I3) at (3.800,-0.40);
  \coordinate (I4) at (3.700,0.40);
  \coordinate (O0) at (4.330,0.52);  \coordinate (O1) at (3.130,0.42);
  \coordinate (O2) at (4.470,-0.42); \coordinate (O3) at (3.200,-0.52);
  \foreach \a/\b in {I0/C0,I0/C2,I4/C0,I4/C1,I2/C1,I2/C2,I1/C4,I1/C3,I3/C2,I3/C3,
                     O0/C0,O0/C4,O1/C1,O2/C3,O2/C4,O3/C2,I0/I4,I2/I3,I1/O2,O1/I2}
    {\draw[ambe] (\a)--(\b);}
  \foreach \p in {I0,I1,I2,I3,I4,O0,O1,O2,O3}{\node[fv] at (\p) {};}
  \foreach \a/\b in {0/1,0/2,0/3,0/4,1/2,1/3,1/4,2/3,2/4,3/4}{\draw[cle] (C\a)--(C\b);}
  \node[cv, fill=pcA] at (C0) {}; \node[cv, fill=pcB] at (C1) {};
  \node[cv, fill=pcC] at (C2) {}; \node[cv, fill=pcD] at (C3) {};
  \node[cv, fill=pcE] at (C4) {};
  \draw[delim] (4.78,0.92)--(5.05,0)--(4.78,-0.92);

  \node at (5.78,0) {\Large $+$};

  \node[font=\small, cliquecol] at (3.75,-1.3) {$\ket{C}$};
  \node[font=\small, cliquecol] at (7.40,-1.3) {$\ket{D}$};
  \draw[delim] (6.50,-0.92)--(6.50,0.92);
  \coordinate (D0) at (7.400,0.600); \coordinate (D1) at (6.829,0.185);
  \coordinate (D2) at (7.047,-0.485);\coordinate (D3) at (7.753,-0.485);
  \coordinate (D4) at (7.971,0.185);
  \foreach \a/\b in {0/1,0/2,0/3,0/4,1/2,1/3,1/4,2/3,2/4,3/4}{\draw[cle] (D\a)--(D\b);}
  \node[dv, fill=pcA] at (D0) {}; \node[dv, fill=pcB] at (D1) {};
  \node[dv, fill=pcC] at (D2) {}; \node[dv, fill=pcD] at (D3) {};
  \node[dv, fill=pcE] at (D4) {};
  \draw[delim] (8.20,0.92)--(8.47,0)--(8.20,-0.92);
\end{tikzpicture}
\caption{\textbf{Encoding a clique into a superposition of Slaters.} The augmented graph $G$ is encoded as the unnormalized state $\ket{\Phi_G}=\alpha\sum_{C\in\mathcal{K}(G)}\ket{C} +\ket{D}$, where $\ket{D}$ is the fixed reference clique (right) and $\ket{C}$ is the unique hidden multicolored clique, if any (left). Locating the closest Slater to $\ket{\Phi_G}$ reveals whether $C$ exists.}
\label{fig:encoding}
\end{figure}

In the case $|\mathcal{K}(G)| =0$, we have $\ket{\Phi_G} = \ket{D}$ and $\OPT=1$, and hence
\begin{equation}
    |\langle S|D\rangle|^2 \geq 1-\varepsilon_\lambda.
\end{equation}
In particular, since $\varepsilon_\lambda\leq 7/16$, we have $|\langle S|D\rangle|>3/4.$ \par 

In the case $|\mathcal{K}(G)| =1$, by Lemma~\ref{lem:Superposition} we have
\begin{equation}
  |\langle S|D\rangle|+
  |\langle S|C\rangle|
  \le 1.
\end{equation}
Writing $x=|\langle S|D\rangle|$ for convenience, we have $|\langle S|C\rangle|\le 1-x$
and therefore
\begin{align}
  \frac{|\langle S|\Phi_G\rangle|}{\|\Phi_G\|}
  &\le
  \frac{x+\alpha(1-x)}{\sqrt{1+\alpha^2}} =\frac{\alpha-(\alpha-1)x}{\sqrt{1+\alpha^2}}
\end{align}
since $\ket{D}$ and $\ket{C}$ have disjoint occupied subspaces. If $x\ge 1/4$, then
\begin{equation}
  \frac{|\langle S|\Phi_G\rangle|^2}{\|\Phi_G\|^2}
  \le
  \frac{(\alpha-(\alpha-1)/4)^2}{1+\alpha^2}.
\end{equation}
Since $\alpha>1$, this is strictly smaller than the optimum value $\OPT=\frac{\alpha^2}{1+\alpha^2}.$ Define the quantity
\begin{equation}
  \Delta_\alpha
  \coloneqq
  \frac{\alpha^2-(\alpha-(\alpha-1)/4)^2}{1+\alpha^2}
  >
  0.
\end{equation}
Thus, in the case $|\mathcal{K}(G)|=1$, if
$|\langle S|D\rangle|\ge 1/4$ then $\ket{S}$ is at least $\Delta_\alpha$ below
optimal. Equivalently, if $\ket{S}$ is within $\Delta_\alpha$ of being optimal, then $|\langle S|D\rangle|<1/4$. Since $\varepsilon_\lambda\leq \Delta_\alpha$, if $\ket{S}$ is within $\varepsilon_\lambda$ of being optimal, then $|\langle S|D\rangle|<1/4$. 

\par 
Finally, we use this to distinguish the two cases as follows. Given the augmented graph $G$ and succinct classical description for $\ket{\Phi_G}$, we run the hypothetical closest-Slater algorithm and obtain $\ket{S}$. Since $D\subset G$ is explicitly known by construction, we can compute $|\langle S|D\rangle|$ efficiently to constant precision and declare ``no clique" if $|\langle S|D\rangle|>1/2$ and ``yes clique" otherwise. By the preceding arguments, this distinguishes the two possible cases. The total runtime is
\begin{equation}
  f(n)\bigl(n(p+1)\bigr)^{o(n)}
  =
  f'(n)p^{o(n)},
\end{equation}
contradicting Assumption~\ref{ass:rETH}. Hence no such algorithm exists, completing the proof.
\end{proof}

\subsection{Hardness in precision from the quantum separability problem}
\label{ref:hardnessepsilon}

The quantum learning algorithm described in Theorem~\ref{thm:quantum} brings the sample complexity down to $\poly(m, n, 1/\eps)$, but its runtime remains exponential.
In this section, we show that this inefficiency is inevitable, by proving that no quantum algorithm can solve the problem to $\eps=1/\poly(m)$ accuracy in $\poly(m)$ time even when there are only $n=2$ fermions, unless $\mathsf{NP}$ is in $\mathsf{BQP}$.
This holds true even if the algorithm is given the complete classical description of the input state $\rho$, and hence the hardness result also applies to the quantum copy access model since $\poly(m)$-dimensional quantum states can be prepared in $\poly(m)$ time from their classical descriptions.
\par 

\comphardthm*

We prove Theorem~\ref{thm:comp-hard} by reducing the task of finding the Slater determinant closest to a two-fermion state to the task of finding the product state closest to a bipartite state, also known as the \emph{quantum separability problem} \cite{gurvits2003classical,gharibian2010strong}.
Ref.~\cite{gharibian2010strong} proved that solving the quantum separability problem to inverse-polynomial accuracy is $\mathsf{NP}$-hard, and hence our hardness result follows.

The intuition behind the reduction can be understood as follows.
Consider two fermions in an even number of $m$ modes.
One can interpret these modes as $d=m/2$ spatial modes and a spin-$1/2$ degree of freedom.
For a Slater determinant in which the two fermions have opposite spins, the anti-symmetry requirement can always be satisfied by anti-symmetrizing the spin degree of freedom, and leaving the spatial wavefunction in an ordinary product state.
Hence, finding the closest Slater determinant is the same as finding the closest product state in the opposite-spin sector, where the spin label marks the bipartition, provided that Slater determinants built from wavefunctions that mix the two spin sectors cannot outperform opposite-spin ones.
We formalize this idea and prove Theorem~\ref{thm:comp-hard} in the following.

\begin{proof}[Proof of Theorem~\ref{thm:comp-hard}]
We begin by invoking the computational hardness of the quantum separability problem.
\begin{lemma}[Computational hardness of quantum separability, \cite{gharibian2010strong}]
\label{lem:quant-separable}
    There exists a constant $c_0>0$ such that the following promise problem is $\mathsf{NP}$-hard: given $M\in \mathbb{C}^{d^2\times d^2}, 0\preceq M\preceq I, M\neq 0, \gamma \in [0, 1]$, decide whether
    \begin{equation}
    	\max_{\substack{\ket{x}, \ket{y}\in \mathbb{C}^d\\\langle x|x\rangle=\langle y|y\rangle=1}} (\bra{x}\otimes\bra{y}) M(\ket{x}\otimes \ket{y}) \geq \gamma +1/d^{c_0}
    \end{equation}
    or 
    \begin{equation}
    	\max_{\substack{\ket{x}, \ket{y}\in \mathbb{C}^d\\\langle x|x\rangle=\langle y|y\rangle=1}} (\bra{x}\otimes\bra{y}) M(\ket{x}\otimes \ket{y}) \leq \gamma -1/d^{c_0}.
    \end{equation}
\end{lemma}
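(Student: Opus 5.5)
Since this lemma is quoted from Ref.~\cite{gharibian2010strong}, the plan is to reconstruct its proof by a polynomial-time reduction from \textsc{Clique}, using the Motzkin--Straus theorem. For a graph on $d$ vertices with adjacency matrix $A$ and clique number $\omega$, that theorem states
\begin{equation}
\max_{u\in\Delta_d} u^{T}Au = 1-1/\omega ,
\end{equation}
where $\Delta_d$ is the probability simplex. Consecutive integer values of $\omega$ change this optimum by $1/\omega(\omega+1)\ge 1/d^2$. Deciding ``$\omega\ge k$'' is therefore $\mathsf{NP}$-hard even when the optimum is only resolved to additive precision $1/(2d^2)$. The goal is to turn this simplex quadratic program into an optimization over product states $\ket{x}\otimes\ket{y}$ while losing at most a polynomial factor in the gap.

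\textbf{Step 1: biquadratic encoding.} For each edge $\{i,j\}$ set $\ket{v_{ij}}=\ket{i}\ket{j}+\ket{j}\ket{i}$, and let $M_0=\sum_{\{i,j\}\in E}\ket{v_{ij}}\!\bra{v_{ij}}$, which is positive semidefinite. Then
\begin{equation}
(\bra{x}\otimes\bra{y})M_0(\ket{x}\otimes\ket{y})=\sum_{\{i,j\}\in E}|x_iy_j+x_jy_i|^2 .
\end{equation}
Restricting to $y=x$ real with $u_i=x_i^2$ gives $4\sum_{E}u_iu_j=2u^{T}Au$. So the product-state maximum is at least $2(1-1/\omega)$.

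\textbf{Step 2: upper bound (main obstacle).} I must show that arbitrary complex $x\neq y$ cannot beat the symmetric value, at least up to a negligible error. First, $|x_iy_j+x_jy_i|\le |x_i||y_j|+|x_j||y_i|$, so I may take $x,y$ entrywise nonnegative. Next, set $a_i=x_i^2$ and $b_i=y_i^2$, both in $\Delta_d$, and write $w_i=\sqrt{a_ib_i}$. I would use
\begin{equation}
(\sqrt{a_ib_j}+\sqrt{a_jb_i})^2=a_ib_j+a_jb_i+2w_iw_j .
\end{equation}
The two pieces are bounded separately. The term $\sum_E 2w_iw_j=w^{T}Aw$ is at most $(\sum_i w_i)^2(1-1/\omega)\le 1-1/\omega$, by Motzkin--Straus and $\sum_i w_i\le 1$. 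The cross term $\sum_E(a_ib_j+a_jb_i)=a^{T}Ab$ needs more care. My first attempt is the polarization identity
\begin{equation}
a^{T}Ab=\tfrac12\big(s^{T}As-\tfrac14(a-b)^{T}A(a-b)\big)\cdot 2, \qquad s=\tfrac{a+b}{2}\in\Delta_d .
\end{equation}
Motzkin--Straus then bounds $s^{T}As$ by $1-1/\omega$. The remaining difference term $(a-b)^{T}A(a-b)$ has indefinite sign, so it may not simply be dropped. If it causes trouble, I would instead replace $M_0$ by $M_0+\beta\,(\text{diagonal correction})$ or by a projection onto the symmetric subspace, chosen so that the quadratic form becomes convex in the difference direction and the maximum is forced onto $x=y$. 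This is the step where I expect the real work. Any residual slack should be at most $O(1/d^3)$, which is still below the $1/d^2$ gap.

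\textbf{Step 3: normalization and thresholds.} Finally, set $M=M_0/\|M_0\|$. Since $\|M_0\|\le 2|E|\le d^2$, we have $0\preceq M\preceq I$ and $M\neq 0$ whenever $E\neq\emptyset$. Take $\gamma$ to be the normalized value $2(1-1/(k-\tfrac12))/\|M_0\|$, which lies between the values corresponding to $\omega\ge k$ and $\omega\le k-1$. The normalized gap is then $\Omega(1/d^4)$, so any constant $c_0>4$ suffices. This reduction is polynomial-time and yields the $\mathsf{NP}$-hardness of the stated promise problem.
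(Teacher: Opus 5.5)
The paper does not prove this lemma; it imports it as a black box from the cited reference. So the question is whether your self-contained Motzkin--Straus reduction closes. Steps 1 and 3 are fine. Step 2, which is the entire soundness direction, has a genuine gap, and the route you sketch for it cannot work.

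The two pieces $a^{T}Ab$ and $w^{T}Aw$ cannot each be bounded by $1-1/\omega$. For any edge $\{i,j\}$, take $a=e_i$, $b=e_j$ (i.e.\ $x=e_i$, $y=e_j$). Then $a^{T}Ab=1>1-1/\omega$ while $w=0$. The total is still at most $2(1-1/\omega)$ (here it equals $1$), but only because the two terms trade off against each other. This is also why your polarization of $a^{T}Ab$ leaves a remainder of the wrong sign: in this example it contributes $+\tfrac12$. Your fallbacks are not carried out, and projecting onto the symmetric subspace changes nothing, since every $\ket{v_{ij}}$ is already symmetric. As written, $h_{\mathrm{Sep}}(M_0)\le 2(1-1/\omega)$ is not established.

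The missing idea is to polarize at the level of vectors, using that $M_0$ is supported on the symmetric subspace, rather than at the level of the quadratic form in $(a,b)$. Set $p=x+y$ and $q=x-y$. Then $x_iy_j+x_jy_i=\tfrac14\left(2p_ip_j-2q_iq_j\right)$. Define $\Lambda=\max_{\norm{z}=1}\sum_{\{i,j\}\in E}|2z_iz_j|^2=2\max_{u\in\Delta_d}u^{T}Au=2(1-1/\omega)$. By homogeneity, $\norm{(2z_iz_j)_{\{i,j\}\in E}}\le\norm{z}^2\sqrt{\Lambda}$ for every $z$. The triangle inequality in $\ell_2(E)$ and the parallelogram law $\norm{p}^2+\norm{q}^2=4$ then give
\begin{equation}
\norm{(x_iy_j+x_jy_i)_{\{i,j\}\in E}}\le\tfrac14\left(\norm{p}^2+\norm{q}^2\right)\sqrt{\Lambda}=\sqrt{\Lambda}.
\end{equation}
The left side squared is $(\bra{x}\otimes\bra{y})M_0(\ket{x}\otimes\ket{y})$, so $h_{\mathrm{Sep}}(M_0)=2(1-1/\omega)$ exactly. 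This holds directly for complex $x,y$, so your reduction to nonnegative entries becomes unnecessary.

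Alternatively, encode the graph as $N=\sum_{\{i,j\}\in E}(\ket{ii}\!\bra{jj}+\ket{jj}\!\bra{ii})$, and take $M=(N+\norm{A}I)/(2\norm{A})$. Its product-state form is $2\,\mathrm{Re}\sum_{E}\overline{x_iy_i}\,x_jy_j\le w^{T}Aw$ with your $w_i=|x_i||y_i|$. Your own Motzkin--Straus bound on $w^{T}Aw$ is then the complete argument.

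With either fix Step 3 goes through. In fact the $\ket{v_{ij}}$ are mutually orthogonal, so $\norm{M_0}=2$, $M=M_0/2$ is a projector, and the gap is $\Omega(1/d^2)$ rather than $\Omega(1/d^4)$.
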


For simplicity, we define
\begin{equation}
	h_{\mathrm{Sep}}(M) = \max_{\substack{\ket{x}, \ket{y}\in \mathbb{C}^d\\\langle x|x\rangle=\langle y|y\rangle=1}} (\bra{x}\otimes\bra{y}) M(\ket{x}\otimes \ket{y}).
\end{equation}
Without loss of generality, we assume that $m$ is even and let $d=m/2$.
We label the single fermion Hilbert space as $\mathcal{H}_1 = \mathbb{C}^d \otimes \mathbb{C}^2$ and use $\ket{x^\sigma} = \ket{x}\otimes \ket{\sigma}, \ket{x}\in \mathbb{C}^d, \sigma\in \{\uparrow, \downarrow\}$ to denote the state vectors.
Let $\ket{e_i}, i=1, \ldots, d$ be the standard basis of $\mathbb{C}^d$.
Then the opposite-spin sector is the span of $\{\ket{e_i^{\uparrow}}\wedge \ket{e_j^\downarrow}\}_{i, j=1}^d$.
These $d^2$ vectors are orthonormal and hence the map
\begin{equation}
	J: \mathbb{C}^d \otimes \mathbb{C}^d \to \mathcal{H}_2 = \wedge^2(\mathbb{C}^{d}\otimes \mathbb{C}^2),
\end{equation}
satisfying
\begin{equation}
	J(\ket{e_i}\otimes \ket{e_j}) = \ket{e_i^\uparrow}\wedge \ket{e_j^\downarrow},
\end{equation}
is an isometry from the bipartite $d\times d$ dimension Hilbert space to the opposite-spin sector (Fig.~\ref{fig:isometry}).
Since $J$ is linear, we have $J(\ket{x}\otimes \ket{y}) = \ket{x^\uparrow}\wedge \ket{y^\downarrow}$.
Moreover, $\ket{x^\uparrow}$ and $\ket{y^\downarrow}$ are orthonormal whenever $\ket{x}$ and $\ket{y}$ are unit vectors, because the spin degree of freedom guarantees orthogonality.
Hence, $J$ maps product states to Slater determinants.
\par

\begin{figure}[t]
\centering
\resizebox{\columnwidth}{!}{%
\begin{tikzpicture}[
  font=\footnotesize,
  delim/.style ={cliquecol, line width=1.2pt, line cap=round, line join=round},
  spin/.style  ={-{Stealth[length=2.4mm,inset=0.5mm]}, graytone, line width=1.7pt, line cap=round},
  jarrow/.style={-{Stealth[length=3.4mm,inset=0.7mm]}, cliquecol, line width=1.9pt},
  lbl/.style   ={cliquecol},
]
\definecolor{greenacc}{HTML}{1F7A47}
\definecolor{blueacc}{HTML}{1F5C8F}
\definecolor{cliquecol}{HTML}{23252F}
\definecolor{graytone}{HTML}{6F6E68}
\newcommand{\orb}[4]{%
  \shade[top color=#3!50, bottom color=#3] (#1,#2) circle (0.36);
  \draw[cliquecol!35,line width=0.6pt] (#1,#2) circle (0.36);
  \node[text=white,font=\Large] at (#1,#2) {$#4$};%
}

\fill[cliquecol,opacity=0.04,rounded corners=8pt] (0.22,1.72) rectangle (3.38,3.48);
\fill[cliquecol,opacity=0.04,rounded corners=8pt] (4.60,1.72) rectangle (7.92,3.48);

\node[lbl,font=\small,text=cliquecol!80] at (1.80,3.76) {$\mathbb{C}^{d}\otimes\mathbb{C}^{d}$};
\node[lbl,font=\small,text=cliquecol!80] at (6.26,3.76) {$\wedge^{2}(\mathbb{C}^{d}\otimes\mathbb{C}^{2})$};

\draw[delim] (0.40,2.14)--(0.40,3.06);
\orb{0.86}{2.60}{greenacc}{x}
\draw[delim] (1.32,3.06)--(1.48,2.60)--(1.32,2.14);
\node[lbl,font=\Large] at (1.80,2.60) {$\otimes$};
\draw[delim] (2.12,2.14)--(2.12,3.06);
\orb{2.58}{2.60}{blueacc}{y}
\draw[delim] (3.04,3.06)--(3.20,2.60)--(3.04,2.14);

\draw[jarrow] (3.66,2.60)--(4.34,2.60);
\node[lbl,font=\LARGE] at (4.00,3.05) {$J$};

\draw[delim] (4.78,2.14)--(4.78,3.06);
\orb{5.24}{2.60}{greenacc}{x}
\draw[spin] (5.72,2.32)--(5.72,2.88);
\draw[delim] (5.80,3.06)--(5.96,2.60)--(5.80,2.14);
\node[lbl,font=\Large] at (6.26,2.60) {$\wedge$};
\draw[delim] (6.56,2.14)--(6.56,3.06);
\orb{7.02}{2.60}{blueacc}{y}
\draw[spin] (7.50,2.88)--(7.50,2.32);
\draw[delim] (7.58,3.06)--(7.74,2.60)--(7.58,2.14);
\end{tikzpicture}}
\caption{\textbf{Isometry to the opposite-spin sector. } The isometry $J$ sends a product state $\ket{x}\otimes\ket{y}$ in $\mathbb{C}^d\otimes\mathbb{C}^d$ to the Slater determinant $\ket{x^\uparrow}\wedge\ket{y^\downarrow}$ in the opposite-spin sector of $\wedge^2(\mathbb{C}^d\otimes\mathbb{C}^2)$.}
\label{fig:isometry}
\end{figure}

\par 

Now, we use the matrix $M$ from Lemma~\ref{lem:quant-separable} to construct a two-fermion state
\begin{equation}
	\rho_M = \frac{JMJ^\dagger}{\Tr(M)}.
\end{equation}
This is a valid $\binom{2d}{2}$-dimensional density matrix, because $0\preceq M\preceq I$ and $J$ is an isometry.
By construction, the Slater fidelity satisfies
\begin{equation}
	\mathrm{F}(\ket{x^\uparrow}\wedge \ket{y^\downarrow}) = \frac{1}{\Tr(M)} (\bra{x}\otimes \bra{y})M(\ket{x}\otimes \ket{y}).
\end{equation}
Therefore, we already have
\begin{equation}
    \mathsf{OPT} = \max_S \mathrm{F}(\ket{S})\geq \max_{x, y}\mathrm{F}(\ket{x^\uparrow}\wedge \ket{y^\downarrow}) = \frac{h_{\mathrm{Sep}}(M)}{\Tr(M)}.
\end{equation}

Next, we prove that $\mathsf{OPT}\leq \frac{h_{\mathrm{Sep}}(M)}{\Tr(M)}$ to establish their equality: $\mathsf{OPT}= \frac{h_{\mathrm{Sep}}(M)}{\Tr(M)}$.
To this end, note that any two-fermion Slater determinant can be written as $\ket{T}=\ket{w_1}\wedge\ket{w_2}$, where $\ket{w_1}, \ket{w_2}$ are orthonormal.
Furthermore, every single particle state vector $\ket{w_k}, k=1, 2$ can be decomposed into spin-up and spin-down parts
\begin{equation}
	\ket{w_k} = \ket{a_k^\uparrow} + \ket{b_k^\downarrow},
\end{equation}
where $\|\ket{a_k}\|^2 + \|\ket{b_k}\|^2=1, \ket{a_k}, \ket{b_k}\in\C^d$ are the unnormalized spatial wavefunctions.
From the linearity of the wedge product, we have
\begin{equation}
\begin{split}
    \ket{T} = &\ket{a_1^\uparrow}\wedge\ket{a_2^\uparrow} + \ket{b_1^\downarrow}\wedge\ket{b_2^\downarrow} \\
    &+\ket{a_1^\uparrow}\wedge\ket{b_2^\downarrow} + \ket{b_1^\downarrow}\wedge\ket{a_2^\uparrow}.
\end{split}
\end{equation}
Note that only the last two terms are in the opposite-spin sector, and hence the preimage of $\ket{T}$ under the map $J$ reads
\begin{equation}
\begin{split}
    J^\dagger \ket{T} &= J^\dagger (\ket{a_1^\uparrow}\wedge\ket{b_2^\downarrow} - \ket{a_2^\uparrow}\wedge\ket{b_1^\downarrow}) \\
    &= \ket{a_1}\otimes \ket{b_2}-\ket{a_2}\otimes \ket{b_1} \\
    &= \ket{\phi_T},
\end{split}
\end{equation}
where we have used the anti-symmetry of wedge product and defined $\ket{\phi_T} = \ket{a_1}\otimes \ket{b_2}-\ket{a_2}\otimes \ket{b_1}$.
Hence, we have
\begin{equation}
	\mathrm{F}(\ket{T}) = \frac{1}{\Tr(M)}\bra{\phi_T}M\ket{\phi_T}
\end{equation}
for any Slater determinant $\ket{T}$.
Note that 
\begin{equation}
\begin{split}
    &\bra{\phi_T}M\ket{\phi_T} = \|M^{1/2}\ket{\phi_T}\|^2\\
    &\leq (\|M^{1/2}\ket{a_1}\otimes \ket{b_2}\| + \|M^{1/2}\ket{a_2}\otimes \ket{b_1}\|)^2,
\end{split}
\end{equation}
via triangle inequality.
Also for any unnormalized vectors $\ket{a}, \ket{b} \in \C^d$, we have
\begin{equation}
	\|M^{1/2}\ket{a}\otimes \ket{b}\|^2\leq \|\ket{a}\|^2\|\ket{b}\|^2h_{\mathrm{Sep}}(M).
\end{equation}
Therefore, we have
\begin{equation}
	\bra{\phi_T}M\ket{\phi_T}\leq h_{\mathrm{Sep}}(M) (\|\ket{a_1}\|\|\ket{b_2}\|+\|\ket{a_2}\|\|\ket{b_1}\|)^2.
\end{equation}
Using the normalization condition $\|\ket{a_k}\|^2+\|\ket{b_k}\|^2=1, k=1, 2$, we may parameterize the norms as $\|\ket{a_k}\|=\cos(\theta_k), \|\ket{b_k}\|=\sin(\theta_k)$.
This means that
\begin{equation}
\begin{split}
    \|\ket{a_1}\|\|\ket{b_2}\|+\|\ket{a_2}\|\|\ket{b_1}\| &= \cos\theta_1\sin\theta_2+\cos\theta_2\sin\theta_1 \\
    &= \sin(\theta_1+\theta_2) \in [0, 1].
\end{split}
\end{equation}
Hence, we have 
\begin{equation}
	\mathrm{F}(\ket{T})\leq \frac{h_{\mathrm{Sep}}(M)}{\Tr(M)}
\end{equation}
for any Slater determinant $\ket{T}$ and therefore
\begin{equation}
	\mathsf{OPT}\leq \frac{h_{\mathrm{Sep}}(M)}{\Tr(M)}.
\end{equation}
Combined with the other direction, we have
\begin{equation}
	\mathsf{OPT}= \frac{h_{\mathrm{Sep}}(M)}{\Tr(M)}
\end{equation}

Now we are finally ready to prove the computational hardness.
For the sake of contradiction, assume that there is a polynomial-time quantum algorithm $\mathcal{A}$ that takes as input the classical description of any density matrix $\rho$ of $n=2$ fermions on $m$ modes and outputs a Slater determinant $\ket{S}$ satisfying
\begin{equation}
	\bra{S}\rho\ket{S}\geq \mathsf{OPT}-1/m^{c}
\end{equation}
with probability at least $2/3$.
In the following, we design a polynomial-time quantum algorithm that solves the quantum separability problem defined in Lemma~\ref{lem:quant-separable}.

Concretely, given a problem instance $(M\in \C^{d^2\times d^2}, \gamma)$ with problem size $d$ from the quantum separability problem, we set $m=2d$ and first compute the matrix
\begin{equation}
	\rho_M = \frac{JMJ^\dagger}{\Tr(M)}.
\end{equation}
This step is a $\poly(d)=\poly(m)$ dimensional matrix computation that takes $\poly(m)$ time.
Next, we run the algorithm $\mathcal{A}$ with  $\rho_M$ as the input, obtaining an output Slater determinant $\ket{T}$.
We orthogonalize and normalize it to obtain the spatial wavefunctions $\ket{a_1}, \ket{b_1}, \ket{a_2}, \ket{b_2}$ and calculate the preimage $\ket{\phi_T} = \ket{a_1}\otimes \ket{b_2}-\ket{a_2}\otimes \ket{b_1}$.
This step is standard matrix computation and only takes $\poly(d)=\poly(m)$ time.
Then we calculate the value
\begin{equation}
	v = \frac{\bra{\phi_T}M\ket{\phi_T}}{\Tr(M)} = \mathrm{F}(\ket{T}).
\end{equation}
We always have $v\leq \mathsf{OPT}$ and the guarantee of the learning algorithm $\mathcal{A}$ implies that
\begin{equation}
	\mathsf{OPT}-1/m^{c} \leq v\leq \mathsf{OPT}
\end{equation}
with probability at least $2/3$.
Since $\mathsf{OPT}=\frac{h_{\mathrm{Sep}}(M)}{\Tr(M)}$, we have that
\begin{equation}
	h_{\mathrm{Sep}}(M)-\Tr(M)/m^{c} \leq \Tr(M)v\leq h_{\mathrm{Sep}}(M).
\end{equation}
Moreover, $0\preceq M\preceq I$ implies that $\Tr(M) \in [0, d^2]$.
Therefore, we choose $c=c_0+2$ and obtain
\begin{equation}
	h_{\mathrm{Sep}}(M)-1/(4d^{c_0})\leq \Tr(M)v\leq h_{\mathrm{Sep}}(M).
\end{equation}
Finally, we calculate and compare the value of  $\Tr(M)v$ against $\gamma$, thus solving the quantum separability problem, which is $\mathsf{NP}$ hard in $d=m/2$ from Lemma~\ref{lem:quant-separable}.
The reduction is completely classical and runs in polynomial time.
Hence the existence of a quantum (or classical) learning algorithm that solves the Slater learning problem in $\poly(m)$ time implies that $\mathsf{NP}\subseteq \mathsf{BQP}$ (or $\mathsf{NP}\subseteq \mathsf{BPP}$), completing the proof of Theorem~\ref{thm:comp-hard}.
\end{proof}

\section{Extension to mixed states}\label{app:mixed}

In this appendix, we extend the results of the main text to mixed states. Throughout, $\rho$ denotes an arbitrary density matrix on the $n$-particle fermionic Hilbert space $\mathcal{H}_n$, $\mathrm{F}(S) = \langle S|\rho|S\rangle$, and $\OPT = \max_{\ket{S}\in \mathcal{S}} \langle S|\rho|S\rangle$. We first specify the access model we assume for the case of $\rho$ a mixed state. We then verify that the algorithmic upper and hardness lower bounds carry over. Finally, we prove the mixed-state analog of the $\mathrm{F}=2/3$ threshold. 

In the quantum access model, no modification is needed for an extension to mixed states, as we made no requirement that $\rho$ is pure. In the classical case, we assume the following.

\begin{definition}[Succinct classical access to a mixed state] A \textit{succinct classical description} of $\rho$ provides, in efficient time per sample, (i) for any few-body observable $A$, a sample of a random variable $X_A$ with $\mathbb{E}[X_A] = \Tr(A\rho)$ and $\mathrm{Var}[X_A] = O(1)$, and (ii) for any explicitly specified Slater $\ket{S}$, a sample of a random variable $Y_S \in[0,1]$ with $\mathbb{E}[Y_S]=\langle S|\rho|S\rangle$. 
\label{def:succ}
\end{definition}

This directly generalizes the pure-state access model of the main text. It is realized, for instance, by an ensemble representation $\rho = \sum_k p_k \ket{\psi_k}\!\bra{\psi_k}$ in which one can sample $k\sim p$ and each member $\ket{\psi_k}$ is provided with a pure-state succinct description. We can now state the following algorithms for the mixed state case. 

\begin{proposition}
    [Algorithmic results for mixed states] Let $\rho$ be an arbitrary mixed $n$-fermion state. Under Definition~\ref{def:succ}, the classical algorithm of Theorem~\ref{thm:upper} returns a Slater determinant with fidelity at least $\OPT -\varepsilon$, with probability at least $1-\delta$, in time $\exp\left[O\left(\frac{n^3}{\varepsilon}\log\frac{n}{\varepsilon}\right)\right]\cdot \poly(m,\log\frac{1}{\delta})$. Given quantum copies of $\rho$, the protocol of Theorem~\ref{thm:quantum} applies verbatim with the same sample complexity. 
\end{proposition}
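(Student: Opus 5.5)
The proof re-traces the three stages of Theorem~\ref{thm:upper}: estimate the 1-RDM, truncate to an active space, and search a covering net. At each stage I check which ingredient used purity and replace it with the access model of Definition~\ref{def:succ}.

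\textbf{Stage 1: estimating the 1-RDM.} Each entry $\gamma_{ij}=\Tr(c_j^\dagger c_i\rho)$ is the expectation of a few-body observable. By item (i) of Definition~\ref{def:succ}, we can sample an unbiased estimator with $O(1)$ variance for it (splitting into Hermitian parts). Median-of-means over $O(\log(m^2/\delta)/\eta^2)$ samples per entry gives entrywise error $\eta/m$ with probability $1-\delta/2$ after a union bound. Entrywise error $\eta/m$ implies operator-norm error at most $\eta$ by the Frobenius bound. Taking $\eta=\varepsilon/8n$ costs $\poly(m,n,1/\varepsilon,\log(1/\delta))$ time.

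\textbf{Stage 2: the active space.} Lemmas~\ref{lem:active} and \ref{lem:noisy-active} must hold for mixed $\rho$. I expect they do, since they bound fidelity loss using only occupations $\langle v|\gamma|v\rangle$ of orbitals outside $\mathcal{A}$. Failing that, one can reduce to the pure case by convexity. Write $\rho=\sum_k p_k\ket{\psi_k}\!\bra{\psi_k}$. Projecting $\ket{S}$ onto Slaters built from $\mathcal{A}$ loses, for each $k$, an amount controlled by the weight of $\ket{\psi_k}$ outside $\mathcal{A}$. Averaging over $k$ gives the average weight, which is $\Tr(P_{\mathcal{A}^\perp}\gamma)$-type quantities of the mixed 1-RDM. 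The dimension bound $r<4n^2/\varepsilon$ uses only $\Tr\gamma=n$, so it is unchanged.

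\textbf{Stage 3: the net search.} The Lipschitz bound \eqref{eq:FUFV} already used only $\|\rho\|_{\mathrm{op}}\le1$, so the same $\varepsilon/4n$ net of size $M=\exp[O(\frac{n^3}{\varepsilon}\log\frac n\varepsilon)]$ works. The new issue is that $\mathrm{F}$ at each net point is now estimated from $Y_S\in[0,1]$ rather than computed exactly. I will use Hoeffding: $O(\log(M/\delta)/\varepsilon^2)$ samples per net point give every estimate within $\varepsilon/8$ simultaneously with probability $1-\delta/2$. This requires retightening the constants, using nets at $\varepsilon/8n$ and a truncation loss of $\varepsilon/4$. Then returning the empirical maximizer loses at most $2\cdot\varepsilon/8$ beyond the net and truncation errors. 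The total is $\le\varepsilon$.

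The sample overhead is $\log M=\poly(n,1/\varepsilon)$, which is absorbed into the stated runtime together with the $\poly(m,\log\frac1\delta)$ factor from Stage 1. This union bound over the exponentially large net is the main point needing care: it costs only a logarithm of $M$, so the runtime stays $\exp[O(\frac{n^3}{\varepsilon}\log\frac n\varepsilon)]\cdot\poly(m,\log\frac1\delta)$.

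\textbf{Quantum setting.} Theorem~\ref{thm:quantum} never used purity. Lemma~\ref{lem:agnosticQTS} applies to any density matrix with $A_k=\ket{S_k}\!\bra{S_k}$. The fermionic-shadow estimate of $\hat\gamma$ is likewise state-agnostic. The active-space step uses the same mixed-state form of Lemma~\ref{lem:active} as in Stage 2. So that argument carries over verbatim with the same copy count.
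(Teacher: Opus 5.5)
Your proposal is correct and follows the paper's proof: Lemmas~\ref{lem:active} and \ref{lem:noisy-active} are already stated and proved for arbitrary density matrices (their proofs use only $\rho\succeq 0$ and $\Tr\rho=1$, so your convexity fallback is unnecessary), the Lipschitz bound needs only $\|\rho\|\le 1$, Hoeffding plus a union bound over the $M$ net points costs only a $\log(M/\delta)$ factor, and the quantum protocol carries over verbatim. Your explicit 1-RDM estimation via item (i) and your rebalanced error budget are more careful than the paper's own bookkeeping, but both contain harmless slips: entrywise error $\eta/m$ needs $O(m^2\log(m/\delta)/\eta^2)$ samples per entry rather than $O(\log(m/\delta)/\eta^2)$, and a truncation loss of $\varepsilon/4$ via Lemma~\ref{lem:noisy-active} requires $\eta<\varepsilon/8n$ (e.g.\ $\eta=\varepsilon/16n$ with threshold $3\varepsilon/16n$) rather than the $\eta=\varepsilon/8n$ of your Stage 1.
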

\begin{proof}
Let us verify that each step of the proof of Theorem~\ref{thm:upper} applies for a general density matrix. First, we note that Lemmas~\ref{lem:active} and \ref{lem:noisy-active} are stated and proven for an arbitrary density matrix $\rho$, with $\gamma_{ij} = \Tr(c_j^\dagger c_i\rho)$ and $\tr \gamma = n$, so the construction of the active space $\mathcal{A}$ with $\dim \mathcal{A} = O(n^2/\varepsilon)$ and the bound $\OPT \leq \OPT_{\mathcal{A}} + \varepsilon/2$ require no change. Second, in Theorem~\ref{thm:upper}, $|\mathrm{F}(U)-\mathrm{F}(V)|\leq 2n\|U-V\|_F$ needs only that $\|\rho \| \leq 1$, which holds for a general $\rho$. Hence the covering net constructed with size $M = \exp\left[O\left(\frac{n^3}{\varepsilon}\log\frac{n}{\varepsilon}\right)\right]$ remains unchanged. Third, the exact evaluation of fidelities which was assumed in Theorem~\ref{thm:upper} is replaced by estimation under Definition~\ref{def:succ}. It suffices to straightforwardly estimate $\langle S_k|\rho|S_k\rangle$ for each net point to additive error $\varepsilon/8$ with failure probability $\delta/M$, which by Hoeffding's inequality costs $O(\varepsilon^{-2}\log(M/\delta))$ samples per point. A union bound gives a total failure probability $\delta$, and returning the empirical maximum over the net costs at most an additional $\varepsilon/4$ by our choice of covering net. By $\varepsilon/8 + \varepsilon/4 < \varepsilon/2$, we can then apply the conclusion of Theorem~\ref{thm:upper}. \par 

For the quantum protocol, we may note that the threshold search and Lemma~\ref{lem:agnosticQTS} are formulated for an \textit{arbitrary} state $\rho$ through the quantities $\Tr(\rho A_i)$, and the fermionic classical shadow estimation of $\gamma$ in Lemma~\ref{lem:noisy-active} makes no assumption on purity. Hence Theorem~\ref{thm:quantum} applies verbatim.
\end{proof}

Next, we remark that since pure states are a subset of mixed states, and the pure-state succinct classical description used by Theorem~\ref{thm:lower} fits under our mixed-state access model if we allow generalization to unnormalized $\rho$, and hence the computational hardness lower bounds transfer. In particular, the parameterized hardness lower bound of Theorem~\ref{thm:lower} applies to the mixed-state problem without modification and the proof of the sharpness of the 2/3 threshold in Theorem~\ref{thm:sharp} extends for the same reason. Moreover, we may note that the \textsf{NP}-hardness result of Theorem~\ref{thm:comp-hard} is already stated for mixed states. Indeed, the hard instances of this theorem are mixed states. \par

Next, we turn to the high-fidelity regime and the $\mathrm{F}=2/3$ threshold. Recall that Theorem~\ref{thm:certificate} showed the following: for a pure state $\ket{\psi}$ and Slater $\ket{G}$, if $|\langle G|\psi\rangle|^2 > 2/3$ and a local stationarity condition $B_G=0$ holds, then $\ket{G}$ is globally closest. Following a similar approach, let us fix a Slater $\ket{G}$ with occupied orbitals $g_{\mu}$ and unoccupied orbitals $f_a$, and let $\Pi_1^G$ denote the projector onto the span of the single particle-hole excitations $\ket{G_{a\mu}} = c_{f_a}^\dagger c_{g_\mu}\ket{G}$. We define the \textit{mixed residual}
\begin{equation}\label{eq:RG}
    R_G \coloneqq \|\Pi_1^G\rho\ket{G}\| = \sqrt{\sum_{a\mu} |\langle G_{a\mu}|\rho|G\rangle|^2}.
\end{equation}
The following Lemma generalizes Lemma~\ref{lem:brillouin} and identifies $R_G=0$ as the correct Brillouin condition for mixed states. 
\begin{lemma}
    [Mixed state Brillouin condition] The first variation of $\mathrm{F}(G) = \langle G|\rho|G\rangle$ over the manifold of Slater determinants vanishes at $\ket{G}$ if and only if $R_G=0$. 
\end{lemma}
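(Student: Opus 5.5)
The plan is to parametrize the Slater manifold near $\ket{G}$ by Thouless-type unitary rotations and compute the first-order change of $\mathrm{F}$. Any Slater in a neighborhood of $\ket{G}$ can be written as $e^{i\hat{K}}\ket{G}$ with $\hat{K}=\sum_{pq}K_{pq}c_p^\dagger c_q$ a Hermitian one-body operator. Rotations internal to the occupied space or to the unoccupied space only change $\ket{G}$ by a phase, or leave it fixed. The tangent space to $\mathcal{S}$ at $\ket{G}$, modulo phase, is therefore spanned by the vectors $i\hat{K}\ket{G}$ with $\hat{K}=\sum_{a\mu}(Z_{a\mu}c_{f_a}^\dagger c_{g_\mu}+\mathrm{h.c.})$, where $Z\in\C^{(m-n)\times n}$ is arbitrary. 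Since $c_{g_\mu}^\dagger c_{f_a}\ket{G}=0$, this gives $i\hat{K}\ket{G}=i\sum_{a\mu}Z_{a\mu}\ket{G_{a\mu}}$. I will rely on the standard fact that such curves $t\mapsto e^{it\hat{K}}\ket{G}$ exhaust the tangent directions, up to the phase direction.

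Next I differentiate along such a curve. For $\ket{G(t)}=e^{it\hat{K}}\ket{G}$,
\begin{equation}
\frac{\dd}{\dd t}\Big|_{t=0}\mathrm{F}(G(t)) = 2\,\mathrm{Re}\,\langle G|\rho\, i\hat{K}|G\rangle = 2\,\mathrm{Re}\Big( i\sum_{a\mu} Z_{a\mu}\langle G|\rho|G_{a\mu}\rangle\Big).
\end{equation}
The phase direction contributes nothing, because $\mathrm{F}$ is phase invariant. The first variation therefore vanishes for all tangent directions exactly when this real linear functional in $Z$ vanishes identically.

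Because $Z_{a\mu}$ ranges independently over $\C$, I can take $Z_{a\mu}=-i\,\overline{\langle G|\rho|G_{a\mu}\rangle}$ for every $(a,\mu)$. With this choice the derivative equals $2\sum_{a\mu}|\langle G_{a\mu}|\rho|G\rangle|^2=2R_G^2$. So stationarity forces $R_G=0$. Conversely, $R_G=0$ makes every coefficient $\langle G|\rho|G_{a\mu}\rangle$ vanish, and hence every directional derivative vanishes. The identification $R_G=\|\Pi_1^G\rho\ket{G}\|$ in Eq.~\eqref{eq:RG} follows from orthonormality of the $\ket{G_{a\mu}}$. As a consistency check, for $\rho=\ket\psi\!\bra\psi$ we get $R_G=|\langle G|\psi\rangle|B_G$, which recovers Lemma~\ref{lem:brillouin} whenever the fidelity is nonzero.

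The main obstacle is justifying the parametrization. I need to know that the particle-hole vectors $\ket{G_{a\mu}}$, together with the phase direction, span the full tangent space of $\mathcal{S}$ at $\ket{G}$, so that "first variation over the manifold" reduces to these directions. My first approach is to vary the orbitals directly, $U\mapsto U+\delta U$ subject to $U^\dagger\delta U+\delta U^\dagger U=0$, and expand $\bigwedge_k(\ket{g_k}+\ket{\delta g_k})$ to first order. Projecting each $\ket{\delta g_k}$ onto the occupied and unoccupied spaces should give a phase term plus $\sum_a(\delta g_\mu)_a\ket{G_{a\mu}}$, where the occupied–occupied antisymmetric part cancels. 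Since the $(\delta g_\mu)_a$ range over all of $\C$, this confirms the spanning claim.
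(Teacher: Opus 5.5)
Your proposal is correct and follows essentially the same route as the paper. The paper invokes the tangent-space characterization of Lemma~\ref{lem:brillouin}, namely that the tangent space is spanned by the particle-hole excitations $\ket{G_{a\mu}}$ together with the phase direction, which is established by the same orbital-variation argument you sketch, and then reads off the linear term $2\,\mathrm{Re}\sum_{a\mu}\overline{T_{a\mu}}\langle G_{a\mu}|\rho|G\rangle$. Your explicit Thouless curves $e^{it\hat K}\ket{G}$ and the choice $Z_{a\mu}=-i\,\overline{\langle G|\rho|G_{a\mu}\rangle}$, which gives directional derivative $2R_G^2$, are only a minor refinement that makes the ``only if'' direction self-contained.
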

\begin{proof}
    By Lemma~\ref{lem:brillouin}, the tangent space at $\ket{G}$ is spanned by the excitations $\ket{G_{a\mu}}$ (and trivially $i\ket{G}$) so that a path on the manifold passing through $\ket{G}$ at $\tau=0$ takes the form $\ket{G(\tau)} = \ket{G} + \tau\sum_{a\mu} T_{a\mu}\ket{G_{a\mu}} + O(\tau^2)$ up to phase. Substituting, we have
    \begin{equation}
        \langle G(\tau)|\rho|G(\tau)\rangle = \langle G|\rho | G\rangle + 2\tau \mathrm{Re}\sum_{a\mu} \overline{T_{a\mu}}\langle G_{a\mu}|\rho|G\rangle + O(\tau^2).
    \end{equation}
The linear term must vanish for arbitrary coefficients $T_{a\mu}$ in order for the first variation with respect to $\tau$ to vanish. This holds if and only if $\langle G_{a\mu}|\rho|G\rangle = 0$ for all $a,\mu$.
\end{proof}
We also point out that Lemma~\ref{lem:tail} (the technical result bounding higher-particle hole excitations using an inequality $q\leq \sqrt{2}(1-s)$) makes no reference to $\rho$ and hence applies here unchanged. An important next step will be a decomposition of $\rho$ into a pure state plus a remainder $\sigma$, formalized below. 

\begin{lemma}
    [Mixed state decomposition] Let $\rho\succeq 0$ and let $\ket{G}$ be a unit vector with $\alpha = \langle G|\rho|G\rangle >0$. Define $\ket{\phi} \coloneqq \rho\ket{G}/\sqrt{\alpha}$. Then 
    \begin{equation}
        \rho = \ket{\phi}\!\bra{\phi} + \sigma
    \end{equation}
    with $\sigma \succeq 0$ and $\sigma \ket{G} = 0$. 
    \label{lem:coherent}
\end{lemma}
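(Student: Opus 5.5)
The plan is to define $\sigma \coloneqq \rho - \ket{\phi}\!\bra{\phi}$ with $\ket{\phi}=\rho\ket{G}/\sqrt{\alpha}$ and verify the two claimed properties directly. Here $\alpha=\langle G|\rho|G\rangle>0$ ensures $\ket{\phi}$ is well defined.

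The first property, $\sigma\ket{G}=0$, is a one-line computation. We have
\begin{equation}
\ket{\phi}\!\bra{\phi}G\rangle=\frac{\rho\ket{G}\,\langle G|\rho|G\rangle}{\alpha}=\rho\ket{G},
\end{equation}
using $\rho^\dagger=\rho$, and hence $\sigma\ket{G}=\rho\ket{G}-\rho\ket{G}=0$.

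The second property, positivity $\sigma\succeq 0$, is the only real content and the step I expect to require care. I will use the square root $\rho^{1/2}$, which exists since $\rho\succeq 0$. Set $\ket{w}\coloneqq\rho^{1/2}\ket{G}$, so that $\|w\|^2=\alpha$ and $\ket{\phi}=\rho^{1/2}\ket{w}/\|w\|$. Then
\begin{equation}
\sigma=\rho^{1/2}\left(I-\frac{\ket{w}\!\bra{w}}{\|w\|^2}\right)\rho^{1/2}=\rho^{1/2}(I-P_w)\rho^{1/2},
\end{equation}
where $P_w$ is the orthogonal projector onto $\ket{w}$. Since $I-P_w\succeq 0$, conjugating by the Hermitian operator $\rho^{1/2}$ preserves positivity, which gives $\sigma\succeq 0$.

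An equivalent route, which I could use as a cross-check, is Cauchy--Schwarz in the semi-inner product $\langle x,y\rangle_\rho=\langle x|\rho|y\rangle$:
\begin{equation}
|\langle v|\rho|G\rangle|^2\le\langle v|\rho|v\rangle\,\langle G|\rho|G\rangle,
\end{equation}
which is exactly $\langle v|\sigma|v\rangle\ge 0$ for every $\ket{v}$.

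I will also note for later use that $\langle G|\phi\rangle=\sqrt{\alpha}$ and $\Tr\sigma=1-\langle\phi|\phi\rangle$. These facts are not needed for the lemma itself.
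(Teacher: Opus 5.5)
Your proof is correct. The paper proves positivity by the route you list as a cross-check. Cauchy--Schwarz for the positive semidefinite form $\rho$ gives $|\langle x|\rho|G\rangle|^2\le\langle x|\rho|x\rangle\,\alpha$, i.e.\ $|\langle x|\phi\rangle|^2\le\langle x|\rho|x\rangle$. The paper then gets $\sigma\ket{G}=0$ from $\langle\phi|G\rangle=\sqrt{\alpha}$, exactly as in your first step.

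Your primary argument, the factorization $\sigma=\rho^{1/2}(I-P_w)\rho^{1/2}$ with $\ket{w}=\rho^{1/2}\ket{G}$, is a repackaging of the same inequality. It costs an operator square root, but it delivers both properties at once, since $(I-P_w)\rho^{1/2}\ket{G}=(I-P_w)\ket{w}=0$. It also exhibits $\sigma$ explicitly as a compression of $\rho$, which makes structural facts such as $\operatorname{rank}\sigma=\operatorname{rank}\rho-1$ transparent. The paper's version is a one-line inequality that needs nothing beyond Cauchy--Schwarz.

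One small caveat on your closing remark: $\Tr\sigma=1-\langle\phi|\phi\rangle$ requires $\Tr\rho=1$, which the lemma does not assume. In general $\Tr\sigma=\Tr\rho-\langle\phi|\phi\rangle$. The normalized version does hold in the paper's application, where $\rho$ is a density matrix.
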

\begin{proof}
    For any vector $\ket{x}$, Cauchy-Schwarz applied to the positive semidefinite bilinear form $\rho$ gives us $|\langle x|\rho |G\rangle|^2 \leq \langle x|\rho | x\rangle \langle G|\rho | G\rangle$, which is equivalent to $|\langle x|\phi\rangle|^2 \leq \langle x|\rho|x\rangle$ and proves that $\sigma = \rho - \ket{\phi}\!\bra{\phi}\succeq 0$. By definition of $\ket\phi$ we have $\langle \phi |G\rangle = \sqrt{\alpha}$, and hence $\sigma \ket{G} = \rho\ket{G}-\sqrt{\alpha}\ket{\phi}=0$.
\end{proof}

We are now in a position to establish the $\mathrm{F}=2/3$ threshold for mixed states. 
\begin{theorem}
    [$2/3$ certificate for mixed states] Let $\rho$ be an $n$-fermion density matrix on $\mathcal{H}_n$ and let $\ket{G}$ be a Slater determinant such that $\alpha = \langle G|\rho |G\rangle \geq 2/3$. Then every Slater determinant $\ket{H}$ satisfies    \begin{multline}\label{eq:hrhoh}
        \langle H |\rho|H\rangle \leq \alpha - \left(\frac{3\alpha}{2}-1\right)\left(1-|\langle G|H\rangle|^2\right) \\+ 2R_G\sqrt{1-|\langle G|H\rangle|^2}
\end{multline}
with $R_G$ defined in Eq.~\eqref{eq:RG}. In particular, if $\alpha>2/3$ and $R_G =0$, then $\ket{G}$ is the unique globally closest Slater to $\rho$, up to phase. Moreover, if $\alpha\geq 2/3+\Delta$ for some $\Delta >0$, then 
\begin{equation}\label{eq:opta23RG}
    \OPT \leq \alpha + \frac23 \frac{R_G^2}{\Delta}. 
\end{equation}
\end{theorem}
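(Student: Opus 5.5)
Fix a Slater $\ket{H}$ and decompose it relative to $\ket{G}$ into particle-hole sectors as in the proof of Theorem~\ref{thm:certificate}:
\[
\ket{H}=s\ket{G}+\ket{h_1}+\ket{h_{\ge2}},
\]
with the phase chosen so that $s=|\langle G|H\rangle|\ge 0$. Write $p_1=\|\ket{h_1}\|^2$ and $q=\|\ket{h_{\ge2}}\|$, so $s^2+p_1+q^2=1$. Set $\ket{h_\perp}\coloneqq\ket{h_1}+\ket{h_{\ge2}}$, which is orthogonal to $\ket{G}$ and has $\|\ket{h_\perp}\|^2=1-s^2$. By Lemma~\ref{lem:tail} (which makes no reference to $\rho$) we have $q\le\sqrt2(1-s)$.

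Next apply Lemma~\ref{lem:coherent} to write $\rho=\ket{\phi}\!\bra{\phi}+\sigma$, where $\ket\phi=\rho\ket G/\sqrt\alpha$, $\sigma\succeq0$ and $\sigma\ket G=0$. Then
\[
\langle H|\rho|H\rangle=|\langle H|\phi\rangle|^2+\langle h_\perp|\sigma|h_\perp\rangle .
\]
The second term only involves $\ket{h_\perp}$ because $\sigma$ annihilates $\ket G$. Two facts control $\ket\phi$:
\begin{itemize}
\item $\langle G|\phi\rangle=\sqrt\alpha$;
\item the single-excitation component is $\Pi_1^G\ket\phi=\Pi_1^G\rho\ket G/\sqrt\alpha$, of norm $R_G/\sqrt\alpha$.
\end{itemize}
Let $\beta\coloneqq\|\Pi_{\ge2}\ket\phi\|^2$. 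Then $\|\phi\|^2=\langle G|\rho^2|G\rangle/\alpha\le 1$, so $\alpha+R_G^2/\alpha+\beta\le1$. Also $\Tr\sigma=1-\|\phi\|^2$, and therefore $\langle h_\perp|\sigma|h_\perp\rangle\le \|\sigma\|(1-s^2)$ with $\|\sigma\|\le\Tr\sigma\le 1-\alpha-\beta$.

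For the coherent term, Cauchy--Schwarz sector by sector gives
\[
|\langle H|\phi\rangle|\le \sqrt\alpha\, s+\tfrac{R_G}{\sqrt\alpha}\sqrt{p_1}+\sqrt\beta\, q .
\]
Squaring, the cross terms involving $R_G$ are bounded by $2R_G\sqrt{1-s^2}$ plus $O(R_G^2)$ pieces, which I would absorb using $p_1\le 1-s^2$ and $s\le1$. The remaining pure part is $(\sqrt\alpha s+\sqrt\beta q)^2\le(\sqrt\alpha s+\sqrt{2\beta}(1-s))^2$.

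The main obstacle is combining $(\sqrt\alpha s+\sqrt{2\beta}(1-s))^2+(1-\alpha-\beta)(1-s^2)$ into
\[
\alpha-\left(\tfrac{3\alpha}{2}-1\right)(1-s^2).
\]
My plan is to use $(x+y)^2\le x^2/\lambda+y^2/(1-\lambda)$ with a well-chosen $\lambda$, or to optimize over $\beta\in[0,1-\alpha]$. After that, reduce to a one-variable inequality in $s\in[0,1]$ using $1-s\le 1-s^2$ and $s^2\le s$. The constant $3/2$ should emerge exactly at $\alpha=2/3$, where the worst case has $\beta=1-\alpha$ and $\sigma=0$, matching the pure-state threshold. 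If the $R_G^2$ leftovers do not fit, I would weaken the $R_G$ cross term using $p_1\le1-s^2$ before squaring.

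Given \eqref{eq:hrhoh}, set $x=\sqrt{1-s^2}\in[0,1]$. The right-hand side is $\alpha-(\tfrac32\alpha-1)x^2+2R_Gx$.
\begin{itemize}
\item If $\alpha>2/3$ and $R_G=0$, this is $<\alpha$ unless $x=0$. Hence any maximizer has $s=1$, i.e.\ $\ket H=\ket G$ up to phase.
\item If $\alpha\ge2/3+\Delta$, then $\tfrac32\alpha-1\ge\tfrac32\Delta$. Maximizing the quadratic in $x$ gives $\alpha+R_G^2/(\tfrac32\Delta)=\alpha+\tfrac23R_G^2/\Delta$, which is \eqref{eq:opta23RG}.
\end{itemize}
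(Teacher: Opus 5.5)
Your outline has the same skeleton as the paper's proof: the split $\rho=\ket{\phi}\!\bra{\phi}+\sigma$ from Lemma~\ref{lem:coherent}, sector-wise Cauchy--Schwarz, Lemma~\ref{lem:tail}, and the same final maximization over $x=\sqrt{1-s^2}$. However, the $R_G$ bookkeeping as you wrote it does not close.

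\textbf{The discarded $R_G^2/\alpha$ term.} You bound $\Tr\sigma\le 1-\alpha-\beta$. The exact identity is $\Tr\sigma=1-\|\phi\|^2=1-\alpha-R_G^2/\alpha-\beta$, so the budget $1-\alpha-\beta$ is shared between $\sigma$ and the single-excitation part of $\phi$. Once you discard $-R_G^2/\alpha$, the leftover $\frac{R_G^2}{\alpha}p_1$ from squaring cannot be absorbed using $p_1\le 1-s^2$ and $s\le1$, and your chain genuinely overshoots. Take $\alpha=2/3$, $\beta=1/12$, $R_G^2=1/6$ (so $\alpha+R_G^2/\alpha+\beta=1$), $s=0.99$, $q=\sqrt2(1-s)$, $p_1=0.0197$. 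Then $(\sqrt\alpha s+\tfrac{R_G}{\sqrt\alpha}\sqrt{p_1}+\sqrt\beta q)^2+(1-\alpha-\beta)(1-s^2)\approx 0.7839$, while the right-hand side of \eqref{eq:hrhoh} is $\approx 0.7818$. The repair is to keep the identity, so that $\frac{R_G^2}{\alpha}p_1+\Tr\sigma\,(1-s^2)\le(1-\alpha-\beta)(1-s^2)$; this is how the paper absorbs the $R_G^2$ term.

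\textbf{The cross term.} Bounding $2(\sqrt\alpha s+\sqrt\beta q)\frac{R_G}{\sqrt\alpha}\sqrt{p_1}$ by $2R_G\sqrt{1-s^2}$ requires $\sqrt\alpha s+\sqrt\beta q\le\sqrt\alpha$. This does not follow from $s\le1$. It follows from $\beta\le1-\alpha\le\alpha/2$ (one place $\alpha\ge2/3$ enters) together with $q\le\sqrt2(1-s)$.

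\textbf{Your main obstacle.} With these two repairs, the inequality you call the main obstacle is exactly what remains. Your idea of inserting $q\le\sqrt2(1-s)$ before combining works, and gives a slightly simpler finish than the paper's, which keeps $p,q$ free and proves $2\lambda sq-\lambda^2p\le\tfrac12(p+q^2)$ by concavity in $\lambda=w/\sqrt\alpha$. Expanding, right-hand side minus left-hand side equals $(1-s)\bigl[(\tfrac{\alpha}{2}-\beta)+s\,(\tfrac{\alpha}{2}+3\beta-2\sqrt{2\alpha\beta})\bigr]$. The bracket is linear in $s$. At $s=0$ it equals $\tfrac{\alpha}{2}-\beta\ge0$, again by $\beta\le\alpha/2$. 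At $s=1$ it equals $(\sqrt\alpha-\sqrt{2\beta})^2\ge0$. A fixed-weight bound $(x+y)^2\le x^2/\lambda+y^2/(1-\lambda)$ cannot replace this, because the inequality is tight at $s=1$.

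Your deduction of uniqueness and of \eqref{eq:opta23RG} from \eqref{eq:hrhoh} matches the paper's.
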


\begin{proof}
Let $\ket{H}$ be an arbitrary Slater and decompose it into particle-hole sectors relative to $\ket{G}$,
\begin{equation}
    \ket{H} = s\ket{G} + \ket{h_1} + \ket{h_{\geq 2}},
\end{equation}
with overall phase chosen so that $s = |\langle G|H\rangle|$. We set $p = \|\ket{h_1}\|^2$ and $q = \|\ket{h_{\geq 2}}\|$ so that $s^2+p+q^2=1$, and Lemma~\ref{lem:tail} then gives $q \leq \sqrt{2}(1-s)$. Using Lemma~\ref{lem:coherent}, we write 
\begin{equation}
\langle H|\rho|H\rangle = |\langle H|\phi\rangle|^2 + \langle H |\sigma |H\rangle,
\end{equation}
and we bound the two terms separately. Using the notation of Lemma~\ref{lem:coherent}, we decompose $\ket\phi$ as
\begin{equation}
    \ket\phi = \sqrt{\alpha}\ket G + \ket{\phi_1}+\ket{\phi_{\geq 2}}
\end{equation}
with $\|\ket{\phi_1}\|=R_G/\sqrt{\alpha}$. We also write $w = \|\ket{\phi_{\geq 2}}\|$ and $T = \Tr\sigma$. Due to the proliferation of variables, we relist some definitions: 
\begin{equation}
    \begin{aligned}
    \alpha &= \langle G|\rho|G\rangle,\,\,\, \ket\phi = \rho\ket{G}/\sqrt{\alpha}\quad \ket{G},\ket{H} \,\,\mathrm{Slaters},\\
    s &= |\langle G|H\rangle| ,\quad p =\|\ket{h_1}\|^2,\quad q = \|\ket{h_{\geq 2}}\|,\\
    w&= \|\ket{\phi_{\geq 2}}\|, \, \sigma = \rho - \ket{\phi}\!\bra{\phi},\,\,\,\quad T = \Tr\sigma.
    \end{aligned}
\end{equation}
The trace of $\rho$ can be decomposed as $\|\ket\phi\|^2 + T$ (note $\ket{\phi}$ need not be unit norm), or 
\begin{equation}\label{eq:arwt}
    \Tr \rho = \alpha + \frac{R_G^2}{\alpha} + w^2 +T = 1. 
\end{equation}
Since $R_G^2/\alpha \geq 0$ and $T\geq 0$, we have $w^2 \leq 1-\alpha$. In the regime $\alpha \geq 2/3$, we may write $1-\alpha \leq \alpha/2$ and hence $w^2\leq \alpha/2$. Hence the bound $q\leq \sqrt{2}(1-s)$ implies 
\begin{equation}
    wq \leq \sqrt{\alpha/2} \cdot \sqrt{2}(1-s) = \sqrt{\alpha}(1-s).
\end{equation}
It follows that $\sqrt{\alpha}s+wq\leq \sqrt{\alpha}$ in the regime $\alpha \geq 2/3$. Now, applying Cauchy-Schwarz to each particle-hole sector ($0,1,\geq 2$) gives 
\begin{equation}
    |\langle H|\phi\rangle| \leq \sqrt{\alpha}s + \frac{R_G}{\sqrt{\alpha}} \sqrt{p} + wq,
\end{equation}
while $\sigma \ket G = 0$ and $\|\sigma \| \leq \tr \sigma $ give $\langle H|\sigma |H\rangle = (\bra{h_1}+\bra{h_{\geq 2}}) \sigma (\ket{h_1} + \ket{h_{\geq 2}}) \leq T(p+q^2)$. Expanding the square, using $\sqrt{\alpha}s+wq \leq \sqrt{\alpha}$ on the cross term and $p\leq p+q^2$, and substituting using Eq.~\eqref{eq:arwt}, we obtain
\begin{equation}
    \begin{aligned}
        &\langle H|\rho|H\rangle \\
        &\leq \left(\sqrt{\alpha}s+wq\right)^2 + 2R_G \sqrt{p} + \frac{R_G^2}{\alpha} p + T(p+q^2)\\
        &\leq \left(\sqrt{\alpha}s+wq\right)^2 +(1-\alpha-w^2)(p+q^2)+2R_G\sqrt{p}.
    \end{aligned}
\end{equation}
We claim that the first two terms of the right-hand side satisfy
\begin{multline}\label{eq:biginequality}
    \left(\sqrt{\alpha}s+wq\right)^2 +(1-\alpha-w^2)(p+q^2) \\
    \leq \alpha - \left(\frac{3\alpha}{2}-1\right)(p+q^2).
\end{multline}
Together with the observation that since $p+q^2=1-s^2$ we have $\sqrt{p}\leq \sqrt{1-s^2}$, this would establish Eq.~\eqref{eq:hrhoh}. 
\par 
Let us now prove Eq.~\eqref{eq:biginequality}. Expanding the left-hand side and substituting $s^2=1-p-q^2$ reduces the above to $2\lambda sq-\lambda^2p\leq \frac12(p+q^2)$, where $\lambda \coloneqq w/\sqrt{\alpha} \in [0,1/\sqrt{2}]$. The left-hand side is concave in $\lambda$. If its maximum over $[0,1/\sqrt{2}]$ is attained at an interior stationary point, then $p>0$ and $\lambda = sq/p\leq 1/\sqrt{2}$, so the maximum value is $s^2q^2/p \leq p/2\leq \frac12(p+q^2)$. Otherwise, the maximum is attained at $\lambda = 1/\sqrt{2}$ with value $\sqrt{2}sq-p/2$. Squaring the inequality $s + q/\sqrt{2}\leq 1$ and substituting $p=1-s^2-q^2$ gives $\sqrt{2}sq\leq p+q^2/2$, so the value is again at most $\frac12 (p+q^2)$. This proves the inequality $2\lambda sq-\lambda^2p\leq \frac12(p+q^2)$ and hence Eq.~\eqref{eq:biginequality}, which is equivalent, and establishes Eq.~\eqref{eq:hrhoh}. \par

Moreover, if $\alpha>2/3$ and $R_G = 0$, then Eq.~\eqref{eq:hrhoh} gives $\langle H|\rho|H\rangle < \alpha$ whenever $|\langle G|H\rangle|<1$. This proves uniqueness of the global optimum. \par 

Finally, setting $X = \sqrt{1-s^2}$ and $\mu = \frac{3\alpha}{2}-1$, the right-hand side of Eq.~\eqref{eq:hrhoh} is at most $\alpha + \sup_{X\geq 0} (-\mu X^2 + 2R_G X) = \alpha + R_G^2/\mu$, and $\alpha\geq 2/3+\Delta$ gives $\mu \geq 3\Delta/2$, which establishes Eq.~\eqref{eq:opta23RG}, concluding the proof. 
\end{proof}
Since pure states are a subset of mixed states, the construction of Theorem~\ref{thm:sharp} shows that the threshold $2/3$ cannot be lowered. In conclusion, each result of the main text holds for mixed states upon replacing $|\langle S|\psi\rangle|^2$ with $\langle S|\rho|S\rangle$ wherever needed, replacing $B_G$ with $R_G$, and using the classical access model in Definition~\ref{def:succ}.

\section{Technical lemmas}
\label{app:lemmas}

We collect various lemmas used in the proofs in this work.

\begin{lemma}[Truncating from the 1-RDM]
\label{lem:active}
Let $\rho$ be a density matrix in $\mathcal{H}_n$ and let
$\gamma$ be its one-body density matrix $\gamma_{ij}=\operatorname{tr}(\rho\,c_j^\dagger c_i)$. Let $\mathcal A\subseteq \mathcal H_1$ be a subspace with projector
$P_{\mathcal A}$ and $\dim\mathcal A\ge n$. Fix $\varepsilon>0$ and suppose that every orbital
outside $\mathcal A$ has occupation at most $\varepsilon/n$,
\begin{equation}
  \|(I-P_{\mathcal A})\gamma(I-P_{\mathcal A})\| \le \varepsilon/n.
\end{equation}
Let $\mathcal S_{\mathcal A}$ denote the set of Slater determinants whose
occupied orbitals all lie in $\mathcal A$ and define
\begin{equation}
  \OPT_{\mathcal A}
  :=
  \max_{|S\rangle\in\mathcal S_{\mathcal A}}
  \langle S|\rho|S\rangle .
\end{equation}
Then
\begin{equation}
  \OPT
  \le
  \OPT_{\mathcal A}+\varepsilon.
\end{equation}
\end{lemma}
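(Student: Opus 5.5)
Take an optimal Slater $\ket{S}=\ket{u_1}\wedge\cdots\wedge\ket{u_n}$ with $\langle S|\rho|S\rangle=\OPT$. We will project its occupied space onto $\mathcal A$, build from this projection a Slater $\ket{S'}\in\mathcal S_{\mathcal A}$, and show that the fidelity drops by at most $\varepsilon$. Let $W=\Span\{u_1,\ldots,u_n\}$ and $Q=I-P_{\mathcal A}$. The basic quantity is the occupation of $\rho$ outside $\mathcal A$: the observable $N_Q=\sum_{ij}Q_{ij}c_i^\dagger c_j$ (number of particles in $\mathcal A^\perp$) satisfies $\Tr(\rho N_Q)=\tr(Q\gamma Q)\le \operatorname{rank}(Q\gamma Q)\,\|Q\gamma Q\|$. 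This is too weak, since the rank can be large. The useful quantity is instead the restriction to the at most $n$-dimensional space $QW$. Let $R$ be the projector onto $QW\subseteq\mathcal A^\perp$, which has rank at most $n$. Then $\Tr(\rho N_R)=\tr(R\gamma R)\le n\|Q\gamma Q\|\le\varepsilon$. So $\rho$ has at most $\varepsilon$ expected particles in the small subspace $\mathrm{ran}\,R$.

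Next choose the replacement Slater. Set $\mathcal B=\mathrm{ran}\,R$. We have $W\subseteq \mathcal A\oplus\mathcal B$, because each $u_k=P_{\mathcal A}u_k+Qu_k$ and $Qu_k\in\mathcal B$. Define the projection $E_0$ onto the sector of $\mathcal H_n$ with zero particles in $\mathcal B$, i.e.\ $E_0=\mathbbm{1}[N_R=0]$. Since $N_R$ has nonnegative integer spectrum, $\Tr(\rho(1-E_0))\le\Tr(\rho N_R)\le\varepsilon$. Write $\langle S|\rho|S\rangle$ and split $\rho$ relative to $E_0$. The cleanest route is to bound $\langle S|\rho|S\rangle\le \langle S|E_0\rho E_0|S\rangle+\varepsilon$. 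This does not follow from a naive Cauchy–Schwarz, which gives a $\sqrt{\varepsilon}$ term, so instead we use operator inequalities on the pure-state part. For pure $\rho=\ket\psi\!\bra\psi$, write $|\langle S|\psi\rangle|\le|\langle S|E_0\psi\rangle|+|\langle S|(1-E_0)\psi\rangle|$, and treat this carefully; see the obstacle below.

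Then $E_0\ket S$ lies in $\wedge^n(\mathcal A\oplus\mathcal B)$ with zero particles in $\mathcal B$, i.e.\ in $\wedge^n\mathcal A$, and it is a (possibly unnormalized) decomposable vector. Indeed, $E_0$ applied to $\wedge_k(a_k+b_k)$ gives $\wedge_k a_k$ with $a_k=P_{\mathcal A}u_k$, by multilinearity and because every other term contains some $b_k$. A nonzero decomposable vector is a scalar multiple of a Slater in $\mathcal S_{\mathcal A}$ with norm at most $1$. Hence $|\langle S|E_0\psi\rangle|=|\langle E_0S|\psi\rangle|\le\sqrt{\OPT_{\mathcal A}}$, and for mixed $\rho$ likewise $\langle E_0S|\rho|E_0S\rangle\le\OPT_{\mathcal A}$.

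The main obstacle is getting an additive $\varepsilon$ rather than $\sqrt{\varepsilon}$ or $2\sqrt\varepsilon$ when combining the pieces. The plan is to exploit the freedom in the orbital basis of $W$. Choose $u_k$ to diagonalize $RW$-overlaps, i.e.\ singular vectors of $Q|_W$, so $Qu_k=\sin\theta_k\,b_k$ with orthonormal $b_k$, and $\ket S=\sum_{T}(\prod\ldots)$ decomposes over subsets $T$ of rotated orbitals with weights $\prod_{k\in T}\sin^2\theta_k\prod_{k\notin T}\cos^2\theta_k$. Then we would compare $\OPT$ with the average over the product distribution, aiming for $\OPT\le \OPT_{\mathcal A}+\Tr(\rho N_R)$ via convexity in each $\theta_k$. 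This mirrors a ``rotate each orbital into $\mathcal A$'' argument: the fidelity is a quadratic form in $(\cos\theta_k,\sin\theta_k)$ per orbital, and its loss should be bounded by the occupation of $b_k$. If this convexity step fails, we will fall back to proving the weaker $\OPT\le\OPT_{\mathcal A}+O(\sqrt{\varepsilon})$, which would require readjusting the thresholds used in Theorem~\ref{thm:upper}.
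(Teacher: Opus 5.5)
Your setup matches the paper's proof. Your $\mathrm{ran}\,R=QW$ is exactly the paper's $\mathcal W$, the span of the $b_\alpha$ with $\sin\theta_\alpha\neq0$. Your bound $\Tr(\rho(1-E_0))\le\Tr(\rho N_R)\le\varepsilon$ is the paper's $\widehat Q_{\mathcal W}\preceq\widehat N_{\mathcal W}$ step. Your observation that $E_0\ket{S}=\bigwedge_k P_{\mathcal A}u_k$ is decomposable in $\wedge^n\mathcal A$ is the paper's $c\ket{G_{\mathcal A}}$.

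However, the proposal has a genuine gap: it never proves the additive-$\varepsilon$ bound, and the route you sketch for that step fails. The intermediate inequality you call the cleanest route, $\langle S|\rho|S\rangle\le\langle S|E_0\rho E_0|S\rangle+\varepsilon$, is false. Take $n=1$, $\ket{S}=\cos\theta\ket{a}+\sin\theta\ket{b}$ with $a\in\mathcal A$, $b\perp\mathcal A$, $\sin^2\theta=\varepsilon$, and $\rho=\ket{S}\!\bra{S}$. The hypothesis holds with equality, and $E_0=I-\ket{b}\!\bra{b}$. Yet the left side is $1$ while the right side is $\cos^4\theta+\varepsilon=1-\varepsilon+\varepsilon^2<1$. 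The convexity/product-distribution plan over the angles $\theta_k$ is not specified enough to check. The $O(\sqrt\varepsilon)$ fallback would not prove the lemma as stated.

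The missing idea is to keep the norms you discarded; no convexity is needed. The pieces $E_0\ket{S}=c\ket{G_{\mathcal A}}$ and $(1-E_0)\ket{S}=\sqrt{1-c^2}\,\ket{K}$ are orthogonal, with $c=\prod_k\cos\theta_k$. Here $\ket{G_{\mathcal A}}\in\mathcal S_{\mathcal A}$ and $\ket{K}$ are unit vectors, and $\ket{K}$ lies in the range of $1-E_0$. Bounding $|\langle E_0S|\psi\rangle|\le\sqrt{\OPT_{\mathcal A}}$ drops the factor $c$. Dropping $\sqrt{1-c^2}$ on the other piece as well is exactly what produces the $\sqrt\varepsilon$ cross term. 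Instead, Cauchy--Schwarz for the positive semidefinite form $\rho$ gives
\begin{equation*}
\langle S|\rho|S\rangle\le\Bigl(c\sqrt{\langle G_{\mathcal A}|\rho|G_{\mathcal A}\rangle}+\sqrt{1-c^2}\,\sqrt{\langle K|\rho|K\rangle}\Bigr)^2 .
\end{equation*}
Here $\langle G_{\mathcal A}|\rho|G_{\mathcal A}\rangle\le\OPT_{\mathcal A}$. Also $\langle K|\rho|K\rangle\le\|(1-E_0)\rho(1-E_0)\|\le\Tr(\rho(1-E_0))\le\varepsilon$. Now apply Cauchy--Schwarz in $\mathbb{R}^2$ to the unit vector $(c,\sqrt{1-c^2})$ and $(\sqrt{\OPT_{\mathcal A}},\sqrt{\varepsilon})$. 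This yields $\langle S|\rho|S\rangle\le\OPT_{\mathcal A}+\varepsilon$, and it covers mixed $\rho$ too.

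This is precisely the paper's argument. The correct comparison is with the normalized Slater $E_0\ket{S}/\|E_0\ket{S}\|$, not the unnormalized $E_0\ket{S}$.
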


\begin{proof}
Let $|G\rangle$ be an arbitrary $n$-particle Slater determinant. Let $\mathcal L\subseteq\mathcal H_1$ be the occupied one-particle
subspace of $|G\rangle$.

After a change of basis in $\mathcal{L}$, we may write
\begin{equation}
  |G\rangle
  =
  \prod_{\alpha=1}^n(
\cos\theta_\alpha\,a_\alpha^\dagger
  +
\sin\theta_\alpha\,b_\alpha^\dagger )|0\rangle .
\end{equation}
where the $a_\alpha^\dagger$ and $b_\alpha^\dagger$ create orthonormal orbitals in $\mathcal{A}$ and $\mathcal{A}^\perp$, respectively. In words, the procedure starts with a unit vector $\ell_1 \in \mathcal{L}$ and finds a unit vector $a_1 \in \mathcal{A}$ which maximizes $|\langle \ell_1|a_1\rangle|$, calling this angle $\theta_1$. One then considers the complement of $\ell_1$ in $\mathcal{L}$ and the complement of $a_1$ in $\mathcal{A}$ and repeats the procedure. The $b_\alpha^\dagger$ create orbitals corresponding to $\ell_\alpha - \langle \ell_\alpha|a_\alpha\rangle a_\alpha$, if this is nonzero, and $b_\alpha = 0$ otherwise. 

\par 

Expanding the product in $\ket{G}$ produces 
\begin{equation}
  |G\rangle
  =
  c\,|G_{\mathcal A}\rangle
  +
  \sqrt{1-c^2}\,|K\rangle,
\end{equation}
where $c=\prod_{\alpha=1}^n\cos\theta_\alpha$,  $|G_{\mathcal A}\rangle
=
a_1^\dagger\cdots a_n^\dagger|0\rangle$ is the Slater determinant whose occupied orbitals lie entirely in $\mathcal{A}$, and $|K\rangle$ is in the span of Slater determinants with at
least one particle in
\begin{equation}
  \mathcal W
  :=
  \Span\{b_\alpha:\sin\theta_\alpha\ne0, \alpha =1,\ldots, n\}.
\end{equation}
Let $P_{\mathcal W}$ be the one-particle projector onto $\mathcal W$. Since $\mathcal{W}\subset\mathcal{A}^\perp$, and every orbital in $\mathcal A^\perp$ has occupation at most
$\varepsilon/n$, we have
\begin{equation}
  \operatorname{tr}(P_{\mathcal W}\gamma)
  \le
  \dim(\mathcal W)\varepsilon/n
  \le
\varepsilon.
\end{equation}
Next, choose a basis $w_1,\ldots, w_r$ of $\mathcal{W}$ and a basis $u_1,\ldots, u_{m-r}$ of $\mathcal{W}^\perp$. Let 
\begin{equation}
    n_{\mu} = c_{w_\mu}^\dagger c_{w_\mu}
\end{equation}
be the occupation-number operator for $w_\mu$. Let $\widehat Q_{\mathcal W} = 1-\prod_{\mu=1}^r(1-n_\mu)$ be the many-body projector onto the subspace with at least one particle in $\mathcal W$, and let $\widehat N_{\mathcal W} =\sum_{\mu=1}^r n_\mu$ be
the number operator in $\mathcal W$. In a state with $\widehat{N}_{\mathcal{W}} = k$, we have $\widehat Q_{\mathcal{W}} = 1$ if $k\geq 1$ and $\widehat Q_{\mathcal{W}}=0$ if $k=0$. Hence
\begin{equation}
  \widehat Q_{\mathcal W}\preceq \widehat N_{\mathcal W},
\end{equation}
where $A\preceq B$ means that $B-A$ is positive semidefinite. Thus we have
\begin{equation}
  \operatorname{tr}(\rho \widehat Q_{\mathcal W})
  \leq
  \operatorname{tr}(\rho \widehat N_{\mathcal W})
  =
  \operatorname{tr}(P_{\mathcal W}\gamma).
\end{equation}
In the last step, we have used
\begin{equation}
    \sum_{\mu=1}^r \tr(\rho c_{w_\mu}^\dagger c_{w_\mu}) = \sum_{\mu=1}^r\sum_{ij} (w_\mu)_i^*(w_\mu)_j \tr(\rho c_j^\dagger c_i).
\end{equation}
Thus
\begin{equation}
  \operatorname{tr}(\rho \widehat Q_{\mathcal W})\le \varepsilon,
\end{equation}
i.e. the overlap of $\rho$ with the subspace with at least one particle in $\mathcal{W}$ is upper bounded by $\varepsilon$. 
Since $|K\rangle$ is supported entirely on this subspace ($Q_{\mathcal W}|K\rangle=|K\rangle$) we have
\begin{equation}\label{eq:KrhoKeps}
  \langle K|\rho|K\rangle
  =
  \langle K|Q_{\mathcal W}\rho Q_{\mathcal W}|K\rangle
  \le
  \operatorname{tr}(\rho Q_{\mathcal W})
  \le
  \varepsilon.
\end{equation}
Next, we expand
\begin{align}
  \langle G|\rho|G\rangle
  &=
  c^2\langle G_{\mathcal A}|\rho|G_{\mathcal A}\rangle
  +(1-c^2)\langle K|\rho|K\rangle \nonumber\\
  &\quad
  +2c\sqrt{1-c^2}\,
  \operatorname{Re}\langle G_{\mathcal A}|\rho|K\rangle.
\end{align}
Since $\rho\succeq0$ defines a Hermitian form, Cauchy--Schwarz gives
\begin{equation}
  |\langle G_{\mathcal A}|\rho|K\rangle|
  \le
  \sqrt{
  \langle G_{\mathcal A}|\rho|G_{\mathcal A}\rangle
  \langle K|\rho|K\rangle}
\end{equation}
and hence 
\begin{equation}
    \langle G|\rho|G\rangle
  \le
  \left(
    c\sqrt{\langle G_{\mathcal A}|\rho|G_{\mathcal A}\rangle}
    +
    \sqrt{1-c^2}\sqrt{\langle K|\rho|K\rangle}
  \right)^2.
\end{equation}
Next, we use $\langle G_{\mathcal A}|\rho|G_{\mathcal A}\rangle\leq \OPT_{\mathcal{A}}$ and Eq.~\eqref{eq:KrhoKeps} to write 
\begin{equation}
    \langle G|\rho|G\rangle
  \le\left(
    c\sqrt{\OPT_{\mathcal A}}
    +
    \sqrt{1-c^2}\sqrt{\varepsilon}
  \right)^2.
\end{equation}
Next, applying Cauchy-Schwarz to the real vectors $(c,\sqrt{1-c^2})$ and $(\sqrt{\OPT_{\mathcal A}}, \sqrt{\varepsilon})$ gives us
\begin{align}
  \langle G|\rho|G\rangle
  &\le
  \OPT_{\mathcal A}+\varepsilon.
\end{align}
Since $\ket{G}$ was an arbitrary Slater, 
\begin{equation}
  \OPT
  \le
  \OPT_{\mathcal A}+\varepsilon.
\end{equation}
\end{proof}

The estimator for the 1-RDM $\gamma$ may be noisy. The next lemma shows that the active-space truncation of Lemma \ref{lem:active} is robust to noise. The key idea is that only the subspace $\mathcal{A}$, not a particular choice of basis of $\mathcal{A}$, needs to be moderately robust to noise in order for the truncation of the 1-RDM to be useful.

\begin{lemma}[Truncating from a noisy 1-RDM]
\label{lem:noisy-active}
In the setting of Lemma~\ref{lem:active}, let $\hat\gamma$ be a Hermitian matrix
estimating the 1-RDM $\gamma$ with error $\|\hat\gamma-\gamma\|\le\eta$. Fix a threshold $\hat\tau>\eta$ and let $\mathcal A$ be the span of the eigenvectors $\hat\psi^a$ of $\hat \gamma$ whose eigenvalues exceed
$\hat\tau$, i.e. $ \mathcal A = \operatorname{span}\{\hat\psi^a \mid \hat\nu_a>\hat\tau\}$.
Let $P_{\mathcal A}$ be a projector onto $\mathcal{A}$. Then $\mathcal A$ satisfies the hypothesis of
Lemma~\ref{lem:active} with
\begin{equation}
  \|(I-P_{\mathcal A})\gamma(I-P_{\mathcal A})\|\le\hat\tau+\eta
\end{equation}
and $\dim\mathcal A\le\frac{n}{\hat\tau-\eta}$. In particular, taking $\eta=\varepsilon/4n$ and $\hat\tau=3\varepsilon/4n$ gives
$\|(I-P_{\mathcal A})\gamma(I-P_{\mathcal A})\|\le\varepsilon/n$ and
$\dim\mathcal A\le 2n^2/\varepsilon$, so that
$\OPT\le\OPT_{\mathcal A}+\varepsilon$ with
$\dim\mathcal A=O(n^2/\varepsilon)$.\par 

Furthermore, an estimator $\hat\gamma$ satisfying $\|\hat\gamma-\gamma\|\leq \eta$ with probability at least $1-\delta$ can be obtained from 
\begin{equation}\label{eq:nfcs}
    n_{\mathrm{fcs}} = O\left(\frac{m^2\log(m/\delta)}{\eta^2}\right)
\end{equation}
quantum copies of $\rho$ via fermionic classical shadows~\cite{zhao2021fermionic}.
\end{lemma}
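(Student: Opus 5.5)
The statement to prove is Lemma~\ref{lem:noisy-active}. It has three parts: a tail bound on $\gamma$ restricted to $\mathcal A^\perp$, a dimension bound on $\mathcal A$, and a sample-complexity claim for estimating $\hat\gamma$. The first two are Weyl-type perturbation arguments, carried out at the level of quadratic forms; the third is a citation plus a matrix-concentration step.

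\textbf{Tail bound.} Take any unit vector $\ket{v}\in\mathcal A^\perp$. Since $\mathcal A^\perp$ is spanned by eigenvectors of $\hat\gamma$ with eigenvalues $\hat\nu_a\le\hat\tau$, we have $\langle v|\hat\gamma|v\rangle\le\hat\tau$. Then
\begin{equation}
\langle v|\gamma|v\rangle\le\langle v|\hat\gamma|v\rangle+\|\hat\gamma-\gamma\|\le\hat\tau+\eta .
\end{equation}
The operator $(I-P_{\mathcal A})\gamma(I-P_{\mathcal A})$ is positive semidefinite, because $\gamma\succeq0$. Its norm is therefore the supremum of $\langle v|\gamma|v\rangle$ over unit $\ket v\in\mathcal A^\perp$, which gives the claimed bound $\hat\tau+\eta$.

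\textbf{Dimension bound.} Take any unit vector $\ket u\in\mathcal A$. Expanding in the eigenbasis of $\hat\gamma$ restricted to $\mathcal A$ gives $\langle u|\hat\gamma|u\rangle>\hat\tau$, so $\langle u|\gamma|u\rangle>\hat\tau-\eta>0$. Sum this over an orthonormal basis of $\mathcal A$. Because $\gamma\succeq0$, the sum is at most $\operatorname{tr}\gamma=n$. Hence $\dim\mathcal A\,(\hat\tau-\eta)\le n$, which is the stated bound.

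\textbf{Specialization.} Set $\eta=\varepsilon/4n$ and $\hat\tau=3\varepsilon/4n$. The tail bound becomes $\varepsilon/n$, and $\dim\mathcal A\le n/(\varepsilon/2n)=2n^2/\varepsilon$. Lemma~\ref{lem:active} then gives $\OPT\le\OPT_{\mathcal A}+\varepsilon$. If $\dim\mathcal A<n$, we pad $\mathcal A$ with extra orbitals. Enlarging $\mathcal A$ only shrinks $\mathcal A^\perp$, so the tail bound still holds.

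\textbf{Sample complexity.} Here I would invoke fermionic classical shadows~\cite{zhao2021fermionic}. These produce unbiased estimates of every Majorana quadratic $\Tr(\rho\,\gamma_p\gamma_q)$, each with variance $O(1)$ per snapshot. The entries $\gamma_{ij}$ are linear combinations of $O(1)$ such quadratics. With $N$ copies, median-of-means together with a union bound over the $O(m^2)$ entries gives entrywise error $O(\sqrt{\log(m/\delta)/N})$ with probability at least $1-\delta$. I would then bound the operator norm by the Frobenius norm, $\|\hat\gamma-\gamma\|\le\|\hat\gamma-\gamma\|_F\le m\max_{ij}|\hat\gamma_{ij}-\gamma_{ij}|$. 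Requiring this to be at most $\eta$ gives $N=O(m^2\log(m/\delta)/\eta^2)$. Finally, replace $\hat\gamma$ by its Hermitian part, $(\hat\gamma+\hat\gamma^\dagger)/2$, which can only decrease the error.

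\textbf{Main obstacle.} The linear-algebra parts are routine. The one delicate point is getting the operator-norm concentration exactly at the stated $m^2$ scaling: the crude Frobenius bound just suffices. One could instead use a matrix Bernstein inequality to get a better rate, but that is not needed here.
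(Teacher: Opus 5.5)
Your tail bound, dimension bound and specialization are correct. The tail bound is the paper's own argument. Your dimension bound sums $\langle u|\gamma|u\rangle>\hat\tau-\eta$ over an orthonormal basis of $\mathcal A$ and uses $\tr(P_{\mathcal A}\gamma)\le\tr\gamma=n$. That is a valid, slightly more elementary alternative to the paper's Weyl-inequality count $\hat\nu_a\le\nu_a+\eta$.

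The gap is in the sample-complexity part. Your premise that the fermionic classical shadows of Ref.~\cite{zhao2021fermionic} estimate each Majorana quadratic with $O(1)$ variance per snapshot is false. The per-snapshot variance is $1/\lambda_{m,1}=\binom{2m}{2}/\binom{m}{1}=2m-1$: a snapshot returns a value of magnitude $2m-1$ for a given pair with probability $1/(2m-1)$, and zero otherwise. With the correct variance, median-of-means plus a union bound gives entrywise error $O(\sqrt{m\log(m/\delta)/N})$. Your Frobenius step then gives $\|\hat\gamma-\gamma\|=O(m^{3/2}\sqrt{\log(m/\delta)/N})$, i.e.\ $N=O(m^3\log(m/\delta)/\eta^2)$, a factor of $m$ worse than claimed. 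This loss is intrinsic to any route through the Frobenius norm. There are order $m^2$ entries, each of variance order $m$, so $\mathbb{E}\|\hat\gamma-\gamma\|_F^2=\Theta(m^3/N)$.

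The missing ingredient is exactly the matrix concentration step you set aside as unnecessary. In a single snapshot of the scheme of Ref.~\cite{zhao2021fermionic}, the Majorana pairs receiving a nonzero estimate form a perfect matching of the $2m$ Majoranas, each estimate of magnitude $2m-1$. Hence the antisymmetric matrix of pair estimates has operator norm $2m-1$. The single-shot estimator $\hat\gamma^{(s)}$, an affine function of it, therefore has operator norm of order $m$, even though its Frobenius norm is typically of order $m^{3/2}$.

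The paper uses this directly. The centered snapshots $X_s=\hat\gamma^{(s)}-\gamma$ satisfy $\|X_s\|\le 2m$ and $\|\mathbb{E}[X_s^2]\|\le(2m)^2$. The matrix Bernstein inequality then gives $\mathrm{Pr}[\|\hat\gamma-\gamma\|\ge\eta]\le 2m\exp(-cN\eta^2/m^2)$, hence $N=O(m^2\log(m/\delta)/\eta^2)$. Replacing your entrywise-plus-Frobenius argument with this step closes the gap. The rest of your proposal, including the Hermitian symmetrization, can stay as is.
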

\begin{proof}
Let $v$ be a unit vector outside $\mathcal A$. Then $v$ lies in the span of
the eigenvectors of $\hat\gamma$ with eigenvalue at most $\hat\tau$, so
$\langle v|\hat\gamma|v\rangle\le\hat\tau$. Hence
\begin{equation}
  \langle v|\gamma|v\rangle
  = \langle v|\hat\gamma|v\rangle - \langle v|(\hat\gamma-\gamma)|v\rangle
  \le \hat\tau+\eta.
\end{equation}
Since $\gamma\succeq 0$, the operator norm of
$(I-P_{\mathcal A})\gamma(I-P_{\mathcal A})$ is the maximum of
$\langle v|\gamma|v\rangle$ over unit vectors $v$ outside $\mathcal A$, and so
$\|(I-P_{\mathcal A})\gamma(I-P_{\mathcal A})\|\le\hat\tau+\eta$.

Next we bound the dimension of $\mathcal{A}$. The sorted eigenvalues of
$\hat\gamma$ and $\gamma$ satisfy $\hat\nu_a\le\nu_a+\eta$, so every
$\hat\nu_a>\hat\tau$ has $\nu_a>\hat\tau-\eta$. The number of such orbitals is
at most $\operatorname{tr}\gamma/(\hat\tau-\eta)=n/(\hat\tau-\eta)$, since
$\gamma\succeq 0$ and $\operatorname{tr}\gamma=n$. Hence
$\dim\mathcal A\le n/(\hat\tau-\eta)$. Both bounds hold for the stated $\eta$ and $\hat\tau$, and
$\OPT\le\OPT_{\mathcal A}+\varepsilon$ then follows from
Lemma~\ref{lem:active}.
\par 

We next justify Eq.~\eqref{eq:nfcs}. The fermionic classical shadow formalism provides single-shot estimators $\hat\gamma^{(s)}$ which are Hermitian $m\times m$ matrices satisfying $\mathbb{E}[\hat\gamma^{(s)}] = \gamma$. We consider averaging $N$ of them to obtain an estimator $\hat\gamma \coloneqq \frac1N\sum_{s=1}^N\hat\gamma^{(s)}$. By Ref.~\cite{zhao2021fermionic} (in particular Eq. (13) and ensuing discussion), we have the ``shadow norm" 
\begin{equation}
    \|\hat\gamma^{(s)}\| \leq 1/\lambda_{m,1} \equiv \frac{\binom{2m}{2}}{\binom{m}{1}} = 2m-1.
\end{equation}
We next apply a matrix concentration inequality~\cite{tropp2012user,tropp2015introduction}. Defining centered snapshots $X_s \coloneqq \hat\gamma^{(s)}-\gamma$, which satisfy $\mathbb{E}[X_s] = 0$, $\|X_s\| \leq 2m$, and $\|\mathbb{E}[X_s^2]\|\leq \mathbb{E}\|X_s\|^2 \leq (2m)^2$. Then the matrix Bernstein inequality yields 
\begin{equation}
    \mathrm{Pr}\left[\|\hat\gamma-\gamma\| \geq \eta \right]\leq 2m \exp\left(-\frac{cN\eta^2}{m^2}\right) 
\end{equation}
for a universal $O(1)$ constant $c>0$. In words, the probability that $\|\hat\gamma-\gamma\|\geq \eta$ decays exponentially with $N\eta^2/m^2$. Hence $\|\hat\gamma-\gamma\|\leq \eta$ holds with probability at least $1-\delta$ as long as $N = O(m^2\log(m/\delta)/\eta^2)$, as was to be shown. 
\end{proof}

The following lemma is a useful technical result on the closest Slater to a superposition of two Slaters. 
\begin{lemma}[Superpositions of Slaters]
\label{lem:Superposition}
Let $|E\rangle$ and $|F\rangle$ be $n$-particle Slater determinants
whose occupied one-particle subspaces $\mathcal E,\mathcal F\subseteq\mathcal H_1$
are orthogonal. Assume $n\geq 2$. Then for every $n$-particle Slater
determinant $|G\rangle$, we have
\begin{equation}
  |\langle E|G\rangle|+|\langle F|G\rangle|\le 1.
\end{equation}
Moreover, consider a superposition of the form
\begin{equation}\label{eq:psiEF}
|\psi\rangle=\frac{|E\rangle+\alpha|F\rangle}{\sqrt{1+|\alpha|^2}}
\end{equation}
for $|\alpha|>1$. The unique Slater determinant with maximal fidelity with $\ket{\psi}$ is $|F\rangle$ (up to phase), with fidelity $|\alpha|^2/(1+|\alpha|^2)$.
\end{lemma}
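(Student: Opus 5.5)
Write $\ket{G}$ with occupied subspace $\mathcal{G}\subseteq\mathcal H_1$, and let $P_{\mathcal E}$, $P_{\mathcal F}$ be the one-particle projectors onto $\mathcal E$ and $\mathcal F$. The plan is to reduce the inequality to singular values of overlap matrices, via the principal-angle idea already used in the proof of Lemma~\ref{lem:active}. Choose orthonormal bases $e_1,\ldots,e_n$ of $\mathcal E$ and $g_1,\ldots,g_n$ of $\mathcal G$. Then $\langle E|G\rangle=\det(\langle e_i|g_j\rangle)$. Take a singular value decomposition of this $n\times n$ overlap matrix with singular values $\cos\theta_1,\ldots,\cos\theta_n\in[0,1]$. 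This gives $|\langle E|G\rangle|=\prod_\alpha\cos\theta_\alpha$. In the same way, $|\langle F|G\rangle|=\prod_\alpha\cos\phi_\alpha$, where the $\cos\phi_\alpha$ are the singular values of $\langle f_i|g_j\rangle$. Equivalently, the $\cos^2\theta_\alpha$ are the eigenvalues of $P_{\mathcal G}P_{\mathcal E}P_{\mathcal G}$ restricted to $\mathcal G$, and the $\cos^2\phi_\alpha$ are those of $P_{\mathcal G}P_{\mathcal F}P_{\mathcal G}$.

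\textbf{Key constraint from orthogonality.} Since $\mathcal E\perp\mathcal F$, we have $P_{\mathcal E}+P_{\mathcal F}\preceq I$. Compressing to $\mathcal G$ gives
\begin{equation}
A+B\preceq I_{\mathcal G},\qquad A\coloneqq P_{\mathcal G}P_{\mathcal E}P_{\mathcal G}\big|_{\mathcal G},\quad B\coloneqq P_{\mathcal G}P_{\mathcal F}P_{\mathcal G}\big|_{\mathcal G}.
\end{equation}
Hence $\det A+\det B\le \det(A+B)\le 1$; the first inequality is superadditivity of the determinant on PSD matrices, which follows from Minkowski's determinant inequality. Since $\det A=|\langle E|G\rangle|^2$ and $\det B=|\langle F|G\rangle|^2$, this gives $x^2+y^2\le1$ with $x=|\langle E|G\rangle|$ and $y=|\langle F|G\rangle|$.

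That is weaker than the claimed $x+y\le 1$, so a sharper argument is needed, and I expect this to be the main obstacle. My plan is to use Minkowski in its $n$-th root form, $(\det A)^{1/n}+(\det B)^{1/n}\le\det(A+B)^{1/n}\le 1$. This gives $x^{2/n}+y^{2/n}\le1$. For $n\ge2$ we have $2/n\le1$, and for $x,y\in[0,1]$ this implies $x+y\le x^{2/n}+y^{2/n}\le 1$. This is exactly where the hypothesis $n\ge2$ enters: for $n=1$ only $x^2+y^2\le1$ holds, and indeed $(e+f)/\sqrt2$ gives $x+y=\sqrt2$. I would double-check that Minkowski applies to $A$ and $B$ as operators on the common $n$-dimensional space $\mathcal G$, which it does.

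\textbf{Superposition claim.} For $\ket\psi$ as in Eq.~\eqref{eq:psiEF}, any Slater $\ket{G}$ satisfies
\begin{equation}
\sqrt{1+|\alpha|^2}\,|\langle G|\psi\rangle|\le x+|\alpha|y\le x+|\alpha|(1-x)=|\alpha|-(|\alpha|-1)x\le|\alpha|.
\end{equation}
So the fidelity is at most $|\alpha|^2/(1+|\alpha|^2)$, and $\ket{F}$ attains this value. For uniqueness, equality requires $x=0$ (because $|\alpha|>1$) and $y=1$. The condition $y=1$ means $\det B=1$ with $B\preceq I$, so $\mathcal G=\mathcal F$ and $\ket{G}$ equals $\ket{F}$ up to phase.
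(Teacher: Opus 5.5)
Your proof is correct. It passes through the same intermediate bound $x^{2/n}+y^{2/n}\le 1$ that is implicit in the paper, but it gets there with a different key lemma.

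The paper bounds each overlap separately by AM--GM on the squared singular values, $|\langle E|G\rangle|^{2/n}\le u_E=\tfrac1n\operatorname{tr}A$. It then uses only the scalar constraint $\operatorname{tr}A+\operatorname{tr}B\le n$, which follows because each $g_\alpha$ has total weight at most one on $\mathcal E\oplus\mathcal F$. You instead use the operator constraint $A+B\preceq I_{\mathcal G}$ together with Minkowski's determinant inequality.

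Since the final comparison is only against the constant $1$, the full strength of Minkowski is not needed. The chain $(\det A)^{1/n}+(\det B)^{1/n}\le \tfrac1n\operatorname{tr}(A+B)\le 1$ is exactly the paper's argument and uses nothing beyond AM--GM. So the paper's route is more elementary. Yours is more compact and makes two things especially transparent: the role of orthogonality through $P_{\mathcal E}+P_{\mathcal F}\preceq I$, and the role of $n\ge 2$, which your $n=1$ counterexample shows is necessary.

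For uniqueness, your argument that $\det B=1$ with $B\preceq I_{\mathcal G}$ forces $\mathcal G=\mathcal F$ is valid. The paper gets this more directly: $|\langle F|G\rangle|=1$ between unit vectors already gives $\ket G=e^{i\phi}\ket F$ by the equality case of Cauchy--Schwarz.
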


\begin{proof}
Choose orthonormal bases $e_1,\ldots,e_n$ of $\mathcal E$ and
$f_1,\ldots,f_n$ of $\mathcal F$, so that
\begin{equation}
  |E\rangle=c_{e_1}^\dagger\cdots c_{e_n}^\dagger|0\rangle,\qquad
  |F\rangle=c_{f_1}^\dagger\cdots c_{f_n}^\dagger|0\rangle.
\end{equation}
Let $|G\rangle$ be an arbitrary Slater and choose an orthonormal basis $g_1, \ldots, g_n$ such that $\ket{G} = c_{g_1}^\dagger\cdots c_{g_n}^\dagger|0\rangle$. Define the overlap matrices
$M^{(E|G)}_{\mu\alpha}=\langle e_\mu|g_\alpha\rangle$,
$M^{(F|G)}_{\mu\alpha}=\langle f_\mu|g_\alpha\rangle$, so
\begin{equation}
  \langle E|G\rangle=\det M^{(E|G)},\quad
  \langle F|G\rangle=\det M^{(F|G)}.
\end{equation}
Since $\mathcal E\perp\mathcal F$,
each unit vector $g_\alpha$ satisfies
\begin{equation}
\sum_{\mu=1}^n \left( |\langle e_\mu|g_\alpha\rangle|^2 + |\langle f_\mu|g_\alpha\rangle|^2\right) \leq 1. 
\end{equation}
Summing over $\alpha$ gives 
\begin{equation}
  \sum_{\mu,\alpha=1}^n \left( |M^{(E|G)}_{\mu\alpha}|^2 +  |M^{(F|G)}_{\mu\alpha}|^2 \right) \leq n. 
\end{equation} 

We define the average weights 
\begin{equation}
u_E = \frac1n \sum_{\mu,\alpha} |M_{\mu\alpha}^{(E|G)}|^2,\quad u_F= \frac1n \sum_{\mu,\alpha} |M_{\mu\alpha}^{(F|G)}|^2.
\end{equation}
which satisfy $u_E + u_F \leq 1$ and $u_E,u_F \geq 0$. Next, we define singular values $\{s_\alpha^A\}_{\alpha=1}^n$ for $M^{(A|G)}$ for $A=E,F$, such that
\begin{equation}
|\det M^{(A|G)}| = \prod_{\alpha=1}^n |s_\alpha^A|.
\end{equation}
In terms of these singular values, we have
\begin{equation}
u_A = \frac1n \sum_{\alpha = 1}^n |s_\alpha^A|^2.
\end{equation}
Next, we apply the AM--GM inequality to the set of numbers $\{|s_\alpha^A|^2\}$, yielding
\begin{equation}
  \left(\prod_{\alpha=1}^n |s_\alpha^A|^2\right)^{1/n}
  \le
  \frac1n\sum_{\alpha=1}^n |s_\alpha^A|^2
  =
  u_A. 
\end{equation}
Raising both sides to the power of $n/2$ yields $|\det M^{(A|G)}| \leq u_A^{n/2}$. Hence $|\langle A|G\rangle| \leq u_A^{n/2} \leq u_A$ for $A=E,F$, where in the last step we have used that $n\geq 2$ and $0\leq u_A \leq 1$. Finally, we conclude that
\begin{equation}
|\langle E|G\rangle| + |\langle F|G\rangle| \leq u_E+u_F \leq 1,
\end{equation}
as was to be shown. This bounds the overlap of an arbitrary Slater $\ket{G}$ with an orthogonal pair of Slaters $\ket{E}$ and $\ket{F}$. 
\par 

Next, we consider the state $\ket{\psi}$ defined in Eq.~\eqref{eq:psiEF}. Let $\ket{G}$ again be an arbitrary Slater and write $x_E=|\langle E|G\rangle|$ and
$x_F=|\langle F|G\rangle|$. Without loss of generality, we assume $\alpha >0$, so that
\begin{equation}
|\langle G |\psi\rangle | \leq \frac{x_E + \alpha x_F}{\sqrt{1+\alpha^2}}.
\end{equation}
Because $\alpha>1$ by assumption, the numerator is maximized over the domain 
$\{x_E,x_F\ge0:\ x_E+x_F\le1\}$ at precisely $(x_E,x_F)=(0,1)$, yielding
\begin{equation}
|\langle G|\psi\rangle| \leq \frac{\alpha}{\sqrt{1+\alpha^2}}.
\end{equation}
This bound is attained when
$x_F=|\langle F|G\rangle|=1$, which means $\ket{G} = \ket{F}$ as $n$-particle
Slater determinants, up to an overall phase. Thus $\ket{F}$ is the unique
closest Slater, up to phase.
\end{proof}

The following Lemma identifies single particle-hole excitations as the tangent directions of the manifold of Slater determinants, following Refs.~\cite{Thouless1960,SzaboOstlund1996}. The stationarity condition for the fidelity mirrors the classic Brillouin condition for the stationarity of the variational Hartree-Fock energy~\cite{Brillouin1933}, and hence we refer to the following as the \textit{Brillouin condition for fidelity}. 

\begin{lemma}[Brillouin condition for fidelity]
\label{lem:brillouin}
Let
\begin{equation}
  |G\rangle=c_{g_1}^\dagger\cdots c_{g_n}^\dagger|0\rangle
\end{equation}
be a Slater determinant with occupied orbitals $g_1,\ldots, g_n$ and
unoccupied orbitals $\{f_a\}$ with corresponding creation operators $c_{f_a}^\dagger$, for $a=1,\ldots, m-n$. Let
\begin{equation}
  |G_{a\alpha}\rangle=c_{f_a}^\dagger c_{g_\alpha} |G\rangle
\end{equation}
denote the single particle-hole excitations on top of $\ket{G}$.  For a normalized $n$-fermion
state $|\psi\rangle$, define
\begin{equation}
  \mathrm{F}(G)=|\langle G|\psi\rangle|^2,
  \qquad
  B_G^2=\sum_{a,\alpha}|\langle G_{a\alpha}|\psi\rangle|^2 \geq 0.
\end{equation}
Then the first variation of $\mathrm{F}$ at $\ket{G}$ vanishes if and only if
\begin{equation}
  \langle G|\psi\rangle\, \langle G_{a\alpha}|\psi\rangle=0
 \end{equation}
 for all $a,\alpha$. 
In particular, if $\mathrm{F}(G) >0$, then $G$ is a
stationary point if and only if 
\begin{equation}
B_G =0.
\end{equation}

\end{lemma}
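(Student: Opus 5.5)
The plan is to parametrize curves on the Slater manifold through $\ket{G}$ by Thouless-type rotations and expand $\mathrm{F}$ to first order. Concretely, every smooth curve $\ket{G(\tau)}$ in $\mathcal S$ with $\ket{G(0)}=\ket{G}$ can be written, up to an overall phase $e^{i\phi(\tau)}$, as $e^{\tau X}\ket{G}$ with $X=\sum_{ij} X_{ij}c_i^\dagger c_j$ anti-Hermitian. This follows because any orthonormal frame $\{g_\alpha(\tau)\}$ near $\{g_\alpha\}$ is obtained from it by a unitary on $\mathcal H_1$, and the induced many-body action of $e^{\tau X}$ sends $c^\dagger_{g_1}\cdots c^\dagger_{g_n}\ket 0$ to the rotated Slater. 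I would split $X$ into blocks relative to the decomposition $\mathcal H_1=\mathrm{span}\{g_\alpha\}\oplus\mathrm{span}\{f_a\}$. The occupied–occupied block acts on $\ket G$ as multiplication by $\tr X_{gg}$, which is purely imaginary, and so only changes the phase. The unoccupied–unoccupied block annihilates $\ket G$. The particle-hole block gives $\sum_{a\alpha}T_{a\alpha}\ket{G_{a\alpha}}$ with $T_{a\alpha}=X_{f_a g_\alpha}$ arbitrary complex, and the hole-particle block annihilates $\ket G$. Hence
\begin{equation}
\ket{G(\tau)}=e^{i\phi(\tau)}\Big(\ket G+\tau\sum_{a\alpha}T_{a\alpha}\ket{G_{a\alpha}}+O(\tau^2)\Big),
\end{equation}
with the $T_{a\alpha}$ free. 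This gives the tangent space, in line with Thouless' theorem cited before the lemma.

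Next I would substitute this into $\mathrm{F}(G(\tau))=|\langle G(\tau)|\psi\rangle|^2$. Because $\mathrm{F}$ is phase-invariant, the phase $e^{i\phi}$ drops out, and
\begin{equation}
\mathrm{F}(G(\tau))=\mathrm{F}(G)+2\tau\,\mathrm{Re}\Big[\overline{\langle G|\psi\rangle}\sum_{a\alpha}\overline{T_{a\alpha}}\langle G_{a\alpha}|\psi\rangle\Big]+O(\tau^2).
\end{equation}
The first variation vanishes along every curve exactly when this real-linear functional of $(T_{a\alpha})\in\C^{n(m-n)}$ is identically zero. Taking $T$ supported on a single index $(a,\alpha)$, first real and then imaginary, shows this holds iff $\overline{\langle G|\psi\rangle}\langle G_{a\alpha}|\psi\rangle=0$ for all $a,\alpha$. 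That is equivalent to $\langle G|\psi\rangle\langle G_{a\alpha}|\psi\rangle=0$. The converse direction is immediate from the same formula.

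For the final clause, if $\mathrm{F}(G)>0$ then $\langle G|\psi\rangle\neq0$ can be divided out, so stationarity is equivalent to $\langle G_{a\alpha}|\psi\rangle=0$ for all $a,\alpha$, i.e.\ $B_G=0$.

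The main obstacle is the first step: justifying that every curve in $\mathcal S$, viewed as a submanifold of projective space or of the unit sphere modulo phase, has the stated first-order form, and that every choice of $T$ is realized. I would handle this by explicitly taking $X$ with only the particle-hole and hole-particle blocks $T$ and $-T^\dagger$, which realizes each $T$. For an arbitrary smooth curve, I would lift the curve of occupied subspaces to a smooth curve of unitaries via Gram–Schmidt. I would also check the sign conventions, namely that $c_{f_a}^\dagger c_{g_\alpha}\ket G$ matches the action of $X_{f_a g_\alpha}c^\dagger_{f_a}c_{g_\alpha}$ on $\ket G$.
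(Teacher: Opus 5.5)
Your proposal is correct and follows the paper's argument. The paper differentiates a smooth orthonormal frame $g_\alpha(\tau)$ directly and splits $\dot g_\alpha$ into an anti-Hermitian occupied block $A$ (contributing the pure phase $\tr(A)\ket{G}$) and a free particle-hole block $T$ (contributing $\sum_{a\alpha}T_{a\alpha}\ket{G_{a\alpha}}$); this is exactly your block decomposition of the one-body generator acting on $\ket{G}$, followed by the same first-order expansion of $|\langle G(\tau)|\psi\rangle|^2$. The only wording to tighten is that a general curve is the many-body image of a smooth unitary curve $U(\tau)$ with $U(0)=1$, not $e^{\tau X}\ket{G}$ for a fixed $X$. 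This is harmless, as you anticipated, because only $X=U'(0)$ enters at first order.
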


\begin{proof}
We first establish some basic facts about the tangent space of the
manifold of Slater determinants near $\ket{G}$ (referring to e.g. Ref.~\cite{edelman1998geometry} for further details). The tangent space is the
space of derivatives, at $\ket{G}$, of smooth paths in the manifold that pass through $\ket{G}$. Thus we consider a path
\begin{equation}
  \ket{G(\tau)}
  =
  c_{g_1}^\dagger(\tau)\cdots c_{g_n}^\dagger(\tau)\ket{0},
\end{equation}
where $c_{g_\alpha}^\dagger(0)=c_{g_\alpha}^\dagger$, and where the
orbitals $g_1(\tau),\ldots,g_n(\tau)$ remain orthonormal for all
$\tau$.  In a fixed one-particle basis, each creation operator has the
form
\begin{equation}
  c_{g_\alpha}^\dagger(\tau)
  =
  \sum_i \mathcal G_{\alpha i}(\tau)c_i^\dagger .
\end{equation}
The derivative of $c_{g_\alpha}^\dagger(\tau)$ is therefore:
\begin{equation}
  \dot c_{g_\alpha}^\dagger
  :=
  \left.\frac{\dd}{\dd\tau}c_{g_\alpha}^\dagger(\tau)\right|_{\tau=0}
  =
  \sum_i \mathcal G'_{\alpha i}(0)c_i^\dagger .
\end{equation}
Since the occupied
orbitals $g_\alpha$ together with the unoccupied orbitals $f_a$ form an
orthonormal basis of the one-particle Hilbert space, we may write
\begin{equation}\label{eq:cAT}
  \dot c_{g_\alpha}^\dagger
  =
  \sum_{\beta=1}^n A_{\beta\alpha}c_{g_\beta}^\dagger
  +
  \sum_a T_{a\alpha}c_{f_a}^\dagger .
\end{equation}
Orthonormality for all $\tau$ imposes the following constraint. Since $ \langle g_\beta(\tau)|g_\alpha(\tau)\rangle
  =
  \delta_{\beta\alpha}$, we have $\langle \dot g_\beta|g_\alpha\rangle
  +
  \langle g_\beta|\dot g_\alpha\rangle
  =
  0$, and hence taking $A_{\beta\alpha}=\langle g_\beta|\dot g_\alpha\rangle$, this becomes
\begin{equation}
  A^\dagger=-A,
\end{equation}
i.e. $A$ is anti-Hermitian.
Next, differentiating the many-body Slater determinant gives us
\begin{equation}
  \left.\frac{\dd}{\dd\tau}\ket{G(\tau)}\right|_{\tau=0}
  =
  \sum_{\alpha=1}^n
  c_{g_1}^\dagger\cdots
  \dot c_{g_\alpha}^\dagger
  \cdots
  c_{g_n}^\dagger\ket{0}.
\end{equation}
Substituting $\dot c_{g_\alpha}^\dagger$ gives one term involving $A$ and one term involving $T$. The term involving $A$ has the form
\begin{equation}
  \sum_{\alpha,\beta}
  A_{\beta\alpha}
  c_{g_1}^\dagger\cdots
  c_{g_\beta}^\dagger
  \cdots
  c_{g_n}^\dagger\ket{0},
\end{equation}
where $c_{g_\beta}^\dagger$ replaces $c_{g_\alpha}^\dagger$.
If $\beta\ne\alpha$, this contains two identical creation
operators and hence vanishes by fermionic antisymmetry.  Thus only the
terms with $\beta=\alpha$ survive, giving
\begin{equation}
  \sum_{\alpha=1}^n A_{\alpha\alpha}\ket{G}
  =
  \tr(A)\ket{G}.
\end{equation}
Since $A$ is anti-Hermitian, $\tr(A)$ is purely imaginary. This term corresponds to infinitesimal phase rotation of $\ket{G}$.
The term involving $T$ gives
\begin{equation}
  \sum_{a,\alpha}
  T_{a\alpha}\,
  c_{g_1}^\dagger\cdots
  c_{f_a}^\dagger
  \cdots
  c_{g_n}^\dagger\ket{0}
\end{equation}
where $c_{f_a}^\dagger$ replaces $c_{g_\alpha}^\dagger$. This is a single particle-hole excitation
\begin{equation}
  c_{g_1}^\dagger\cdots
  c_{f_a}^\dagger
  \cdots
  c_{g_n}^\dagger\ket{0}
  =
  c_{f_a}^\dagger c_{g_\alpha}\ket{G}
  =
  \ket{G_{a\alpha}},
\end{equation}
up to the sign convention fixed by the ordering of the creation
operators. Therefore every tangent vector has the form
\begin{equation}
  \left.\frac{\dd}{\dd\tau}\ket{G(\tau)}\right|_{\tau=0}
  =
  i\theta\ket{G}
  +
  \sum_{a,\alpha}T_{a\alpha}\ket{G_{a\alpha}},
\end{equation}
and the $\ket{G_{ a\alpha}}$ furnish a basis for the tangent space. \par 

As a consistency check, we may compare the dimension of the tangent space to the dimension of the manifold of Slater determinants. A Slater determinant is specified by an $n$-dimensional subspace inside the $m$-dimensional single-particle Hilbert space. The former can be specified by (the first $n$ columns of) a $U(m)$ unitary, but rotations within the subspace or its complement do not change the state, and hence we must mod out by $U(n)$ and $U(m-n)$ rotations. Since the real dimension of $U(d)$ is $d^2$, the real dimension of the Slater manifold (i.e. Grassmannian) is $m^2-n^2-(m-n)^2 = 2n(m-n)$ and the complex dimension is $n(m-n)$. Likewise, $T_{a\alpha}$ is an $n\times (m-n)$ complex matrix, and hence the dimensions match. 
\par 
Now that we have identified the tangent  directions at $\ket{G}$, we may derive the first variation of $\mathrm{F}(G)$ by considering a path of the form
\begin{equation}
  \ket{G(\tau)}
  =
  \ket{G}
  +
  \tau\sum_{a,\alpha}T_{a\alpha}\ket{G_{a\alpha}}
  +
  O(\tau^2).
\end{equation}
Defining 
$
  B_{a\alpha}=\langle G_{a\alpha}|\psi\rangle,
$
we have
\begin{equation}
  \langle G(\tau)|\psi\rangle
  =
  \langle G|\psi\rangle
  +
  \tau\sum_{a,\alpha}\overline{T_{a\alpha}}\,B_{a\alpha}
  +
  O(\tau^2),
\end{equation}
and therefore
\begin{align}
  \mathrm F(G(\tau))
  &=
  |\langle G(\tau)|\psi\rangle|^2 \nonumber\\
  &=
  |\langle G|\psi\rangle|^2
  +
  2\tau\operatorname{Re}
  \left[
     \langle \psi|G\rangle
    \sum_{a,\alpha}\overline{T_{a\alpha}}\,B_{a\alpha}
  \right]
  +
  O(\tau^2).
\end{align}
Since each
$T_{a\alpha}$ is arbitrary, the first variation vanishes if and only if
\begin{equation}
    \langle \psi|G\rangle B_{a\alpha}=0
\end{equation}
for every $a,\alpha$. Hence, if $\mathrm F(G)>0$, then $\langle G|\psi\rangle\ne0$, and this reduces to
\begin{equation}
  \langle G_{a\alpha}|\psi\rangle=0
  \qquad
  \forall a,\alpha
\end{equation}
which is equivalent to $B_G^2=
  \sum_{a,\alpha}|\langle G_{a\alpha}|\psi\rangle|^2
  =
  0$, as was to be shown.
\end{proof}

\begin{lemma}[Bounding higher particle-hole excitations]
\label{lem:tail}
Let $\ket{G}$ and $\ket{H}$ be normalized $n$-particle Slater
determinants. Let
\begin{equation}
  |G\rangle=c_{g_1}^\dagger\cdots c_{g_n}^\dagger|0\rangle
\end{equation}
have occupied orbitals $g_1,\ldots, g_n$ and
unoccupied orbitals $\{f_a\}$ with corresponding creation operators $c_{f_a}^\dagger$, for $a=1,\ldots, m-n$. 
Decompose $\ket{H}$ into particle-hole sectors relative to $\ket{G}$:
\begin{equation}
  \ket{H}=\ket{h_0}+\ket{h_1}+\ket{h_{\geq 2}},
\end{equation}
where $\ket{h_0}$ is proportional to $\ket{G}$, $\ket{h_1}$ lies in the
span of the single particle-hole excitations $c_{f_a}^\dagger c_{g_\alpha}\ket{G}$ and $\ket{h_{\geq 2}}$ contains all two-and-higher particle-hole sectors.
Write
\begin{equation}
  s=\norm{\ket{h_0}}=|\langle G|H\rangle|,
  \qquad
  q=\norm{\ket{h_{\geq 2}}}.
\end{equation}
Then
\begin{equation}
  q\leq \sqrt{2}(1-s).
\end{equation}
\end{lemma}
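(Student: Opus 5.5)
The approach is to reduce the statement to a scalar inequality by putting $\ket{G}$ and $\ket{H}$ in a simultaneous canonical form, then prove that inequality by induction on $n$.

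\textbf{Step 1 (canonical form).} Let $\mathcal{L}_G$ and $\mathcal{L}_H$ be the occupied one-particle subspaces of $\ket{G}$ and $\ket{H}$. Take the singular value decomposition of the $n\times n$ overlap matrix $\langle g_\alpha|h_\beta\rangle$, as in the proof of Lemma~\ref{lem:active}. This lets us choose an orthonormal basis $g_1,\ldots,g_n$ of $\mathcal{L}_G$, orthonormal unoccupied orbitals $f_1,\ldots,f_n$, and principal angles $\theta_i\in[0,\pi/2]$ such that, up to a global phase,
\begin{equation}
\ket{H}=\prod_{i=1}^n\bigl(\cos\theta_i\,c_{g_i}^\dagger+\sin\theta_i\,c_{f_i}^\dagger\bigr)\ket{0}.
\end{equation}
Changing basis within the occupied orbitals of $\ket{G}$, and separately within its unoccupied orbitals, maps each particle-hole sector relative to $\ket{G}$ to itself. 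Hence $s$, $p:=\|\ket{h_1}\|^2$ and $q$ are unchanged by this choice.

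\textbf{Step 2 (sector weights).} Expand the product. The terms in which exactly $k$ factors contribute $c_{f_i}^\dagger$ are mutually orthogonal Slater determinants, and each lies in the $k$-particle-hole sector. Write $c_i:=\cos\theta_i\ge 0$ and $y_i:=\sin^2\theta_i$. Then
\begin{equation}
s=\prod_i c_i,\qquad q^2=\Pr\bigl[\text{at least two successes}\bigr],
\end{equation}
where the probability refers to independent Bernoulli trials with success probabilities $y_1,\ldots,y_n$. In particular $1-s^2=\Pr[\text{at least one success}]$.

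\textbf{Step 3 (induction on $n$).} We prove $q_n^2\le 2(1-s_n)^2$ by induction on $n$.

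For the base case $n=1$ we have $q=0$, so the bound holds.

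For the inductive step, append one more factor with $c=c_{n+1}$ and $y=1-c^2$. Conditioning on the outcome of the new trial gives
\begin{equation}
q_{n+1}^2=c^2q_n^2+(1-c^2)(1-s_n^2),\qquad s_{n+1}=c\,s_n.
\end{equation}
By the inductive hypothesis $q_n^2\le 2(1-s_n)^2$, it therefore suffices to show
\begin{equation}
2c^2(1-s)^2+(1-c^2)(1-s^2)\le 2(1-cs)^2
\end{equation}
for all $s,c\in[0,1]$.

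\textbf{Step 4 (the elementary inequality).} This is the only computational step. Expanding, the difference (right side minus left side) simplifies to
\begin{equation}
(1-c)\bigl[(1+c)(1+s^2)-4cs\bigr].
\end{equation}
To see it is nonnegative, use $1+s^2\ge 2s$ and then $2s(1+c)\ge 4cs$, which holds because $c\le 1$. Both factors are therefore nonnegative, which closes the induction. Taking square roots gives $q\le\sqrt{2}\,(1-s)$.
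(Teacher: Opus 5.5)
Your proof is correct. Step 1 coincides with the paper's canonical form, but Steps 2--4 take a genuinely different route.

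\textbf{The paper's route.} It stays at fixed $n$ and computes $\|\ket{h_1}\|^2 = s^2\sum_\alpha \tan^2\theta_\alpha$. It then combines $r\ge\log(1+r)$ with $\prod_\alpha(1+\tan^2\theta_\alpha)=s^{-2}$ to get the $n$-independent envelope $q^2\le 1-s^2(1-2\log s)$. Finally it checks $1-s^2(1-2\log s)\le 2(1-s)^2$ by a one-variable monotonicity argument, which needs $s>0$, so the case $s=0$ is handled separately.

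\textbf{Your route.} You read the sector weights as the distribution of the number of successes in independent Bernoulli trials. You induct on $n$ through the one-step recursion $q_{n+1}^2=c^2q_n^2+(1-c^2)(1-s_n^2)$. This reduces everything to a two-variable polynomial inequality that factors as $(1-c)\bigl[(1+c)(1+s^2)-4cs\bigr]\ge 0$. I checked the recursion, the expansion, and the factorization; all are right.

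\textbf{Comparison.} Your argument is more elementary. It uses no logarithms or calculus and needs no special case at $s=0$, since an angle $\theta_i=\pi/2$ is just a trial that always succeeds. The probabilistic bookkeeping also makes the orthogonality and the weights of the sectors transparent.

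What the paper's route buys is the intermediate bound $1-s^2(1-2\log s)$. This is exactly the supremum of $q^2$ at fixed $s$ over all $n$ and all angles, approached by many equal small angles. It therefore shows directly that the constant $\sqrt{2}$ cannot be improved as $s\to1$. It is also the same small-angle regime that the sharpness construction of Theorem~\ref{thm:sharp} exploits, where $q_t/(1-s_t)\to\sqrt{2(n-1)/n}$. Your induction only shows that the bound $2(1-s)^2$ is preserved when a factor is appended; it does not identify this envelope.
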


\begin{figure}
\centering
\definecolor{greenacc}{HTML}{1F7A47}
\definecolor{graytone}{HTML}{6F6E68}
\definecolor{cliquecol}{HTML}{23252F}
\begin{tikzpicture}[
  font=\footnotesize,
  lvl/.style   ={graytone, line width=1.3pt, line cap=round},
  fermi/.style ={graytone, line width=0.6pt, dash pattern=on 2pt off 2pt, opacity=0.55},
  occ/.style   ={fill=graytone, draw=none},
  hole/.style  ={fill=white, draw=graytone, line width=1pt, dash pattern=on 1.6pt off 1.1pt},
  part/.style  ={fill=greenacc, draw=none},
  exc/.style   ={-{Stealth[length=2mm,inset=0.4mm]}, greenacc, line width=1.5pt},
  lbl/.style   ={cliquecol},
]

\fill[occ]  (1.30,5.30) circle (2.9pt);
\node[lbl,anchor=west] at (1.39,5.30) {occupied};
\draw[hole] (2.94,5.30) circle (3.0pt);
\node[lbl,anchor=west] at (3.03,5.30) {hole};
\fill[part] (3.94,5.30) circle (2.9pt);
\node[lbl,anchor=west] at (4.03,5.30) {particle};

\node[lbl,font=\normalsize] at (0.42,3.44) {$\ket{H}=$};
\node[lbl,font=\normalsize] at (2.30,3.44) {$+$};
\node[lbl,font=\normalsize] at (4.00,3.44) {$+$};

\node[lbl,anchor=east] at (1.00,4.14) {$f_a$};
\node[lbl,anchor=east] at (1.00,2.74) {$g_\alpha$};

\foreach \y in {4.50,4.14,3.78,3.10,2.74,2.38}
  {\draw[lvl] (1.21,\y)--(1.69,\y);}
\draw[fermi] (1.15,3.44)--(1.75,3.44);
\foreach \y in {3.10,2.74,2.38}{\fill[occ] (1.45,\y) circle (2.9pt);}
\node[lbl,anchor=north] at (1.45,2.04) {$0$ p-h};
\node[lbl,anchor=north] at (1.45,1.72) {$\ket{h_0}$};

\foreach \y in {4.50,4.14,3.78,3.10,2.74,2.38}
  {\draw[lvl] (2.91,\y)--(3.39,\y);}
\draw[fermi] (2.85,3.44)--(3.45,3.44);
\draw[exc] (3.15,3.10)--(3.15,3.78);   
\foreach \y in {2.74,2.38}{\fill[occ] (3.15,\y) circle (2.9pt);}
\draw[hole] (3.15,3.10) circle (3.0pt);
\fill[part] (3.15,3.78) circle (2.9pt);
\node[lbl,anchor=north] at (3.15,2.04) {$1$ p-h};
\node[lbl,anchor=north] at (3.15,1.72) {$\ket{h_1}$};

\foreach \y in {4.50,4.14,3.78,3.10,2.74,2.38}
  {\draw[lvl] (4.61,\y)--(5.09,\y);}
\draw[fermi] (4.55,3.44)--(5.15,3.44);
\draw[exc] (4.73,3.10)--(4.73,3.78);   
\draw[exc] (4.97,2.74)--(4.97,4.14);
\fill[occ] (4.85,2.38) circle (2.9pt);
\draw[hole] (4.73,3.10) circle (3.0pt);
\draw[hole] (4.97,2.74) circle (3.0pt);
\fill[part] (4.73,3.78) circle (2.9pt);
\fill[part] (4.97,4.14) circle (2.9pt);
\node[lbl,anchor=north] at (4.85,2.04) {$\ge 2$ p-h};
\node[lbl,anchor=north] at (4.85,1.72) {$\ket{h_{\ge2}}$};
\end{tikzpicture}
\caption{\textbf{Particle-hole sectors.} Relative to a reference Slater
determinant $\ket{G}$, any Slater $\ket{H}$ decomposes orthogonally into
particle-hole sectors $\ket{H}=\ket{h_0}+\ket{h_1}+\ket{h_{\ge2}}$ graded by the
number of particle-hole (p-h) excitations. The
zero sector is proportional to $\ket{G}$, the 1 p-h
sector spans the tangent space of the Slater manifold at $\ket{G}$, and the
remaining term $\ket{h_{\ge2}}$ comprises 2 or more p-h excitations, with norm controlled by
Lemma~\ref{lem:tail}.}
\label{fig:phsectors}
\end{figure}
\begin{proof}
This proof borrows some mathematical steps from Ref.~\cite{bakshi2025learning}, though we present it in a self-contained manner. If $s=0$, then $q\leq 1\leq \sqrt{2}$, and the result is immediate, so we assume $s>0$ in the following. We write
\begin{equation}
\ket{G}=c_{g_1}^\dagger\cdots c_{g_n}^\dagger\ket{0},
  \qquad  \ket{H}=c_{h_1}^\dagger\cdots c_{h_n}^\dagger\ket{0},
\end{equation}
where $g_1,\ldots,g_n$ and $h_1,\ldots,h_n$ are orthonormal occupied
orbitals for $\ket{G}$ and $\ket{H}$, respectively, and define the overlap matrix
\begin{equation}
  M^{(G|H)}_{\alpha\beta}=\langle g_\alpha|h_\beta\rangle.
\end{equation}
Taking a singular value decomposition of $M^{(G|H)}$ and rotating the occupied
bases of $\ket{G}$ and $\ket{H}$ accordingly, we may assume that
\begin{equation}
  M^{(G|H)}_{\alpha\beta}
=\sigma_\alpha\delta_{\alpha\beta}
\end{equation}
with singular values $
  0\leq \sigma_\alpha\leq 1$. Since
\begin{equation}
  s=|\langle G|H\rangle|=|\det M^{(G|H)}|=\prod_{\alpha=1}^n\sigma_\alpha
\end{equation}
and $s>0$, all $\sigma_\alpha$ are nonzero. We write
\begin{equation}
\sigma_\alpha=\cos\theta_\alpha,
  \qquad
  0\leq \theta_\alpha < \frac{\pi}{2}.
\end{equation}
For each $\alpha$ with $\theta_\alpha>0$, we may define an unoccupied
orbital of $\ket{G}$ by
\begin{equation}
  f_\alpha
  =
  \frac{h_\alpha-\cos\theta_\alpha\,g_\alpha}{\sin\theta_\alpha}.
\end{equation}
If $\sin\theta_\alpha=0$, the choice of $f_\alpha$ will be irrelevant. The
orbitals $f_\alpha$ are orthonormal and orthogonal to the occupied
orbitals of $\ket{G}$. Thus the occupied orbitals of $\ket{H}$ can be
written as $  h_\alpha
  =
  \cos\theta_\alpha\,g_\alpha
  +
  \sin\theta_\alpha\,f_\alpha.$ Equivalently,
\begin{equation}
  \ket{H}
  =
  \prod_{\alpha=1}^n
  \left(
    \cos\theta_\alpha\,c_{g_\alpha}^\dagger
    +
    \sin\theta_\alpha\,c_{f_\alpha}^\dagger
  \right)\ket{0}.
\end{equation}

We now expand this product to obtain a particle-hole decomposition of $\ket{H}$ relative to $\ket{G}$ (as in Fig.~\ref{fig:phsectors}). First, the term with no $f$ orbitals is precisely $\ket{h_0}$ and satisfies
\begin{equation}
  \norm{\ket{h_0}}^2
  =
  s^2
  =
  \prod_{\alpha=1}^n \cos^2\theta_\alpha.
\end{equation}
Next, the terms with exactly one $f$ orbital add up to give $\ket{h_1}$, and hence we have
\begin{equation}
  \norm{\ket{h_1}}^2
  =
  \sum_{\alpha=1}^n
  \sin^2\theta_\alpha
  \prod_{\beta\ne\alpha}\cos^2\theta_\beta.
\end{equation}
It is useful to set $r_\alpha=\tan^2\theta_\alpha.$
Then
\begin{equation}
  \norm{\ket{h_1}}^2
  =
  s^2\sum_{\alpha=1}^n r_\alpha.
\end{equation}
Finally, all remaining terms add up to give $\ket{h_{\geq 2}}$. Since $\ket{H}$ is normalized, we have
\begin{align}
  q^2
  &=
  \norm{\ket{h_{\geq2}}}^2 =
  1-\norm{\ket{h_0}}^2-\norm{\ket{h_1}}^2 \nonumber\\
  &=
  1-s^2\left(1+\sum_{\alpha=1}^n r_\alpha\right).
\end{align}

Next, we lower bound $\sum_\alpha r_\alpha$ in terms of $s$. For
$r\geq0$, we have $r\geq \log(1+r)$. Hence
\begin{equation}
  \sum_{\alpha=1}^n r_\alpha
  \geq
  \sum_{\alpha=1}^n \log(1+r_\alpha).
\end{equation}
Using $ s^2
  =
  \prod_{\alpha=1}^n \cos^2\theta_\alpha
  =
  \prod_{\alpha=1}^n (1+r_\alpha)^{-1},$ 
we get
\begin{equation}
  \sum_{\alpha=1}^n \log(1+r_\alpha)
  =
  -\log s^2
  =
  -2\log s
\end{equation}
and therefore $  q^2 \leq 1-s^2(1-2\log s).$ It remains to show that the right-hand side is bounded by
$2(1-s)^2$. Define
\begin{equation}
  g(s)
  =
  2(1-s)^2-(1-s^2(1-2\log s)).
\end{equation}
 We claim $g(s)\geq 0$ on the interval $0<s\leq 1$. For an elementary proof, note that $g(1)=0$ and $g'(s)=4(s-1-s\log s)\leq 0$ on this interval. Since $g$ decreases as $s$ increases to
$1$ and $g(1)=0$, we have $ g(s)\geq0$. Hence $1-s^2(1-2\log s)\leq 2(1-s)^2,$
which combines with the previous bound to give us 
\begin{equation}
  q^2\leq 2(1-s)^2.
\end{equation}
Taking square roots proves the claim. 
\end{proof}
\begin{lemma}[Agnostic quantum threshold search]
    \label{lem:agnosticQTS}
Let $A_1,\ldots, A_M$ be known observables with $0\preceq A_i \preceq 1$ and let $F_i \coloneqq \Tr(\rho A_i)$ and $F_\star = \max_i F_i$. Given quantum copies of $\rho$, there is a procedure which outputs an index $i_\star$ such that $F_{i_\star} \geq F_\star -\varepsilon$ with probability at least $1-\delta$, using 
\begin{equation}
    n_{\mathrm{AQTS}} = O\left(\frac{\mu}{\varepsilon^2}\left(\log^2M + \log\frac{\mu}{\delta}\right) \log\frac{\mu}{\delta}\right)
\end{equation}
copies of $\rho$, where $\mu \coloneqq \log \frac{1}{\varepsilon}$.
\end{lemma}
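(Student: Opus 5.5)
The plan is to reduce to the Bădescu–O'Donnell quantum threshold search (QTS) primitive, Eq.~\eqref{eq:nQTS}, by a binary search over a common threshold $\theta$. We set $\theta_1=\cdots=\theta_M=\theta$ and run QTS with accuracy $\varepsilon' = \varepsilon/4$ and failure probability $\delta' = \delta/K$, where $K = O(\log(1/\varepsilon)) = O(\mu)$ is the number of binary-search rounds. Each call consumes fresh copies of $\rho$. Each call returns one of two outcomes:
\begin{itemize}
\item an index $j$ with $F_j > \theta - \varepsilon'$, in which case we record $j$ as a candidate and learn that $F_\star > \theta-\varepsilon'$;
\item a certificate that $F_i<\theta$ for all $i$, so that $F_\star<\theta$.
\end{itemize}

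The search maintains an interval $[L,R]$ with the invariant $L - \varepsilon' \le F_\star \le R$. We start with $L=0$, $R=1$ and the candidate $i=1$, which trivially satisfies $F_1\ge 0-\varepsilon'$. In each round we query $\theta = (L+R)/2$.
\begin{itemize}
\item If QTS returns an index $j$, we set $L\leftarrow \theta$ and store $j$ as the current candidate. This candidate then satisfies $F_j > L-\varepsilon'$.
\item If QTS certifies, we set $R\leftarrow\theta$.
\end{itemize}
In both cases the invariant is preserved, provided no call fails. We also maintain that the stored candidate satisfies $F_{\mathrm{cand}} \ge L-\varepsilon'$. This holds when the candidate was produced at the current $L$, and $L$ only changes when a new candidate is produced. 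After $K=\lceil \log_2(4/\varepsilon)\rceil$ rounds we have $R-L\le \varepsilon/4$. We then output the candidate, which satisfies
\[
F_{i_\star} \ge L-\varepsilon' \ge R - \varepsilon/4 - \varepsilon/4 \ge F_\star - \varepsilon/2.
\]
This is within the $\varepsilon$ tolerance.

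For correctness, a union bound over the $K$ calls gives total failure probability at most $K\delta'=\delta$. For the sample cost, each call uses $O\bigl(\varepsilon^{-2}(\log^2 M + \log(K/\delta))\log(K/\delta)\bigr)$ copies. Summing over $K=O(\mu)$ calls, and using $\log(K/\delta) = O(\log(\mu/\delta))$, gives exactly $n_{\mathrm{AQTS}}$ in Eq.~\eqref{eq:naqts}.

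The main subtlety is the bookkeeping of the one-sided guarantee. A returned index only certifies $F_j>\theta-\varepsilon'$, not $F_j\ge F_\star-\varepsilon'$. We therefore have to make sure the final candidate was produced at the final lower endpoint $L$ and is compared against an upper endpoint $R$ that was certified (or is trivially $1$). This is why the candidate is updated only when $L$ moves. The degenerate case where QTS never returns an index is handled by the initial candidate: in that case $L=0$ throughout, and $F_\star \le R\le \varepsilon/4$, so any index works.
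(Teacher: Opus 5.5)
Your proposal is correct and follows essentially the same argument as the paper. Both run a binary search over a common threshold $\theta$, call QTS at accuracy $\Theta(\varepsilon)$ with failure probability $\delta/O(\mu)$ per round, and finish with a union bound.

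The only difference is bookkeeping. You set $L\leftarrow\theta$ and carry the $\varepsilon'$ slack in the invariant, so the interval halves exactly each round. The paper instead sets $L\leftarrow\theta-\xi$, and only when this increases $L$. That keeps $L$ a genuine lower bound on $F_{i_\star}$, at the cost of the gap obeying $U-L\mapsto (U-L)/2+\xi$.
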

\begin{proof}
This result is a simple extension of the quantum threshold search (QTS) algorithm of Ref.~\cite{buadescu2021improved} where we run a binary search over the ``threshold variable" $\theta$. We refer to Section~\ref{sec:quantum} for a brief review of the quantum threshold search algorithm. \par 

Define $\xi = \varepsilon/12$ and $R = \lceil \log_2(12/\varepsilon)\rceil+1$. Let $\delta/R$ be the tolerated failure probability of each call to the QTS algorithm. \par 
We will initialize variables $L = 0$, $U=1$ and $i_{\star}=1$ and update them after each step. At each step, $L,U \in [0,1]$ are lower and upper bounds, respectively, on $F_{i_\star}$ and $F_\star$. \par 

At each step, we set $\theta = (L+U)/2$ and run QTS on $A_1,\ldots, A_M$ with a common threshold $\theta$ and accuracy $\xi$. If QTS returns an index $j$, then $F_j > \theta-\xi$. If moreover $\theta-\xi>L$, we update
\begin{equation}
    i_{\star} \leftarrow j, L \leftarrow \theta-\xi
\end{equation}
and otherwise leave $i_{\star}$ and $L$ unchanged. If QTS returns ``none", we update
\begin{equation}
    U \leftarrow \theta. 
\end{equation}
The size of the interval $[L,U]$ shrinks as $ U-L \mapsto \frac{U-L}{2}+\xi$. After $R$ steps, $U-L \leq 2\xi + 2^{-R} \leq \varepsilon/3$, and in the event of no failures we have
\begin{equation}
    F_{i_\star} \geq L \geq U-\varepsilon/3 \geq F_{\star}-\varepsilon/3,
\end{equation}
which is stronger than the required bound. From a union bound over the failure probabilities it follows that the total failure probability is at most $\delta$. The stated value for $n_{\mathrm{AQTS}}$ follows immediately from $n_{\mathrm{QTS}}$ (c.f. Eq.~\eqref{eq:nQTS}) and our choice of $R$. 
\end{proof}

\section{Details of numerics}
\label{app:numerics}

In this section, we collect the details of the numerical plots in the main text. The general aim of the numerics shown in this paper is to establish that finding the closest Slater using our classical algorithm (classical) can still be a useful and meaningful task, at least for a small number of particles. Our algorithm also provides a new benchmark against which we can definitively assess the performance of heuristic closest Slater algorithms such as stochastic gradient ascent. \par 

In Fig.~\ref{fig:1}c, we study the Fermi-Hubbard model~\cite{arovas2022hubbard}
\begin{equation}
    H \;=\; -t\sum_{\langle ij\rangle,\sigma} c^\dagger_{i\sigma}c_{j\sigma} \;+\; U\sum_i n_{i\uparrow}n_{i\downarrow}
\end{equation}
on a $4\times 4$ square lattice with cylindrical boundary conditions (open along $\hat x$, periodic along $\hat y$) at a fixed electron filling $n_\uparrow = n_\downarrow = 2$ and we sweep over the coupling $|U|/t\in [0,5]$ for both signs of $U$. We produce the ground state by exact diagonalization. We compute and diagonalize the 1-RDM $\gamma_{(i,\sigma),(i',\sigma')} = \langle c^\dagger_{i'\sigma'}c_{i\sigma}\rangle$ to find the active orbitals with a fidelity tolerance $\varepsilon=0.02$. We parameterize the manifold of Slaters over the active space as follows. Writing $n=n_\uparrow+n_\downarrow$ and $r$ for the dimension of the active space, we fix an orthonormal frame as $W_0=\begin{pmatrix}I_n \\ 0\end{pmatrix}\in\mathbb{C}^{r\times n}$. Other $n$-planes are specified by an $(r-n)\times n$ matrix $Z$, by setting $\Omega(Z)=\begin{pmatrix}0&-Z^\dagger\\ Z&0\end{pmatrix}$, an antihermitian matrix, so $\exp\Omega(Z)$ is unitary, and then setting
\begin{equation}
W(Z)=\exp[\Omega(Z)]W_0,
\end{equation}
whose $n$ columns define another $r\times n$ orthonormal frame. We can restrict each entry of $Z$ to $[-\pi/2,\pi/2]\oplus i[-\pi/2,\pi/2]$ and cover the entire manifold of Slaters~\cite{edelman1998geometry}. This gives a compact parameterization with domain $[-\pi/2,\pi/2]^{2(r-n)\times n}$, over which we form a covering net and perform an explicit search to evaluate $\OPT$. \par

In Fig.~\ref{fig:numerics}(a) we repeat this study with the same ground states represented by a neural quantum state (NQS), a multi-determinant Slater-Jastrow ans\"atz with four determinants and a neural Jastrow factor, optimized by variational Monte Carlo with stochastic
reconfiguration~\cite{vicentini2022netket}. From the trained NQS we run the search
using only samples $x\sim|\psi(x)|^2$. We estimate the 1-RDM, isolate the active space at
tolerance $\varepsilon=0.02$ as above, and evaluate the fidelity of each trial
Slater $\ket{S}$ with the estimator
$\widehat{F} = |\overline{r}|^2 / \overline{|r|^2}$, where
$r(x)=\langle x|S\rangle/\psi(x)$ and the averages are taken over a fixed sample
set. Dashed lines are the exact ED from Fig.~\ref{fig:1}c and
the red band is the $\OPT$ for the NQS, with width representing Monte-Carlo standard
deviation over independent sample sets. This confirms the viability of computing $\OPT$ using the closest Slater algorithm for modest system sizes using sample access to a neural quantum state representation. We note that the small upward drift at strong
attractive coupling reflects the imperfect NQS approximation of the ground state.

\par

In Fig.~\ref{fig:numerics}(b) we use $\OPT$ as the ground truth against
which to judge heuristic optimization. For Fermi--Hubbard ground states at fixed filling
$n_\uparrow=n_\downarrow=2$ and increasing system size $m$, we run randomly initialized local gradient ascent on the Slater manifold
and record the fraction of starts whose stationary fidelity reaches $\OPT$
(within $5\times10^{-3}$). We fix $U/t=2$ and compare ascent step budgets of $50$, $300$,
and $2000$ iterations. For every budget the success rate falls sharply with $m$ (to a few
percent or less by $m=64$) and a larger budget shifts the curves upward without changing
this trend. We note that a
random Slater has fidelity of order one over the many body Hilbert space dimension with the target state, placing it on an exponentially
flat region of the landscape where the gradient nearly vanishes and the ascent stalls. The search, by contrast, attains $\OPT$ at every size. Therefore, randomly initialized local optimization may not be a reliable route to the
closest Slater problem at scale, motivating our protocol.

\begin{figure}
    \centering    \includegraphics[width=\linewidth]{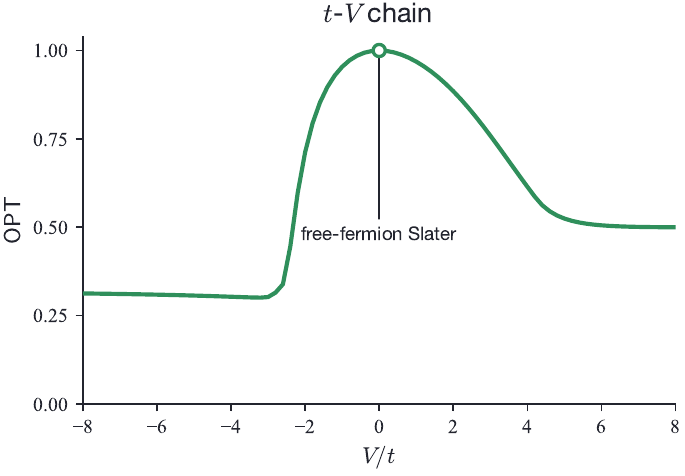}
\caption{\textbf{Closest Slater, $t$-$V$ chain.} Closest-Slater
fidelity $\OPT$ of the ground state of the half-filled spinless $t$-$V$ chain
[Eq.~(\ref{eq:tV}), $m=10$, no disorder, periodic boundary conditions] versus the nearest-neighbor interaction $V/t$.
At $V=0$ the free-fermion ground state is a single Slater ($\OPT=1$); interactions
drive it away from any Slater in both directions---toward $\OPT\approx 1/2$ for strong
repulsion, where the state approaches the charge-density-wave two-crystal cat state and to $\OPT\approx 0.3$ in the
attractive regime.}
\label{fig:tvopt}
\end{figure}

\par
In Fig.~\ref{fig:numerics}(c) we probe the optimization landscape of the
closest Slater problem for several natural ensembles of pure and mixed states,
confirming the $\mathrm{F}=2/3$ threshold of Sec.~\ref{sec:highfid}. We work in
the $n$-particle sector over $m=2n$ fermionic modes, whose occupation basis
$\{\ket{x}\}$ of Slater determinants has dimension $D=\binom{m}{n}$. For a state
$\rho$ the fidelity of a Slater is $\mathrm{F}(S)=\langle S|\rho|S\rangle$, with
$\OPT=\max_{\ket{S}}\mathrm{F}(S)$, and a \textit{spurious stationary point} is a
stationary Slater with $\mathrm{F}(S)<\OPT$. For each state we compute $\OPT$, run many random-initialized local ascents to convergence, and
collect the fidelities of the spurious stationary points they reach. Each state contributes a horizontal tick at $\OPT$ and a point at
every spurious level found. We sample four ensembles at $n=5$, $m=10$. In the
\textit{Haar-random} ensemble the state is pure with i.i.d.\ Gaussian coefficients
$\psi_x$. In the \textit{weakly correlated} ensemble it is a determinant ``cat'': a
superposition of three to six Slater determinants at pairwise Hamming distance $\geq 2$
with comparable random weights. The \textit{random mixed} ensemble is a rank-two mixture
$\rho=p\ket{v_1}\!\bra{v_1}+(1-p)\ket{v_2}\!\bra{v_2}$ of two independent Haar-random pure
states, $p\sim\mathrm{U}[0.35,0.65]$. The
\textit{physical} ensemble consists of ground states of the spinless $t$--$V$ chain
\begin{equation}\label{eq:tV}
H= -t\sum_{\langle ii'\rangle}\! c_i^\dagger c_{i'}
         + V\sum_{i=1}^{m} n_i\,n_{i+1}
         + \sum_{i=1}^{m}\varepsilon_i\,n_i ,
\end{equation}
spinless fermions on a periodic ring of $m=10$ sites ($c_{m+1}\equiv c_1$,
$n_i=c_i^\dagger c_i$) at half filling $n=5$, with hopping $t=1$, nearest-neighbor
interaction $V$, and on-site potentials $\varepsilon_i$. We refer to Fig.~\ref{fig:tvopt} for a plot of $\OPT$ in this model without disorder and with varying $V/t$. For the figure in the main text, we draw $V\sim\mathrm{U}[-8,8]$
and, in $60\%$ of instances, weak disorder
$\varepsilon_i\sim\mathcal{N}(0,w^2)$ with $w\sim\mathrm{U}[0.1,0.9]$ (otherwise
$\varepsilon_i=0$). A fifth, \textit{adversarial} column contains states from the
sharpness construction of Sec.~\ref{sec:highfid}, whose planted spurious maximum sits at
fidelity $\alpha^2$ approaching $2/3$; extending to $n=16$ reaches a spurious fidelity of
$0.642$. Across every ensemble all spurious stationary points lie strictly
below $2/3$, consistent with Theorem~\ref{thm:certificate} and its mixed-state extension.
For the natural ensembles they concentrate well below the threshold---at most $\approx
0.13$, $0.19$, $0.39$, and $0.49$ for the mixed, Haar, weakly correlated, and physical
ensembles respectively---so the bound is conservative there. Physical ground states rarely
admit any spurious point; the physical column additionally includes verified
charge-density-wave ``cat'' states which likewise remain
below $2/3$. Only the adversarial states approach the threshold, confirming that
$2/3$ is optimal in the worst case.

\end{document}